\newcommand{\blind}{0}
\def\drt{{\scriptscriptstyle\mathrm{fps}}}
\def\dre{{\scriptscriptstyle\mathrm{eps}}}
\def\minone{{\scriptscriptstyle{X}}}  
\def\mintwoone{{\scriptscriptstyle{W,X}}}  
\def\mintwotwo{{\scriptscriptstyle{W,W}}}  
\def\mintwo{{\scriptscriptstyle{W}}}
\def\minzero{{\scriptscriptstyle\mathrm{0}}}
\def\E{\mathbb{E}}
\def\dr{\scriptscriptstyle{dr}}
\def\T{{\mathrm{\scriptscriptstyle T} }}
\def\dr{{\mathrm{\scriptscriptstyle{}} }}
\def\ls{{\mathrm{\scriptscriptstyle{ls}} }}
\newcommand*{\indep}{%
	\mathbin{%
		\mathpalette{\@indep}{}%
	}%
}
\newcommand*{\nindep}{%
	\mathbin{% % The final symbol is a binary math operator
		\mathpalette{\@indep}{\not}% \mathpalette helps for the adaptation
		% of the symbol to the different math styles.
	}%
}
\newcommand*{\@indep}[2]{%
	% #1: math style
	% #2: empty or \not
	\sbox0{$#1\perp\m@th$}% box 0 contains \perp symbol
	\sbox2{$#1=$}% box 2 for the height of =
	\sbox4{$#1\vcenter{}$}% box 4 for the height of the math axis
	\rlap{\copy0}% first \perp
	\dimen@=\dimexpr\ht2-\ht4-.2pt\relax
	% The equals symbol is centered around the math axis.
	% The following equations are used to calculate the
	% right shift of the second \perp:
	% [1] ht(equals) - ht(math_axis) = line_width + 0.5 gap
	% [2] right_shift(second_perp) = line_width + gap
	% The line width is approximated by the default line width of 0.4pt
	\kern\dimen@
	{#2}%
	% {\not} in case of \nindep;
	% the braces convert the relational symbol \not to an ordinary
	% math object without additional horizontal spacing.
	\kern\dimen@
	\copy0 % second \perp
}
\newtheorem{example}{Example}
\newtheorem{theorem}{Theorem}
\newtheorem{lemma}{Lemma}
\newtheorem{proof}{Proof}
\newtheorem{assumption}{Assumption}
\newtheorem{fact}{Fact}
\def\E{\mathbb{E}}
\def\pr{\textnormal{pr}}
\def\pr{\textnormal{pr}}
\def\T{{ \mathrm{\scriptscriptstyle T} }}
\newcommand\cmtt{\fontfamily{cmtt}\selectfont}
\newcommand*{\addFileDependency}[1]{
  \typeout{(#1)}
  \@addtofilelist{#1}
  \IfFileExists{#1}{}{\typeout{No file #1.}}
}
\begin{document}

	\def\spacingset#1{\renewcommand{\baselinestretch}%
		{#1}\small\normalsize} \spacingset{1}

	\newcounter{savecntr}% Save footnote counter
\newcounter{restorecntr}
	%%%%%%%%%%%%%%%%%%%%%%%%%%%%%%%%%%%%%%%%%%%%%%%%%%%%%%%%%%%%%%%%%%%%%%%%%%%%%%
		\if0\blind
	{ 
%	\title{To Adjust or not to Adjust? Estimating the Average Treatment Effect   with Additional Predictive Covariates}

 	%\title{\bf Using improved doubly robust estimator for covariate adjustment}
  		% \title{\bf To Adjust or not to Adjust? Estimating the Average Treatment Effect in Observational Studies with Predictive Covariates}

    	\title{\bf 
Efficiency-improved doubly robust estimation with non-confounding predictive covariates}
        
 			\author{ Shanshan Luo\textsuperscript{1}, Mengchen Shi\textsuperscript{1},  Wei Li\textsuperscript{2}\thanks{Corresponding author}, ~Xueli Wang\textsuperscript{1},  and      Zhi Geng\textsuperscript{1}     \\\\
			\textsuperscript{1} School of Mathematics and Statistics, \\Beijing Technology and Business University \\\\
  	\textsuperscript{2}  Center for Applied Statistics and School of Statistics, \\ Renmin University of China  }
    
		\date{}
		
		\maketitle
	} \fi
	
	\if1\blind
	{
		\bigskip
		\bigskip
		\bigskip
		\begin{center}
			{\LARGE\bf Title}
		\end{center}
		\medskip
	} \fi
		\medskip

\begin{abstract}

In observational studies, covariates with substantial missing data are often omitted, despite their strong predictive capabilities. These excluded covariates are generally believed not to simultaneously affect both treatment and outcome, indicating that they are not genuine confounders and do not impact the identification of the average treatment effect (ATE). 
In this paper, we introduce an alternative doubly robust (DR) estimator that fully leverages  non-confounding  predictive covariates to enhance efficiency,  while also allowing missing values in such covariates. 
 Beyond the double robustness property, our proposed estimator is designed to be more efficient than the standard DR estimator.  Specifically, when the propensity score model is correctly specified, it achieves the smallest asymptotic variance among the class of DR estimators, and brings additional efficiency gains by further integrating predictive covariates. 
Simulation studies demonstrate the notable performance of the proposed estimator over current popular methods. An illustrative example is provided to assess the effectiveness of right heart catheterization (RHC) for critically ill patients.

\end{abstract} 
 \bigskip
 \bigskip
 \bigskip
\noindent%
{\it Keywords:}  Confounders, 
Double robustness, Efficiency, Predictive covariates, Propensity score. 
\vfill

\newpage
\spacingset{1.9} % DON'T change the spacing!
 \section{Introduction}

In observational studies, covariates that affect both  treatment   and  outcome   are often referred to as confounders \citep{Geng2002JRSSB}. In practice, it is commonly suggested to include all collected baseline covariates to serve as potential confounders, which essentially requires that the treatment variable satisfies the ignorability or unconfoundedness assumption \citep{rosenbaum1983assessing}. 
However, missingness in covariates is a pervasive issue in observational studies across biomedical and social sciences, posing a significant challenge for subsequent inference. 
For instance, in the widely used RHC dataset \citep{hirano2001estimation,cui2021semiparametric}, which focuses on evaluating the effectiveness of right heart catheterization in the intensive care unit for critically ill patients, some covariates exhibit missingness rates exceeding 50\%. While these missing covariates may affect either the treatment or outcome, prior research
has considered them not to simultaneously affect both variables and, therefore, they do not qualify as confounder variables. 
In such cases, existing literature often neglects covariates with a substantial amount of missing values and focuses solely on true confounders, proceeding with widely used estimators that also provide consistent estimates.

Various methods have been developed to estimate   average treatment effect (ATE) in the presence of potential confounders, including outcome regression, inverse probability weighting (IPW), and others \citep{rosenbaum1983assessing}. Among these, the augmented inverse-probability weighted (AIPW) estimator, proposed by \citet{robins1994estimation} and \citet{bang2005doubly}, and also referred to as the doubly robust (DR) estimator, is widely employed in observational studies. The DR estimator is known for its double robustness, ensuring consistency when either   outcome model or  propensity score  is correctly specified. 
In the presence of predictive covariates that are not confounders (i.e., variables that do not directly affect the treatment but may be correlated with the outcome), many prior studies have shown that employing such covariates can improve estimation efficiency \citep{hahn2004functional, lunceford2004stratification, brookhart2006variable, de2011covariate, franklin2015regularized, craycroft2020propensity, tang2023ultra}. For  example, \citet{lunceford2004stratification} theoretically reveal that introducing additional covariates into    IPW estimator can enhance estimation efficiency compared to  discarding them. \citet{craycroft2020propensity} demonstrate that including predictive covariates can result in a lower semiparametric efficiency bound compared to the one based on true confounders \citep{Hanh1998Econometrica,hirano2003efficient}. %Attaining these asymptotic performances with standard DR estimators often requires the simultaneous specification of both   propensity score  and outcome model \citep{bang2005doubly, tsiatis2007semiparametric, Robins2007Statsci, Tan2007statsci, Tsiatis2007statsci}. 
However, there are currently no answers  about whether introducing predictive covariates into  DR-type estimators can enhance estimation efficiency  in the presence of potentially misspecified working models.

In observational studies, the propensity score may be known by design and it is more likely to be correctly specified compared to the outcome model. We thus focus on a class of DR estimators that are designed to be more efficient than standard DR estimators when the propensity score is correct. These estimators achieve improved performance relative to existing DR estimators by directly minimizing the asymptotic variance.
In the context of outcome missing at random, \citet{cao2009improving} propose such a DR  estimator to estimate the mean of the outcome.  For optimal individual treatment regimes (ITRs), \citet{pan2021improved} introduce an enhanced DR estimator by directly maximizing the DR estimator of the expected outcome over a class of ITRs. % These estimators are doubly robust and  are also more efficient when the propensity score is correctly specified.
However, both methods focus solely on the marginal mean outcome. It is currently uncertain how their ideas can be applied to estimate the mean difference between two  potential outcomes, i.e., the ATE. This is because minimizing the asymptotic  variance of   ATE estimators  may introduce additional interaction terms through nuisance models, posing challenges to achieving a similarly improved double robustness property.  Besides, it also remains to be investigated whether the proposed estimators can be further improved by incorporating predictive covariates. 

We address these questions in this paper and propose an improved  DR estimator for the  ATE that achieves the smallest variance among a class of DR estimators. 
% Furthermore, our proposed estimator is also doubly robust. 
We also explore the inclusion of additional predictive covariates, allowing for some missingness in such covariates. Our theoretical findings indicate that integrating these predictive covariates into the proposed estimator substantially improves estimation efficiency, as long as the propensity scores are correctly specified, even if the outcome models are misspecified.

The subsequent sections of this paper follow this structure. Section \ref{sec: notation-assumption} introduces the notation, framework, and assumptions. Sections \ref{sec:tps} and \ref{sec:com-analy} present the improved  DR  estimator when propensity scores are either fully fixed or estimated. We also conduct comparative analyses for different adjustment sets.  Section \ref{sim-studies} performs simulation studies designed to assess the finite sample performance of the proposed estimators. In Section \ref{sec:app}, we further illustrate the proposed  method using a real data. The paper concludes with a brief discussion in Section \ref{sec: dis}. Technical details are available in the supplementary material. 

 \section{Set up}
 \label{sec: notation-assumption} 
 \subsection{Notation and assumptions}
Assuming that there are $n$ individuals who are independently and identically sampled from a superpopulation of interest. We consider evaluating the causal effect of a treatment $Z$ on an outcome $Y$ subject to confounding by the observed covariates ${ X}$.    The potential outcomes under treatment and control are denoted as $Y_1$ and $Y_0$, respectively. The observed outcome $Y$ can be expressed as a combination of these potential outcomes: $Y = ZY_1 + (1-Z)Y_0$. We make the   stable unit treatment value  assumption (SUTVA), which states that there is only one version of potential outcomes for each individual, and there is no interference between units \citep{rubin1974estimating}. We are interested in estimating the average treatment effect (ATE), defined as $\tau = \E(Y_1 - Y_0)$. 

A fundamental problem of causal inference is that we can never observe both potential outcomes for a unit simultaneously.   To ensure the identifiability, the strong unconfoundedness or ignorability assumption is commonly employed \citep{rosenbaum1983assessing}.
\begin{assumption}
\label{assump: positivity}
(i) $Z\indep (Y_0,Y_1)\mid X$, (ii) $0< \pr(Z=1\mid X)<1$. 
\end{assumption} 
Assumption  \ref{assump: positivity}(i) states that the treatment assignment $Z$ is independent of the potential outcomes $(Y_0, Y_1)$ given the covariates ${ X}$, thereby eliminating any potential unobserved confounder between  treatment and outcome. In a randomized experiment, the ignorability  assumption  holds naturally, because $Z$ is independent of $(Y_0, Y_1, { X})$.  
Assumption \ref{assump: positivity}(ii) is crucial to ensure adequate overlap between the covariate distributions of the treatment and control groups \citep{rosenbaum1983assessing}.

 Define $e_{\minzero}(X)=\pr(Z=1\mid X)$ and $Q_{\minzero}({{ X}}, Z)=\E(Y\mid {{ X}},Z )$.    We posit a parametric model $e_{\minone}( {{ X}};{\alpha} )$ for the propensity score $e_{\minzero}(X)$, where ${\alpha} $ denotes the nuisance parameter. 
% We use ${\tau}^{ {{\red \ipw}}}( {\alpha},{\beta} )$ to highlight that the estimator varies based on the choice of different  propensity score parameter ${\alpha}$, and outcome  parameter ${\beta}$, resulting in distinct estimators with potentially diverse behaviors.   It is important to note that when $Q_{\minone}({ X},Z;{\beta})\equiv0$, the improved estimator \eqref{eq: ipw-linear-adjust}  reduce to the classic IPW estimator, which serve as benchmark estimators throughout this paper.  To simplify the exposition, we require that the imposed outcome model $Q( {{ X}},Z;{\beta})$ used in \eqref{eq: ipw-linear-adjust} should satisfy  $\E\{(-1)^{1-Z}Q( {{ X}},Z;{\beta})/e_\minone({{ X}}, Z) \}=0$ for any fixed ${\beta}$, where $e_\minone({ X},Z)=Ze({ X})+(1-Z)\{1-e_\minone({ X})\}$. For example, $Q( {{ X}},Z;{\beta})={\beta}_0+{\beta}_1 {{ X}} +{\beta}_2Z {{ X}} $.  
%  Following ideas from \citet{robins1994estimation}, a double robust estimator can be constructed: {\fontsize{10.5}{14}\selectfont  \begin{align} 
% \label{eq: dr-X}
%  {\tau}^{\dr}_{\minone}({\beta} )&=  \dfrac1n{{ \sum_{i=1}^n}} 
%  \begin{bmatrix}
%    \dfrac{Z_iY_i-\{{Z_i-e_\minone(  {{ X}}_i;{\alpha})}\}Q( {{ X}}_i,1;{\beta})}{e_\minone({\alpha})} \\\addlinespace[1mm] - \dfrac{(1- Z_i)Y_i-\{{(1-Z_i)-1 - e_\minone({\alpha})}\}Q( {{ X}}_i,0;{\beta})}{ 1 - e_\minone({\alpha})}  
%  \end{bmatrix}.
% \end{align}}
The subscript “$X$” is used to denote the propensity score model only using the covariates $X$, and other similar subscripts will be introduced in subsequent discussions. 
When propensity score $e_\minone(X;\alpha)$ is correctly specified, let ${\alpha}_{\minzero}$ denote the true parameter, and we have $e_\minzero(X)=e_\minone(X;\alpha_\minzero)$.  
Following \citet{robins1994estimation}, a DR   estimator can be constructed as follows,
\begin{equation}
    \label{eq: dr-X}  
         \begin{aligned} 
 {\tau}^{\dr}_{\minone}(\hat {\alpha},\hat{\beta}   )&=  
 \mathbb{P}_n  \begin{bmatrix}
 \dfrac{ZY-\{{Z-e_{\minone}(  {{ X}};\hat{\alpha})}\}Q^{\dr}_{\minone}( {{ X}},1;\hat{\beta})}{e_{\minone}( {{ X}};\hat{\alpha})} \\\addlinespace[1mm] - \dfrac{(1- Z)Y-\{{e_{\minone}( {{ X}};\hat{\alpha})}-Z\}Q^{\dr}_{\minone}( {{ X}},0;\hat{\beta})}{ 1 - e_{\minone}( {{ X}};\hat{\alpha})}  
 \end{bmatrix},
\end{aligned}   
\end{equation}
%where ${ {n}_1^{\hjkt}}= {\textstyle\sum_{i=1}^n}\{{Z_i}/{e_\minone({ { X}}_{k,i};{\alpha}_k )}\}$ and  ${ {n}_0}^{\hjkt}=  {\textstyle\sum_{i=1}^n}[{(1-Z_i)}/\{1-e_\minone({ { X}}_{k,i} ;{\alpha}_k)]$. Note that  both  ${n_1^\hjkt}$ and ${n_0^\hjkt}$ depend on  the nuisance parameter ${\alpha}_k$, but we compress this dependence for notational convenience. Both  ${n_0^\hjkt}$ and ${n_1^\hjkt}$ are the unbiased estimators of $n$, satisfying  $E({n_0^\hjkt})=\E({n_1^\hjkt})=n$. The superscripts ``ipw" and ``haj'' signify quantities associated with the IPW   estimator and Hájek estimator, respectively.  The Hájek type estimator \eqref{eq: hjk-linear-adjust} replaces the denominator $n$ of the IPW   estimator \eqref{eq: ipw-linear-adjust} with the sum of the inverse of the sampling probabilities,   which tends to reduce the variance in practice  \citep{horvitz1952generalization,Hajek1971comment}.  
where $ \mathbb{P}_n (\cdot)$ denotes the empirical mean operator, $Q_{\minone}   ^\dr({{ X}}, Z;{\beta})$ represents a specified working model for $Q_{\minzero} ({{ X}}, Z )$,   $\hat\alpha$ and $\hat\beta$ are the estimated parameters.
 By adding an augmentation term that involves both estimated propensity scores and outcome models, the DR estimator   provides additional protection against model misspecification \citep{robins1994estimation,Scharfstein1999JASA, bang2005doubly}.  The proposed estimator  ${\tau}^{\dr}_{\minone}(\hat {\alpha},\hat{\beta}   )$ in \eqref{eq: dr-X}   is doubly robust in the sense that it consistently estimates ATE 
 % ${\tau}^{\dr}_{\minone}
 % (\hat {\alpha},\hat{\beta}   )$ 
 as long as one of the nuisance working models is correctly specified, that is, $e_\minone(   {{ X}} ;{\hat\alpha}) \stackrel{p}{\rightarrow} e_{\minzero} ({ X}  )$ or $Q^{\dr}_{\minone}({ X},Z ; \hat{{{\beta}}}) \stackrel{p}{\rightarrow} Q_{\minzero}({ X},Z )$.

\subsection{Incorporating additional predictive covariates}
% While the framework in the preceding sections parallels the regression adjustment   in randomized experiments, especially in observational studies, often involve numerous additional covariates. This complexity necessitates a deeper consideration of how to further refine our analysis.   
% {\red 
% Adjusting for pretreatment covariates can enhance the efficiency of outcome estimation, especially when one or more observed covariates are capable of predicting the   outcome \citep{lin2013agnostic}. However, despite its widespread use in randomized experiments, this adjustment approach has received relatively less attention in observational studies, even when the propensity score is appropriately modeled. To address this issue,  in this paper, we  explore how incorporating information from the outcome model can lead to improved estimation.  }

In practice, besides the true confounders $X$, there may exist some baseline covariates denoted as $W$ that have predictive power for the outcome. These variables are correlated with the outcome but are not directly associated with the treatment, except through the true confounders $X$ \citep{craycroft2020propensity}. The aforementioned conditional independence can be expressed as follows: 
 \begin{assumption} 
 \label{assump: negative-control}
$ Z\indep (  W,Y_0,Y_1)\mid   X$. \end{assumption}

 When combined with Assumption \ref{assump: positivity},  Assumption \ref{assump: negative-control} implies that genuine confounders $X$ sufficiently address the potential confounding relationship between $Z$ and $(W, Y_0, Y_1)$. This condition  also implies     $Z \indep (Y_0, Y_1) \mid (X, W)$. 
In practical scenarios, because $W$ only predicts the outcome variable $Y$ without having a direct effect on the treatment variable, it does not qualify as a genuine confounder \citep{Geng2002JRSSB}. Therefore, even in cases where covariates $W$ are missing, it does not affect the identifiability of the  ATE.
In such cases, various methods can be applied to impute missing values for predictive covariates $W$ \citep{rubin1996multiple}. This is especially relevant when enhancing the estimation efficiency of ATE in randomized experiments with missing covariates, as discussed by \cite{Zhao2022JASA}. However, for simplicity, this paper assumes that covariates $W$ are either fully observed or completely imputed.  Refer to Figure \ref{fig:prop-adj} for a graphical representation of these concepts. 
\begin{figure}[t]
    \centering 
        \begin{tikzpicture}[>=stealth, scale=1.75]
            % 第一个子图内容
            \node[draw, circle] (X) at (0, 1) {$X$};
            \node[draw, circle] (Z) at (-1.5, 0) {$Z$};
            \node[draw, circle ] (Y) at (0, 0) {$Y$};
            \node[draw, circle ] (W) at (1.5, 0) {$W$}; 
            \draw[->] (X) -- (Z);
            \draw[->] (X) -- (Y);
            \draw[->] (Z) -- (Y);
            \draw[->] (W) -- (Y);
            \draw[->] (X) -- (W);
        \end{tikzpicture} 
    \caption{Observational studies with additional predictive covariates $W$.}
    \label{fig:prop-adj} 
\end{figure}
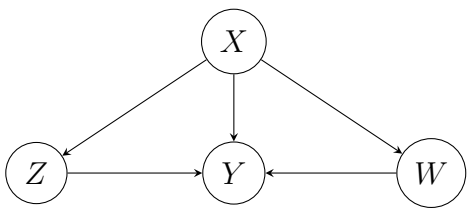

Therefore, we can also utilize the  set $(X, W)$ to construct the  DR-type estimators. In parallel with the previous discussion, we can specify a parametric model for $e_{\minzero}(X, W) = \pr(Z = 1 \mid X, W)$ as $e_{\mintwo}(X, W; \alpha, \gamma)$. Here, $\alpha$ continues to denote the parameters only related to the  terms with $X$, while $\gamma$ is an additional parameter vector corresponding to  $W$. The reason for this specification is that, given that $W$ does not directly affect $Z$, it is clear that $e_{\minzero}(X) = e_{\minzero}(X, W)$, and the true parameters of   $e_{\mintwo}(X, W; \alpha, \gamma)$ are $\alpha = \alpha_{\minzero}$ and $\gamma = \gamma_{\minzero} = 0$. Therefore, when we can correctly parametrize $e_{\minone}(X; \alpha)$, we can also correctly model $e_{\mintwo}(X, W; \alpha, \gamma)$.  For instance, we can propose the following pair of propensity score  models: $e_{\minone}(X ; \alpha ) =\mathrm{expit}(\alpha_1+\alpha_2X )$  and $
 e_{\mintwo}(X, W; \alpha, \gamma)=\mathrm{expit}(\alpha_1+\alpha_2 X+\gamma_1 W+\gamma_2 XW), $
where $\alpha=(\alpha_1, \alpha_2)$ and $\gamma=(\gamma_1, \gamma_2)$. 
When $\gamma$ takes its true value $0$, the model $e_{\mintwo}(X, W; \alpha, \gamma)$ reduces to  $e_{\minone}(X; \alpha)$, which depends solely on $X$ and $\alpha$. In addition to propensity scores, we also introduce an outcome working model $Q_{\mintwo}^\dr(X, W, Z; \delta)$ to estimate the conditional expectation $Q_{\minzero}(X, W, Z) = \E(Y \mid X, W, Z)$. Similar to \eqref{eq: dr-X}, the DR estimator for the  set $(X, W)$ can be expressed as follows:
\begin{equation}
    \label{eq: dr-XW}
        \begin{gathered}  
 {\tau}^{\dr}_{\mintwo}( \hat{\alpha}, \hat{\gamma}, \hat{\delta}  ) =  
 \mathbb{P}_n  \begin{bmatrix}
 \dfrac{ZY-\{{Z-e_{\mintwo}(  {{ X}},W;\hat{\alpha},\hat{\gamma})}\}Q^{\dr}_{\mintwo}( {{ X}},W,1;\hat{\delta})}{ e_{\mintwo}(X, W;\hat{\alpha},\hat{\gamma})}   \\\addlinespace[1.5mm] - \dfrac{(1- Z)Y-\{{e_{\mintwo}(X, W;\hat{\alpha},\hat{\gamma})}-Z\}Q^{\dr}_{\mintwo}( {{ X}},W,0;\hat{\delta})}{ 1 - e_{\mintwo}(X, W;\hat{\alpha},\hat{\gamma})}  
 \end{bmatrix},
\end{gathered} 
\end{equation}
where $\hat\alpha$,   $\hat\gamma$ and    $\hat\delta$  are some estimated parameters.
\subsection{Working model conditions}
In many practical applications, correctly specifying the outcome models can be challenging. Therefore, using the ordinary least squares estimates for ${\beta}$ and $\delta$ in $Q_{\minone}^\dr({X},Z;\beta)$ and $Q_{\mintwo}^\dr(X,W,Z;\delta)$ and subsequently plugging them into the standard  DR  estimator can lead to inefficiency and significant variability \citep{cao2009improving, pan2021improved}. In contrast, the propensity score is often more likely to be correctly specified. This motivates us to develop improved DR estimators with more desirable efficiency properties.   Before presenting our theoretical results, we need to impose certain restrictions on the outcome models $Q_{\minone}^\dr({X},Z;\beta)$ and $Q_{\mintwo}^\dr(X,W,Z;\delta)$. This allows us to compare the asymptotic variances of the candidate estimators \eqref{eq: dr-X} and \eqref{eq: dr-XW}. We formally characterize these model restrictions in  following assumption, which encompasses a wide range of models applicable in practice. \begin{assumption} \label{cond:two-res-aipw}
Assume that one of the following conditions holds:
\begin{itemize}[leftmargin=30pt]
    \item[(i)]   The working model $Q_{{{\minone}}}^{\dr}({ X},Z;{\beta})$ is covered by the     working model  $Q_{\mintwo}^\dr({ X},W,Z;{\delta})$. For example, the following linear models are applicable,  
  \begin{equation}
  \label{eq:dr-linear-model}
      \begin{gathered}
   Q_{\minone}^\dr( X  ,Z;{\beta})={ \beta}_0+{\beta}_1Z+{\beta}_2^\T  X +{\beta}_3^\T Z  X,\\ Q_{\mintwo}^\dr({ X},W,Z;{\delta} )={\delta}_0+{\delta}_1Z+{\delta}_2^\T   X +{\delta}_3^\T Z  X+{\delta}_4^\T{ W} +{\delta}_5^\T Z{ W}.
  \end{gathered}
  \end{equation}
\item[  (ii) ]The working model $Q_{\mintwo}^\dr({ X},W,Z;{\delta} )$ is correctly specified.
\end{itemize}
\end{assumption}

Assumption \ref{cond:two-res-aipw}(i) essentially requires that the outcome model employed for  the set $(X,W)$ should include the model corresponding to $X$. This is a common model specification in practice and encompasses a broad class of generalized linear models. For instance, for continuous outcome, we can use linear models or log-linear models, and for binary outcome, logistic regression or probit regression can be applied.  
We do not require the working models  in Assumption \ref{cond:two-res-aipw}(i) to be correctly specified. Additionally, Assumption \ref{cond:two-res-aipw}(ii) complements another possible setup for working models used in semiparametric theory, where   the outcome working model is correctly specified. However, in our context, we only require the correct specification of the $Q_{\mintwo}^\dr( X  ,W,Z;\delta)$ model, without the necessity of correct specification for the $Q_{\minone}^\dr({ X},Z;\beta)$ model. This implies that, in the subsequent sections, we can achieve certain expected efficiency comparison properties as long as Assumption \ref{cond:two-res-aipw}(ii) holds.

\section{Improved DR estimators with  fully fixed propensity score} 
 \label{sec:tps}
\subsection{General strategies} 

In this section, we explore the candidate DR estimators \eqref{eq: dr-X} and \eqref{eq: dr-XW} with fully  fixed propensity score models, represented as $e_{\minone}(X; \alpha^{\ast})$ and $e_{\mintwo}(X, W; \alpha^{\ast}, \gamma^{\ast})$, respectively.  These models do not involve any nuisance parameters \citep{cao2009improving,pan2021improved}. Additionally, the fully fixed parameters $\alpha^{\ast}$ and $\gamma^{\ast}$ may or may not be the same as the true parameters. When $(\alpha^{\ast}, \gamma^{\ast})$ takes its true values $(\alpha_{\minzero}, 0)$, we have $e_{\minone}(X; \alpha^{\ast}) = e_{\mintwo}(X, W; \alpha^{\ast}, \gamma^{\ast})$. For simplicity, we use ${\tau}_{\minone}^{\dr}(\alpha^{\ast}, \hat{\beta})$ and ${\tau}^{\dr}_{\mintwo}(\alpha^{\ast}, \gamma^{\ast}, \hat{\delta})$ for the   DR  estimators \eqref{eq: dr-X} and \eqref{eq: dr-XW} in this section, respectively. It is emphasized that these estimators only rely on the nuisance parameters $ \hat{\beta}$ and $ \hat{\delta}$ of the outcome models.  
In the following discussion, we will only present the  theoretical results for  ${\tau}_{\minone}^{\dr}(\alpha^{\ast}, \hat{\beta})$, with similar conclusions for  ${\tau}^{\dr}_{\mintwo}(\alpha^{\ast}, \gamma^{\ast}, \hat{\delta})$.

 \begin{lemma}
 \label{lem: if}
     Let $\hat{{\beta}}$ be any root-$n$ consistent estimator converging in probability to some ${\beta}^\ast$, that is, $\hat{{\beta}}-{\beta}^\ast=O_p(n^{-1/2})$. When the propensity score is fully known, i.e., ${\alpha}^{\ast} ={\alpha}_{\minzero}$, but $Q^\dr_{\minone}({ X}, Z;{\beta})$ may or may not be,  
  the influence function for ${ {\tau} ^{\dr}_{\minone}( \alpha_\minzero,\hat{\beta} )}$ is $\varphi^\dr_{\minone}(Y, { X},Z;{ {\alpha}}_\minzero, {{\beta}}^*)$, where $
\varphi^\dr_{\minone}(Y, { X},Z;{\alpha} , {{\beta}})$ is denoted as
 \begin{equation}
     \label{eq:eif-dr}\frac{ZY}{e_{\minone} ({ X};\alpha)}-\frac{(1- Z)Y}{ 1-e_{\minone} ({ X};\alpha)}-   \frac{Z-e_{\minone} ({ X};\alpha)}{e_{\minone} ({ X};\alpha)}   Q_{\minone}^\dr({ X},1; {{\beta}})+\frac{e_{\minone} ({ X};\alpha)-Z}{ 1-e_{\minone} ({ X};\alpha)}   Q_{\minone}^\dr({ X},0; {{\beta}})-\tau . 
 \end{equation} 
 \end{lemma}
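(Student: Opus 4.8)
The plan is to view $\tau^{\dr}_{\minone}(\alpha_\minzero,\hat\beta)$ as a plug-in functional of the nuisance estimator $\hat\beta$ and to show that, because the propensity score is correctly specified, the first-order effect of estimating $\beta$ is negligible. Comparing the summand of \eqref{eq: dr-X} with \eqref{eq:eif-dr}, the estimator satisfies $\tau^{\dr}_{\minone}(\alpha_\minzero,\hat\beta)-\tau=\mathbb{P}_n\{\varphi^\dr_{\minone}(Y,X,Z;\alpha_\minzero,\hat\beta)\}$ exactly, so it suffices to prove two facts: that $\varphi^\dr_{\minone}(\cdot;\alpha_\minzero,\beta^*)$ has mean zero, and that replacing $\hat\beta$ by its probability limit $\beta^*$ incurs only an $o_p(n^{-1/2})$ error. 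Together these give $\tau^{\dr}_{\minone}(\alpha_\minzero,\hat\beta)-\tau=\mathbb{P}_n\{\varphi^\dr_{\minone}(\cdot;\alpha_\minzero,\beta^*)\}+o_p(n^{-1/2})$, which is precisely the claim that $\varphi^\dr_{\minone}(\cdot;\alpha_\minzero,\beta^*)$ is the influence function.

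For the first fact I would condition on $X$ and invoke Assumption \ref{assump: positivity}(i) together with the correct-specification identity $\E(Z\mid X)=e_{\minone}(X;\alpha_\minzero)$. The two inverse-probability-weighted terms then reduce to $\E(Y_1\mid X)$ and $\E(Y_0\mid X)$, while the two augmentation terms vanish in conditional expectation: each carries a factor $Z-e_{\minone}(X;\alpha_\minzero)$ multiplied by the $X$-measurable quantity $Q^\dr_{\minone}(X,z;\beta^*)$, and $\E\{Z-e_{\minone}(X;\alpha_\minzero)\mid X\}=0$. Averaging over $X$ gives $\E(Y_1-Y_0)-\tau=0$, and crucially this holds whether or not the outcome working model is correct, reflecting the protection afforded by a correct propensity score.

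For the second fact I would take a first-order Taylor expansion in $\beta$ around $\beta^*$,
\[
\tau^{\dr}_{\minone}(\alpha_\minzero,\hat\beta)-\tau=\mathbb{P}_n\{\varphi^\dr_{\minone}(\cdot;\alpha_\minzero,\beta^*)\}+\Big[\mathbb{P}_n\tfrac{\partial}{\partial\beta}\varphi^\dr_{\minone}(\cdot;\alpha_\minzero,\beta^*)\Big](\hat\beta-\beta^*)+o_p(n^{-1/2}),
\]
and argue that the middle term is $o_p(n^{-1/2})$. By the law of large numbers the bracketed average converges to $\E\{\partial\varphi^\dr_{\minone}/\partial\beta\}$, and the decisive step is to show that this expectation is zero. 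Differentiating \eqref{eq:eif-dr} in $\beta$ annihilates the inverse-probability-weighted terms and leaves only the derivatives of the two augmentation terms, each of which again carries the factor $Z-e_{\minone}(X;\alpha_\minzero)$; conditioning on $X$ and using $\E(Z\mid X)=e_{\minone}(X;\alpha_\minzero)$ makes every such term vanish. This is the Neyman-orthogonality feature of the doubly robust score: once the propensity score is correct, the estimating function is first-order insensitive to the outcome nuisance, so the bracketed average is $O_p(n^{-1/2})$ and, multiplied by $\hat\beta-\beta^*=O_p(n^{-1/2})$, contributes at order $O_p(n^{-1})$.

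I expect the only genuine obstacle to be the regularity bookkeeping rather than the orthogonality itself: one must assume enough smoothness of $Q^\dr_{\minone}(X,z;\beta)$ in $\beta$ and suitable dominating integrability to justify the Taylor expansion, to interchange differentiation and expectation, and to control the remainder uniformly in a neighbourhood of $\beta^*$. Under these standard conditions the two facts combine to yield the stated influence function $\varphi^\dr_{\minone}(Y,X,Z;\alpha_\minzero,\beta^*)$.
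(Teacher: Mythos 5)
Your proposal is correct and takes essentially the same route as the paper's proof: a first-order (mean-value) expansion of $Q^{\dr}_{\minone}({X},z;\cdot)$ in $\beta$ around $\beta^*$, with the derivative term vanishing because it carries the factor $Z-e_{\minone}({X};\alpha_{\minzero})$, whose conditional mean given ${X}$ is zero when the propensity score is fully known. The only cosmetic differences are that you additionally verify the mean-zero property of $\varphi^{\dr}_{\minone}$ explicitly (the paper leaves this implicit via the centering by $\mu_1$ and $\mu_0$), and you invoke a CLT to get an $O_p(n^{-1/2})$ rate for the derivative average where the paper's law-of-large-numbers $o_p(1)$ bound, multiplied by $\sqrt{n}(\hat\beta-\beta^*)=O_p(1)$, already suffices.
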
  
 
A proof is provided in Section \ref{sec:proof-tps-if} of supplementary material. The previous results demonstrate that when the propensity score is fully known, i.e., $\alpha^\ast=\alpha_\minzero$, the asymptotic variance of ${\tau}_{\minone}^{\dr}({\alpha}_{\minzero}, \hat{\beta})$ does not depend on the sampling variation of $\hat{{{\beta}}}$ but solely on its limit in probability ${{\beta}}^*$. Utilizing Lemma \ref{lem: if}, we can compute the asymptotic variance of ${\tau}^{\dr}_{\minone}({\alpha}_{\minzero}, \hat{\beta})$ and denote it as $\Sigma_{\minone}^{\drt}( {\beta}^\ast)$.  We use `$\mathrm{fps}$' to characterize the asymptotic variance under the  fully fixed propensity score.  As shown in Section  \ref{sec: asyvar-tps2} of supplementary material, the terms only   involving ${\beta}^\ast$ in  $\Sigma_{\minone}^{\drt}(  {\beta}^*)$ can be expressed as follows: 
\begin{equation*} 
     \begin{aligned} 
   \E \left[  \sqrt{\frac{ 1-e_{\minzero}({ X})}{e_{\minzero}({ X})}}\{Q_{\minzero}({ X},1)-Q_{\minone}^{\dr}({ X},1; {{\beta}}^*)\}+ \sqrt{\frac{e_{\minzero}({ X})}{ 1-e_{\minzero}({ X})}}\{Q_{\minzero}({ X},0)-Q_{\minone}^{\dr}({ X},0; {{\beta}}^*)\} \right]^2.     \end{aligned}  
\end{equation*}  
%and the explicit expression of $\Sigma_{\minone}^{\drt}(  {\beta}^*)$ can be found in Section  \ref{sec: asyvar-tps2} of supplementary material.
Let ${\beta}^{\drt}$ be the minimizer of the asymptotic variance $ \Sigma_{\minone} ^{\drt}( {\beta} )$. Taking the derivative of the above equation with respect to ${\beta}$ and setting it to zero, ${\beta}^{\drt}$ is the solution to: 
\begin{equation}
        \label{eq: dr-opt-est}   
\begin{aligned}
 \E&\begin{pmatrix}
\begin{bmatrix}
      \sqrt{\dfrac{1-e_{\minzero}({ X})}{e_{\minzero}({ X})}}\left\{Q_{\minzero}({ X}, 1)-Q_{\minone}^{\dr}({ X}, 1 ; {\beta})\right\} +\sqrt{\dfrac{e_{\minzero}({ X})}{1-e_{\minzero}({ X})}}\left\{Q_{\minzero}({ X}, 0)-Q_{\minone}^{\dr}({ X}, 0 ; {\beta})\right\}
\end{bmatrix} \\\addlinespace[2mm]
\times\left\{\sqrt{\dfrac{1-e_{\minzero}({ X})}{e_{\minzero}({ X})}} Q_{\minone\beta}^{\dr}({ X}, 1 ; {\beta})+\sqrt{\dfrac{e_{\minzero}({ X})}{1-e_{\minzero}({ X})}} Q_{\minone\beta}^{\dr}({ X}, 0 ; {\beta})\right\}
 \end{pmatrix}=0,
\end{aligned}  
\end{equation}
  where $Q_{\minone\beta}^\dr({ X}, Z; {\beta})=\partial Q_{\minone}^\dr({ X},Z;\beta)/\partial\beta$.  
  Hence, if the outcome model $Q^{\dr}_{\minone}({ X},Z ; {{\beta}})$ is correctly specified; that is, $Q^{\dr}_{\minone}({ X},Z ; {{\beta}}_0)=Q_{\minzero}^\dr({ X},Z)$ for some ${{\beta}}_0$, then in fact ${{\beta}}^{\drt}= {{\beta}}_0$. 
If the outcome model is misspecified but the propensity score is fully known, ${{\beta}}^{\drt}$ still exists and minimizes $ \Sigma_{\minone} ^{\drt}( {\beta} )$. However, the ordinary least square estimate ${{\hat{\beta}}}^{\ls}$ solving the estimating equation $ \mathbb{P}_n\left[\{Y-Q^{\dr}_{\minone}({ X},Z ; {{\beta}})\} Q^{\dr}_{{{\minone\beta}}}({ X},Z ; {{\beta}})\right]=0$  does not converge to   ${{\beta}}^{\drt}$. 
Therefore, subsequently plugging ${{\beta}}^{\ls}$ into  \eqref{eq: dr-X}   also does not lead to the smallest asymptotic variance. This explains why the standard DR estimator  is suboptimal when the outcome  model is misspecified \citep{cao2009improving,pan2021improved}. 
%Given the consistency of both $ \tau_{k,0}^\mathrm{ini}$ and $ \tau_{k,1}^\mathrm{ini}$ as estimators for $\tau_{0}$ and $\tau_{1}$ respectively, we can deduce that $\hat{{\beta}}{k}^{\hjkt}-{\beta}{k}^{\hjkt}=o_p(1)$. As indicated by \eqref{eq:consis-estimator-res1}, it follows that $\tau_{k}^{\hjk}(\hat{{\beta}}_{k}^{\hjkt})$ stands as the optimal estimator within the class of Hájek estimators defined in \eqref{eq: hjk-linear-adjust}.
In practice, we   consider $\hat{{{\beta}}}^{\drt}$ as the solution to the following   estimating equation,
    \begin{gather} 
    \label{eq:fps-ee-exp}
\mathbb{P}_n 
 \begin{pmatrix}
\begin{bmatrix}
     \dfrac{Z}{e_{\minone}( {{ X}};{\alpha}^{\ast})} \sqrt{\dfrac{1-e_{\minone}( {{ X}};{\alpha}^{\ast})}{e_{\minone}( {{ X}};{\alpha}^{\ast})}}\left\{Y-Q_{\minone}^{\dr}({ X}, 1 ; {\beta})\right\}\\\addlinespace[1.5mm]+\dfrac{1-Z}{1-e_{\minone}( {{ X}};{\alpha}^{\ast})}\sqrt{\dfrac{e_{\minone}( {{ X}};{\alpha}^{\ast})}{1-e_{\minone}( {{ X}};{\alpha}^{\ast})}}\left\{Y-Q_{\minone}^{\dr}({ X}, 0 ; {\beta})\right\}
\end{bmatrix} \\\addlinespace[2mm]
\times\left\{\sqrt{\dfrac{1-e_{\minone}( {{ X}};{\alpha}^{\ast})}{e_{\minone}( {{ X}};{\alpha}^{\ast})}} Q_{\minone\beta}^{\dr}({ X}, 1 ; {\beta})+\sqrt{\dfrac{e_{\minone}( {{ X}};{\alpha}^{\ast})}{1-e_{\minone}( {{ X}};{\alpha}^{\ast})}} Q_{\minone\beta}^{\dr}({ X}, 0 ; {\beta})\right\}
 \end{pmatrix}     =0.   
    \end{gather}  
    When the propensity score is fully known, but the outcome model may not be, the left-hand side of \eqref{eq:fps-ee-exp} converges in probability to the left-hand side of \eqref{eq: dr-opt-est}, hence, $\hat{{{\beta}}}^{\drt} \stackrel{p}{\rightarrow} {{\beta}}^{\drt}$. On the other hand, even when the outcome model is correctly specified but the propensity score may not be, equation \eqref{eq: dr-opt-est} still ensures that $\hat{{{\beta}}}^{\drt}$ remains a consistent estimator for ${{\beta}}_0$. 
 %$\hat{{{\beta}}}^{\drt} \stackrel{p}{\rightarrow} {{\beta}}_0$. 
The   lemma in section \ref{lem:dr-drest} of supplementary  material   formally establishes the double robustness property of the proposed estimator ${\tau^{\dr}_{\minone}(\alpha^*, \hat{\beta}^{\drt})}$.   
% \begin{lemma}
% \label{lem:dr-drest} Under Assumptions \ref{assump: positivity} and \ref{assump: negative-control},  
% the proposed estimator  has the following properties: 
% \begin{enumerate}[label=(\roman*),leftmargin=30pt]
% \item  When either the propensity score is fully known or the outcome  model is correctly specified, ${ {\tau} ^{\dr}_{\minone}({\alpha}^\ast,\hat{\beta} ^{\drt})}$ is a consistent estimator.
% \item When the propensity score   is fully known, ${ {\tau} ^{\dr}_{\minone}(  \alpha^*, \hat{\beta} ^{\drt})}$ achieves the smallest asymptotic variance among all estimators of the form in \eqref{eq: dr-X}.
% \item When the propensity score is fully known and the outcome  model is correctly specified, ${ {\tau} ^{\dr}_{\minone}({ {\alpha}}^*,\hat{\beta} ^{\drt})}$ achieves the semiparametric efficiency bound.
% \end{enumerate}
% \end{lemma}
We now present the asymptotic normality of the proposed estimators  using  $M$-estimation theory \citep{stefanski2002calculus}.  We do not discuss the situation when both   propensity score and outcome model  are misspecified, as in such cases, the resulting estimator may not guarantee the consistency.
 \begin{theorem} 
    \label{thm:tps} Given Assumptions \ref{assump: positivity} and \ref{assump: negative-control}, when either the propensity score is fully known or the outcome model is correctly specified, we have
   $  \sqrt{n}\{{ {{\tau}}_{\minone} ^\dr( {\alpha} ^{\ast},\hat{\beta}^\drt  )}-\tau\} \stackrel{d}{\rightarrow} N(0,{{\Lambda}_{\minone} ^\drt} )  $, where the detailed expressions of   $ {{\Lambda}_{\minone} ^\drt}$ can be found in Section  \ref{sec:avar-tps} of supplementary material. 
\end{theorem}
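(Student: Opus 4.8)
The plan is to treat the pair $(\hat{\beta}^{\drt},\hat{\tau})$, where $\hat{\tau}={\tau}_{\minone}^{\dr}(\alpha^{\ast},\hat{\beta}^{\drt})$, as a single $M$-estimator and appeal to the standard theory of \citet{stefanski2002calculus}. Writing $O=(Y,X,Z)$ and $\theta=(\beta^{\T},\tau)^{\T}$, I would stack the estimating function of \eqref{eq:fps-ee-exp}, call it $S(O;\beta)$, on top of the centered moment $m(O;\beta)-\tau$, where $m(O;\beta)$ is the bracketed summand of \eqref{eq: dr-X} evaluated at the fixed $\alpha^{\ast}$; the joint estimator then solves $\mathbb{P}_n\{(S(O;\beta)^{\T},\,m(O;\beta)-\tau)^{\T}\}=0$. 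Under the usual regularity conditions this yields $\sqrt{n}(\hat{\theta}-\theta^{\ast})\stackrel{d}{\rightarrow}N(0,A^{-1}B(A^{-1})^{\T})$ with bread $A=\E\{-\partial\psi/\partial\theta^{\T}\}$ and meat $B=\E(\psi\psi^{\T})$ evaluated at the limit $\theta^{\ast}$, and the lower-right entry of $A^{-1}B(A^{-1})^{\T}$ is the target $\Lambda_{\minone}^{\drt}$.

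Second, I would pin down the probability limit $\theta^{\ast}=(\beta^{\ast\T},\tau)^{\T}$ separately in the two regimes of the hypothesis. When the propensity score is correct, $\hat{\beta}^{\drt}\stackrel{p}{\rightarrow}\beta^{\drt}$, the minimizer of $\Sigma_{\minone}^{\drt}(\beta)$ characterized by \eqref{eq: dr-opt-est}; when the outcome model is correct, $\hat{\beta}^{\drt}\stackrel{p}{\rightarrow}\beta_0$ with $Q^{\dr}_{\minone}(X,Z;\beta_0)=Q_{\minzero}(X,Z)$. In either case the consistency $\hat{\tau}\stackrel{p}{\rightarrow}\tau$ is exactly the double robustness established in the lemma in Section~\ref{lem:dr-drest} of the supplementary material, so $\theta^{\ast}$ is well defined and $\hat\theta$ is root-$n$ consistent for it.

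Third, I would exploit the block structure of $A$. Because $S$ does not involve $\tau$ and the $\tau$-equation is linear in $\tau$, $A$ is block lower triangular, so the influence function of $\hat{\tau}$ reduces to $m(O;\beta^{\ast})-\tau$ minus a correction term governed by the cross-block, which up to sign equals $\E\{\partial m(O;\beta)/\partial\beta\}$, times $A_{\beta\beta}^{-1}S(O;\beta^{\ast})$. The decisive simplification is that under a correct propensity score this cross-block vanishes: differentiating $m$ in $\beta$ produces the factors $(Z-e_{\minone})/e_{\minone}$ and $(e_{\minone}-Z)/(1-e_{\minone})$ multiplying $Q_{\minone\beta}^{\dr}$, whose conditional expectations given $X$ are zero by Assumption \ref{assump: positivity}. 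This recovers precisely the influence function $\varphi_{\minone}^{\dr}(Y,X,Z;\alpha_{\minzero},\beta^{\ast})$ of Lemma \ref{lem: if}, so in this regime $\Lambda_{\minone}^{\drt}=\var\{\varphi_{\minone}^{\dr}(Y,X,Z;\alpha_{\minzero},\beta^{\drt})\}$ and the estimation of $\hat{\beta}^{\drt}$ is first-order irrelevant. When instead only the outcome model is correct, the cross-block is generally nonzero and the full sandwich retains the correction term, giving the more elaborate $\Lambda_{\minone}^{\drt}$ recorded in the supplement; a direct application of the central limit theorem to the resulting influence function then delivers the stated asymptotic normality.

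The main obstacle I anticipate is the verification of the regularity conditions rather than the algebra. The estimating function \eqref{eq:fps-ee-exp} carries the square-root weights $\sqrt{(1-e_{\minone})/e_{\minone}}$ and their reciprocals, so I must use Assumption \ref{assump: positivity}(ii) to keep $e_{\minone}(X;\alpha^{\ast})$ bounded away from $0$ and $1$, secure finite second moments of $\psi$, and show that the derivative matrices are dominated and that $A_{\beta\beta}$ (the Hessian of the variance objective in the first regime, or equivalently the Jacobian of \eqref{eq:fps-ee-exp}) is nonsingular at $\beta^{\ast}$. The second delicate point is bookkeeping: establishing that a single expression $\Lambda_{\minone}^{\drt}$ is the correct limiting variance simultaneously across the two mutually exclusive regimes, which hinges on the clean vanishing of the cross-block under a correct propensity score that ties the argument back to Lemma \ref{lem: if}.
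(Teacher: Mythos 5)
Your proposal is correct and follows essentially the same route as the paper's own proof: stacking the estimating function \eqref{eq:fps-ee-exp} with the centered moment of \eqref{eq: dr-X} and applying the sandwich formula of \citet{stefanski2002calculus} to the block lower-triangular system, with the limit $\beta^{\ast}$ identified regime-by-regime via the double-robustness lemma, is exactly the argument in Section \ref{sec:avar-tps}. Your additional observation that the cross-block $\E\{\partial m/\partial\beta^{\T}\}$ vanishes under a correct propensity score, collapsing $\Lambda_{\minone}^{\drt}$ to $\Sigma_{\minone}^{\drt}(\beta^{\drt})$ of Lemma \ref{lem: if}, matches the remark the paper makes immediately after the theorem statement.
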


  When the propensity score is fully known, it can be demonstrated that ${\Lambda_{\minone}^\drt} = \Sigma_{\minone}^\drt(\beta^\drt)$, with $\Sigma_{\minone}^\drt(\beta)$ defined earlier above \eqref{eq: dr-opt-est}. In such a scenario,  the proposed estimator would also minimize the asymptotic variance $\Sigma_{\minone}^\drt(\beta)$ of the candidate estimators outlined in \eqref{eq: dr-X}. When the outcome model is correct but the propensity score is incorrect, Theorem \ref{thm:tps} implies that the improved DR estimators remain consistent and asymptotically normal.  

\subsection{Comparative analysis when the propensity score is fully known}
\label{ssec:com-an-fix}
In this section, we will explore whether incorporating   predictive covariates contributes to improving  efficiency when the propensity score is fully known,  namely $e_\minzero(X)=e_{\minone}(X; \alpha^\ast) = e_{\mintwo}(X, W; \alpha^\ast, \gamma^\ast)  $.  In parallel with $\Sigma_{\minone}^\drt(\beta^\drt)$, we introduce $\Sigma_{\mintwo}^\drt(\delta^\drt)$ to represent the asymptotic variance of the influence function defined based on $(X, W)$. The following theorem shows the variance comparison of ${\tau}_{\minone}^{\dr}(\alpha^{\ast}, \hat{\beta})$ and ${\tau}^{\dr}_{\mintwo}(\alpha^{\ast}, \gamma^{\ast}, \hat{\delta})$:
%In this context, we consider  a comparative analysis to examine various adjustments using improved estimators when propensity scores are fully known. We consider employing the definition of variance as described in  \eqref{eq:asyvar1} and \eqref{eq:asyvar2}.  With a slight abuse of notation, to accommodate various covariate scenarios, we contemplate using $ \Sigma_{\minone} ^ {{\red \ipw}}t({  { {\beta}}_{\cc}^ {{\red \ipw}}t})$ to replace the original  definition  $\Sigma_{\minone} ^{ {{\red \ipw}}t}( { {\beta}}_{\cc}^ {{\red \ipw}}t )$ to represent the asymptotic variance according to different adjustment sets. Similar simplifications can be applied to other adjustment sets. The following theorem demonstrates the order of asymptotic variances among the four types of improved estimators.
 \begin{theorem}
\label{thm:com-drt}
Assuming that Assumptions \ref{assump: positivity}-\ref{cond:two-res-aipw}  hold,  when the propensity scores     $e_\minone(X;\alpha^\ast)$ and $e_\mintwo(X,W;\alpha^\ast,\gamma^\ast)$  are fully known,  we have $ 
  \Sigma_{\mintwo} ^\drt (   { { {\delta}} ^\drt})\leq  \Sigma_{\minone} ^\drt(   { {\beta}} ^\drt )$. 
\end{theorem}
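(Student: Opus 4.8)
The plan is to compute both asymptotic variances in closed form, peel off the part that depends on the outcome working model, and compare the remainder. Throughout I write $m_z(X,W)=\E(Y_z\mid X,W)$ and recall that under Assumption~\ref{assump: negative-control} we have $Q_\minzero(X,z)=\E(Y_z\mid X)=\E\{m_z\mid X\}$, and crucially $e_\minzero(X)=e_\minzero(X,W)$, so both candidate estimators carry the \emph{same} weight $e_\minzero(X)$. Using Lemma~\ref{lem: if} and its stated analogue for $(X,W)$, the relevant influence functions $\varphi_\minone^\dr$ and $\varphi_\mintwo^\dr$ are of the form \eqref{eq:eif-dr} evaluated at the true propensity score. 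Applying the law of total variance conditional on $X$ (resp.\ $(X,W)$), I would decompose each asymptotic variance into three pieces: (a) $\var\{Q_\minzero(X,1)-Q_\minzero(X,0)\}$ coming from $\E(\varphi_\minone^\dr\mid X)$ (resp.\ $\var\{m_1-m_0\}$); (b) $\E\{\sigma_1^2(X)/e_\minzero+\sigma_0^2(X)/(1-e_\minzero)\}$ with $\sigma_z^2(X)=\var(Y_z\mid X)$ (resp.\ the same with $\var(Y_z\mid X,W)$), which is free of the outcome model; and (c) the outcome-model term, namely the expectation-of-a-square displayed just above \eqref{eq: dr-opt-est}, which I call $R_\minone(\beta)$ (resp.\ $R_\mintwo(\delta)$). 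Since minimizing over the outcome parameter affects only piece (c), $\Sigma_\minone^\drt(\beta^\drt)=(a)_\minone+(b)_\minone+\min_\beta R_\minone(\beta)$ and likewise for $(X,W)$.

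Next I would control the model-free gap $D:=[(a)+(b)]_\minone-[(a)+(b)]_\mintwo$. Applying the law of total variance twice---once as $\sigma_z^2(X)=\E\{\var(Y_z\mid X,W)\mid X\}+\var(m_z\mid X)$, so that $(b)_\minone-(b)_\mintwo=\E\{\var(m_1\mid X)/e_\minzero+\var(m_0\mid X)/(1-e_\minzero)\}$, and once on $\var\{m_1-m_0\}$, so that $(a)_\minone-(a)_\mintwo=-\E\{\var(m_1-m_0\mid X)\}$---the terms recombine into the single conditional variance
\[
D=\E\Big[\var\Big\{\sqrt{\tfrac{1-e_\minzero}{e_\minzero}}\,m_1+\sqrt{\tfrac{e_\minzero}{1-e_\minzero}}\,m_0\ \Big|\ X\Big\}\Big]\ge 0 .
\]
Being an expectation of a conditional variance this is manifestly nonnegative; equivalently it is the inequality $\E\{\var(m_1\mid X)/e_\minzero+\var(m_0\mid X)/(1-e_\minzero)\}\ge\E\{\var(m_1-m_0\mid X)\}$, which also follows from a Cauchy--Schwarz bound on $\cov(m_1,m_0\mid X)$.

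The comparison is then finished by the case split in Assumption~\ref{cond:two-res-aipw}. Under part~(ii), $Q_\mintwo^\dr$ is correctly specified, so some $\delta_0$ makes the integrand of $R_\mintwo$ vanish; hence $\min_\delta R_\mintwo=R_\mintwo(\delta^\drt)=0$ and $\Sigma_\mintwo^\drt(\delta^\drt)$ collapses to the $(X,W)$ efficiency bound $(a)_\mintwo+(b)_\mintwo$. Consequently $\Sigma_\minone^\drt(\beta^\drt)-\Sigma_\mintwo^\drt(\delta^\drt)=D+\min_\beta R_\minone(\beta)\ge 0$, since both summands are nonnegative. Under part~(i) (nesting), for every $\beta$ there is a $\delta_\beta$ with $Q_\mintwo^\dr(X,W,z;\delta_\beta)\equiv Q_\minone^\dr(X,z;\beta)$; because the weights coincide the resulting $(X,W)$ estimator is numerically identical to the $X$ estimator, whence $\Sigma_\mintwo^\drt(\delta_\beta)=\Sigma_\minone^\drt(\beta)$. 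Taking $\beta=\beta^\drt$ gives $\Sigma_\mintwo^\drt(\delta^\drt)=\min_\delta\Sigma_\mintwo^\drt(\delta)\le\Sigma_\mintwo^\drt(\delta_{\beta^\drt})=\Sigma_\minone^\drt(\beta^\drt)$, as claimed.

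The main obstacle is the bookkeeping in the variance decomposition: getting pieces (a)--(c) right and verifying that the two applications of the law of total variance make $D$ collapse to a single conditional variance, which is what renders $D\ge0$ transparent. The one genuinely conceptual step is recognizing, in the correctly-specified case~(ii), that minimizing over $\delta$ drives $R_\mintwo$ to zero so that $\Sigma_\mintwo^\drt(\delta^\drt)$ is exactly the $(X,W)$ semiparametric bound, after which $D\ge0$ and $R_\minone\ge0$ supply the improvement; a secondary point is to invoke the existence of the minimizers $\beta^\drt,\delta^\drt$ from the hypotheses. Note that part~(i) needs only the inclusion argument and not $D\ge0$, although the identity $\Sigma_\mintwo^\drt(\delta_\beta)=\Sigma_\minone^\drt(\beta)$ is exactly the statement that $D$ cancels the extra conditional-variance term appearing in $R_\mintwo(\delta_\beta)$.
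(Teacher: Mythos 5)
Your proposal is correct and follows essentially the same route as the paper's proof in Section \ref{sec:var-tps}: under Assumption \ref{cond:two-res-aipw}(i) you use exactly the paper's argument of embedding the $X$-model into the $(X,W)$-model (your $\delta_{\beta^{\drt}}$ is the paper's zero-padded $\tilde\delta$), so the two estimators and hence their asymptotic variances coincide, after which minimality of $\delta^{\drt}$ finishes; under Assumption \ref{cond:two-res-aipw}(ii) you reduce $\Sigma_{\mintwo}^{\drt}(\delta^{\drt})$ to the $(X,W)$ bound and dominate $\Sigma_{\minone}^{\drt}(\beta^{\drt})$ by the $X$ bound plus a nonnegative model term, which is the paper's $\mathcal{M}_1(\beta)\ge 0$ step. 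Your only cosmetic departure is in comparing the two bounds: the identity $D=\E\bigl[\mathrm{var}\bigl\{\sqrt{(1-e_{\minzero})/e_{\minzero}}\,m_1+\sqrt{e_{\minzero}/(1-e_{\minzero})}\,m_0 \mid X\bigr\}\bigr]\ge 0$, obtained via two applications of the law of total variance, is precisely the paper's Jensen inequality applied to $Q_{\minzero}(X,W,z)$ versus $Q_{\minzero}(X,z)$, since $\E\{g^2\}-\E[\{\E(g\mid X)\}^2]=\E\{\mathrm{var}(g\mid X)\}$.
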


Theorem \ref{thm:com-drt}  indicates that, when the propensity scores are fully known, incorporating predictive covariates ${W}$ into the improved  DR  estimator further enhances estimation efficiency.  \cite{craycroft2020propensity} has shown that using true confounders and outcome predictors leads to a smaller semiparametric efficiency bound compared to considering only the true confounders. However, to achieve the semiparametric efficiency bound, usual DR estimators typically require correct specifications for both working models \citep{Scharfstein1999JASA, bang2005doubly, kang2007demystifying} or the application of nonparametric estimation methods \citep{hirano2003efficient,chernozhukov2018double}. Theorem \ref{thm:com-drt} implies that our proposed improved DR estimators can achieve similar conclusions even when the outcome models are misspecified. In cases where the propensity score is completely wrong, as discussed earlier, the asymptotic variances   are no longer equivalent to $\Sigma_{\mintwo}^\drt(\delta^\drt)$ and $\Sigma_{\minone}^\drt(\beta^\drt)$, making it infeasible to determine explicit  orders of asymptotic variances.

%  \begin{theorem}
% \label{THM: KNOWN-PROP} \blue  
% Let $\hat{ {\beta}}_{k}    $ be any  consistent estimator converging element-wise to some ${ {\beta}}_{k}  $  in probability, that is, $\hat{ {\beta}}_{k}  -{ {\beta}}_{k}  =o_p(1)$.  
% Assuming that Assumptions \ref{assump: negative-control}-\ref{assump: positivity} hold and the propensity score is known, we have $$\sqrt{n}\{{ {{\tau}}_{k}^ {{\red \ipw}}(\hat{ {\beta}}_{k})}-{ {{\tau}}_{k}^ {{\red \ipw}}({ {\beta}}_{k})}\}=o_p(1),\;\sqrt{n}\{{ {{\tau}}_{k}^\hjk(\hat{ {\beta}}_{k})}-{ {{\tau}}_{k}^\hjk({ {\beta}}_{k})}\}=o_p(1) .$$
% \end{theorem}  

\section{Improved DR estimators with estimated propensity score}
\label{sec:com-analy}
\subsection{General strategies}
\label{sec:com-analy-est}
 \label{sec:fixed-ps} 
 \label{sec:eps}
  In observational studies, the true propensity score is often unknown and needs to be estimated. To estimate it, a parametric model $e_{\minone}(  { X};{\alpha} )$ is assumed, which involves the  nuisance parameter ${\alpha} $.  
Let $\hat{{\alpha}} $ be the maximum likelihood estimator of ${\alpha}  $ based on $\{(Z_i,  { X}_i)\}_{i=1}^n$.   To begin, we need to write the log-likelihood function for the propensity score model,
\begin{align*} \log f ( { { X}},Z;{ {\alpha}} )& =Z\log\left\{e_{\minone}({ { X}};{ {\alpha}} )\right\}+(1-Z)\log\left\{1-e_{\minone}{({ { X}};{ {\alpha}} )}\right\},
\end{align*}
where $f ( { { X}},Z;{ {\alpha}} )$ is the observed likelihood with respect to  $Z$ and $ { X}$.  The estimate  $ \hat{ {\alpha}}  $ that maximizes the log-likelihood is solution to  the estimating  equations,
\begin{align} 
\label{eq:scorefunction}
{\sum}_{i=1}^n  S_{\alpha}({ X}_{ i
},{Z}_i  ; {{\alpha}}  ) =0 ,   
\end{align}
 where $S_{\alpha}(X, Z; \alpha) = \partial \log f(X, Z; \alpha)/\partial \alpha$ represents the score function. 
Equation \eqref{eq:scorefunction} is sufficient to characterize the estimation of the propensity score model $e_{\minone}(X; \alpha)$. Similarly, we can also consider   maximum likelihood estimation for the propensity score model $e_{\mintwo}(X, W; \alpha, \gamma)$. In addition to replacing $e_{\minone}(X; \alpha)$ in \eqref{eq:scorefunction} with $e_{\mintwo}(X, W; \alpha, \gamma)$, we also need to account for the effect of including additional covariates $W$ in the propensity score model $e_{\mintwo}(X, W; \alpha,\gamma)$. Therefore, we include an additional estimating  equation: 
\begin{align}
\label{eq:ps-est-W}
{\sum}_{i=1}^n  S_{\gamma}({ X}_{ i
},W_i,{Z}_i  ; {{\alpha}}  , {{\gamma}} ) =0 ,  
\end{align}
where $S_{\gamma}(X, W, Z; \alpha, \gamma)$ similarly denotes the score function with respect to the parameter $\gamma$ for the propensity score model $e_{\mintwo}(X, W; \alpha, \gamma)$.  In the following discussion, we will only present the  theoretical results for  ${\tau}_{\minone}^{\dr}( \hat{\alpha}  ,\hat{\beta} )$, with similar conclusions for  ${\tau}^{\dr}_{\mintwo}(\hat {\alpha}  ,\hat\gamma ,\hat{\delta} )$.

% The remaining discussion regarding estimators  ${\tau}_{\minone}^{\dr}( \hat{\alpha}  ,\hat{\beta} )$  and ${\tau}^{\dr}_{\mintwo}(\hat {\alpha}  ,\hat\gamma ,\hat{\delta} )$ will follow a similar theoretical framework. To simplify the exposition, we only  present  the results for   ${\tau}_{\minone}^{\dr}( \hat{\alpha}  ,\hat{\beta} )$. 
% However, it is essential to emphasize that because these estimators involve different estimating equations for propensity scores, the analysis is not as straightforward as in the previous section. 

\begin{lemma}
\label{lem:if-eps}
      Let $\hat{{\beta}}$ be any root-$n$ consistent estimator converging in probability to some ${\beta}^\ast$, that is, $\hat{{\beta}}-{\beta}^\ast=O_p(n^{-1/2})$. When the propensity score is correctly specified,  but $Q^\dr_{\minone}({ X}, Z;{\beta})$ may or may not be,  the influence function  corresponding to the estimator $ \tau_{\minone}^\dr(\hat\alpha,\hat\beta)$ can be expressed as follows:
 \begin{equation}
     \label{if:dr-est-3}  
\tilde\varphi^\dr_{\minone}(Y, { X},Z;{\alpha}_{\minzero} , {{\beta}}^*)=\varphi^\dr_{\minone}(Y, { X},Z;{ {\alpha}}_{\minzero}, {{\beta}}^*)-{{\mathcal{D}}}_{\minone}^\dr( {{\beta}}^*) {\mathcal{H}}_{{\alpha} {\alpha},  \minzero}^{-1} S_{\alpha}({ X}, Z ;{ {\alpha}}_{\minzero}),
 \end{equation} where $\varphi^\dr_{\minone}(Y, { X},Z;{ {\alpha}}_{\minzero}, {{\beta}} )$ is defined earlier in \eqref{eq:eif-dr},  $
    {{\mathcal{D}}}_{\minone}^\dr({\beta})= -
\E\left\{\partial \varphi^\dr_{\minone}(Y, { X},Z;{ {\alpha}}_{\minzero}, {\beta}) / \partial {\alpha}^\T\right\}  $,  and ${\mathcal{H}}_{{\alpha} {\alpha},  \minzero}=\E\left\{S_{\alpha}({ X}, Z;{ {\alpha}}_{\minzero}) S^\T_{\alpha}({ X}, Z;{ {\alpha}}_{\minzero})\right\}$.  
\end{lemma}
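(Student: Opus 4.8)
The plan is to treat $\hat\alpha$, $\hat\beta$, and $\hat\tau\equiv\tau^\dr_\minone(\hat\alpha,\hat\beta)$ jointly as a stacked $M$-estimator and to linearize the defining equations about their probability limits $(\alpha_\minzero,\beta^*,\tau)$. Because $\tau^\dr_\minone(\hat\alpha,\hat\beta)$ is exactly the empirical mean in \eqref{eq: dr-X}, the estimator $\hat\tau$ solves $\mathbb{P}_n\varphi^\dr_\minone(Y,X,Z;\hat\alpha,\hat\beta)=0$ with the true ATE replaced by $\hat\tau$ inside $\varphi^\dr_\minone$, while $\hat\alpha$ solves the score equation \eqref{eq:scorefunction} and $\hat\beta$ is any root-$n$ consistent estimator with limit $\beta^*$. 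A first-order Taylor expansion of $\mathbb{P}_n\varphi^\dr_\minone$ about $(\alpha_\minzero,\beta^*,\tau)$ then takes the form
\[
0 = \mathbb{P}_n\varphi^\dr_\minone(Y,X,Z;\alpha_\minzero,\beta^*) + \bar{A}_\alpha(\hat\alpha-\alpha_\minzero) + \bar{A}_\beta(\hat\beta-\beta^*) - (\hat\tau-\tau) + o_p(n^{-1/2}),
\]
where $\bar{A}_\alpha$ and $\bar{A}_\beta$ are the sample averages of $\partial\varphi^\dr_\minone/\partial\alpha^\T$ and $\partial\varphi^\dr_\minone/\partial\beta^\T$, and the coefficient of $\hat\tau-\tau$ is $-1$ since $\partial\varphi^\dr_\minone/\partial\tau=-1$.

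The crux is to evaluate the two sensitivity coefficients under a correctly specified propensity score. By the law of large numbers $\bar{A}_\alpha\to\E\{\partial\varphi^\dr_\minone/\partial\alpha^\T\}=-\mathcal{D}_\minone^\dr(\beta^*)$, which is precisely the matrix appearing in the statement. The key step is to show that $\bar{A}_\beta$ is asymptotically negligible: differentiating the only $\beta$-dependent terms of $\varphi^\dr_\minone$ gives
\[
\E\left\{\frac{\partial\varphi^\dr_\minone}{\partial\beta^\T}\right\} = \E\left[-\frac{Z-e_\minone(X;\alpha_\minzero)}{e_\minone(X;\alpha_\minzero)}\,Q^\dr_{\minone\beta}(X,1;\beta^*) + \frac{e_\minone(X;\alpha_\minzero)-Z}{1-e_\minone(X;\alpha_\minzero)}\,Q^\dr_{\minone\beta}(X,0;\beta^*)\right],
\]
and conditioning on $X$ together with $e_\minone(X;\alpha_\minzero)=e_\minzero(X)=\E(Z\mid X)$ makes both summands vanish. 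Hence $\bar{A}_\beta\to 0$ in probability, so $\bar{A}_\beta(\hat\beta-\beta^*)=o_p(1)\cdot O_p(n^{-1/2})=o_p(n^{-1/2})$; this is exactly the mechanism, already signalled by Lemma \ref{lem: if}, by which the sampling variation of $\hat\beta$ drops out and only its limit $\beta^*$ enters. I expect this $\beta$-orthogonality to be the main obstacle, since it is what lets us avoid tracking the (possibly complicated) influence function of $\hat\beta$ and its coupling with $\hat\alpha$ altogether, regardless of how $\hat\beta$ is actually constructed.

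Finally, I would substitute the standard maximum-likelihood linearization $\hat\alpha-\alpha_\minzero=\mathcal{H}_{\alpha\alpha,\minzero}^{-1}\mathbb{P}_n S_\alpha(X,Z;\alpha_\minzero)+o_p(n^{-1/2})$ — valid because, under a correct propensity model, the information identity gives $\E\{\partial S_\alpha/\partial\alpha^\T\}=-\mathcal{H}_{\alpha\alpha,\minzero}$ — into the expansion and solve for $\hat\tau-\tau$. Collecting terms yields
\[
\sqrt{n}(\hat\tau-\tau) = \frac{1}{\sqrt{n}}\sum_{i=1}^n\left\{\varphi^\dr_\minone(Y_i,X_i,Z_i;\alpha_\minzero,\beta^*) - \mathcal{D}_\minone^\dr(\beta^*)\,\mathcal{H}_{\alpha\alpha,\minzero}^{-1}\,S_\alpha(X_i,Z_i;\alpha_\minzero)\right\} + o_p(1),
\]
which identifies the influence function as the claimed $\tilde\varphi^\dr_\minone$ in \eqref{if:dr-est-3}; its mean-zero property follows from $\E\{\varphi^\dr_\minone(\cdot;\alpha_\minzero,\beta^*)\}=0$ (Lemma \ref{lem: if}) and $\E\{S_\alpha\}=0$. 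Throughout I would invoke the standard $M$-estimation regularity conditions of \citet{stefanski2002calculus} — smoothness, dominated derivatives, and nonsingularity of $\mathcal{H}_{\alpha\alpha,\minzero}$ — to justify the Taylor expansions and the uniform laws of large numbers used above.
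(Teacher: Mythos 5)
Your proposal is correct and follows essentially the same route as the paper's proof: the paper likewise first removes the sampling variation of $\hat{\beta}$ by noting that the $\beta$-sensitivity $\E\{\partial \varphi^\dr_{\minone}/\partial \beta^\T\}$ vanishes under a correct propensity score (the mechanism of Lemma \ref{lem: if}), then Taylor-expands in $\hat{\alpha}$ about $\alpha_\minzero$ and substitutes the maximum-likelihood linearization $\sqrt{n}(\hat{\alpha}-\alpha_\minzero)=\mathcal{H}_{\alpha\alpha,\minzero}^{-1}\,n^{-1/2}\sum_i S_\alpha(X_i,Z_i;\alpha_\minzero)+o_p(1)$ to collect the influence function \eqref{if:dr-est-3}. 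Your packaging of the two expansions as one stacked $M$-estimation system rather than the paper's sequential argument is only a cosmetic difference.
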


Compared with Lemma \ref{lem: if}, the influence function  \eqref{if:dr-est-3} involves an additional term due to the estimation of the nuisance parameter ${\alpha}$. However, this additional term disappears when both models are correctly specified. We can calculate the asymptotic variance of the influence function \eqref{if:dr-est-3}, denoted as $\Sigma_{\minone}^\dre( \beta)$ for any fixed $\beta$. Here we use `$\mathrm{eps}$'  to represent the asymptotic variance obtained under the estimated propensity score. Define $ {\mathcal{D}}_{\minone\beta}^\dr({\beta})=    \partial{\mathcal{D}}_{\minone}^\dr({\beta})/ \partial {\beta}$, $e_{\minone\alpha}({ X};{\alpha} )=    \partial e_{\minone}({ X};{\alpha} )/ \partial {\alpha}$
and $\tilde e_{\minzero}(X,Z)=Ze_{\minzero}(X)+(1-Z)\{1-e_{\minzero}(X)\}$. The minimizer of the asymptotic variances $\Sigma_{\minone}^\dre( \beta)$ is denoted as ${\beta}^{\dre}$, which is the solution to:
%     \begin{align}
%     \label{eq: if-est-ipw}
%     \E\begin{bmatrix}
%         \dfrac{1 }{e_{\minzero} ({ X})}  \left\{ Q_{{\beta}}^ {{ \ipw}}({ X},1; {{\beta}})+ {\mathcal{D}}_{\minone\beta}^ {{ \ipw}}( {{\beta}}) {\mathcal{H}}_{{\alpha} {\alpha},  \minzero}^{-1} e_{\minone\alpha}(X;{\alpha}) \right\}\\\times \left\{Q^ {{ \ipw}}_0({ X},1 )-Q^ {{ \ipw}}({ X},1; {{\beta}})-{{\mathcal{D}}}_{\minone}^ {{\red \ipw}}({{\beta}}) {\mathcal{H}}_{{\alpha} {\alpha},  \minzero}^{-1} e_{\minone\alpha}(X;{\alpha})\right\}\\   \addlinespace[1mm]+\dfrac{1 }{ 1-e_{\minzero}({ X}) }  \left\{ Q^ {{\red \ipw}}_{{\beta}}({ X},0; {{\beta}})+\red {\mathcal{D}}_{\minone\beta}^ {{\red \ipw}}( {{\beta}}) {\mathcal{H}}_{{\alpha} {\alpha},  \minzero}^{-1} e_{\minone\alpha}(X;{\alpha}) \right\}\\\times \left\{Q^ {{\red \ipw}}_0({ X},0 )-Q^ {{\red \ipw}}({ X},0; {{\beta}})-{{\mathcal{D}}}_{\minone}^ {{\red \ipw}}({{\beta}}) {\mathcal{H}}_{{\alpha} {\alpha},  \minzero}^{-1} e_{\minone\alpha}(X;{\alpha})\right\}
%     \end{bmatrix} =0,
% \end{align} 
    \begin{align}
    \label{eq: if-est-dr}
    \E&  \begin{pmatrix}  \dfrac{(-1)^{1-Z}}{\tilde e_{\minzero}(X,Z)}  \left\{Y- Q_{\minone}^\dr({ X},Z; {\beta})-{\mathcal{D}}_{\minone}^\dr( {\beta}) \mathcal{H}_{{\alpha} {\alpha},  \minzero}^{-1} e_{\minone\alpha}({ X};{\alpha}_{\minzero}) \right\}
  \\\addlinespace[1mm] \times \begin{bmatrix}
        \dfrac{Z-e_{\minzero}({ X}) }{e_{\minzero}({ X})}  \left\{ Q^\dr_{{\minone\beta}}({ X},1; {{\beta}})+{\mathcal{D}}_{\minone\beta}^\dr( {{\beta}}) {\mathcal{H}}_{{\alpha} {\alpha},  \minzero}^{-1} e_{\minone\alpha}({ X};{\alpha}_{\minzero}) \right\}\\\addlinespace[1mm]-\dfrac{ e_{\minzero}({ X})-Z }{ 1-e_{\minzero}({ X})}   \left\{ Q^\dr_{{\minone\beta}}({ X},0; {{\beta}})+{\mathcal{D}}_{\minone\beta}^\dr( {{\beta}}) {\mathcal{H}}_{{\alpha} {\alpha},  \minzero}^{-1} e_{\minone\alpha}({ X};{\alpha}_{\minzero}) \right\}
    \end{bmatrix} 
    \end{pmatrix}   =    0.
\end{align} Detailed derivations of    $\Sigma_{\minone}^\dre( \beta)$  are relegated to Section \ref{sec:der-est-expression} of supplementary material.  
Let $\hat{\mathcal{H}}_{{\alpha} {\alpha}}=\mathbb{P}_n\left\{S_{\alpha}({ X},Z; \hat{{\alpha}}) S^\T_{\alpha}({ X},Z; \hat{{\alpha}})\right\}$, $ \hat{{\mathcal{D}}}^\dr_{\minone}({\beta})=-\mathbb{P}_n\left\{\partial  {\varphi}^\dr_\minone(Y, X,Z; \hat{{\alpha}}, {\beta}) / \partial {\alpha}^\T\right\}$,   ${\hat{{\mathcal{D}}}} _{\minone\beta}^\dr({\beta})=-\mathbb{P}_n\left\{\partial^2  {\varphi}_\minone^\dr(Y,{ X}, Z; \hat{{\alpha}},{\beta}) / \partial {\alpha}^\T \partial {\beta}\right\}$, and $e_{\minone} ({ X},Z;\hat{\alpha})=Ze_{\minone} (X;\hat\alpha)+(1-Z)\{1-e_{\minone} (X;\hat\alpha)\}$.  We hence consider   $\hat{\beta}^{\dre}$ as the solution to the sample version of \eqref{eq: if-est-dr},   
\begin{equation}
    \label{eq:solve-est-eq}
        \begin{aligned}
   \mathbb{P}_n   \begin{pmatrix} \dfrac{(-1)^{1-Z}}{\tilde e (X,Z;\hat\alpha)}  \left\{Y- Q_{\minone}^\dr({ X},Z; {\beta})-\hat{\mathcal{D}}_{\minone}^\dr( {\beta}) \hat{\mathcal{H}}_{{\alpha} {\alpha} }^{-1} e_{\minone\alpha}({ X};{\hat\alpha}) \right\}
  \\\addlinespace[1mm] \times \begin{bmatrix}
        \dfrac{Z- e_{\minone} ({ X};{\hat\alpha})}{ e_{\minone} ({ X};{\hat\alpha})}  \left\{ Q^\dr_{{\minone\beta}}({ X},1; {{\beta}})+\hat{\mathcal{D}}_{\minone\beta}^\dr( {{\beta}}) \hat{\mathcal{H}}_{{\alpha} {\alpha},  \minzero}^{-1} e_{\minone\alpha}({ X};{\hat\alpha}) \right\}\\\addlinespace[1mm]-\dfrac{  e_{\minone} ({ X};{\hat\alpha})-Z }{ 1- e_{\minone} ({ X};{\hat\alpha})}   \left\{ Q^\dr_{{\minone\beta}}({ X},0; {{\beta}})+\hat{\mathcal{D}}_{\minone\beta}^\dr( {{\beta}}) \hat{\mathcal{H}}_{{\alpha} {\alpha},  \minzero}^{-1} e_{\minone\alpha}({ X};{\hat\alpha}) \right\}
    \end{bmatrix} 
    \end{pmatrix} =0 ,
    % \\
    %   \mathbb{P}_n\begin{bmatrix}
    %        \dfrac{1-e(X;\hat\alpha) }{e ^2({ X},Z;\hat{\alpha})}  \left\{ Q^\dr_{{\minone\beta}}({ X},Z; {{\beta}})+{\hat{{\mathcal{D}}}} _{\minone\beta}^\dr( {{\beta}}) \hat{\mathcal{H}}_{{\alpha} {\alpha}}^{-1}  e_{\minone\alpha}({ X};\hat{\alpha}) \right\}\\\addlinespace[1.5mm]\times \left\{Y-Q_{\minone}^\dr({ X},Z; {{\beta}})-{\hat{\mathcal{D}}}_{\minone} ^\dr( {{\beta}}) \hat{\mathcal{H}}_{{\alpha} {\alpha}}^{-1}  e_{\minone\alpha}({ X};\hat{\alpha})\right\} 
    %   \end{bmatrix}=0.
\end{aligned} 
\end{equation} where $\tilde e_{\minone} (X,Z;\hat\alpha)=Z  e_{\minone} (X;\hat\alpha)+(1-Z)\{1-e_{\minone} (X;\hat\alpha)\}$.  When the propensity score is correct, $\hat{\beta}^\dre\stackrel{p}{\rightarrow} {\beta}^\dre$; {and when the outcome model is correct, $\hat{\beta}^\dre \stackrel{p}{\rightarrow} {{\beta}}_0$.}  Therefore, the improved DR estimator ${\tau}^\dr_\minone(\hat\alpha,\hat\beta^\dre)$  maintains its double robustness property under the estimated propensity score.   When both models are correctly specified, using the estimated propensity score can also achieve the semiparametric efficiency bound, as discussed in Section \ref{sec:dr-ipw-eps}  of supplementary material. The following theorem establishes that the improved DR estimator is   
 asymptotically normal  when either nuisance model is correctly specified.
\begin{theorem} 
\label{thm:eps}  Given Assumptions \ref{assump: positivity} and \ref{assump: negative-control}, 
  when either the propensity score or the outcome model is correctly specified, we have $
\sqrt{n} \{\tau^\dr_\minone(\hat{\alpha},\hat{\beta}^\dre)-\tau \} \stackrel{d}{\rightarrow} N(0, \Lambda^\dre_\minone )    $, where the detailed expressions of  $ {{\Lambda}_{\minone} ^\dre}$ can be found in  Section \ref{sec:prof-thm3} of supplementary material.
\end{theorem}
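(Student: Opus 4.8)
The plan is to cast the whole procedure as a single stacked $M$-estimation problem and apply the standard sandwich-variance central limit theorem, as the statement anticipates through \citet{stefanski2002calculus}. Because the $\beta$-estimating equation \eqref{eq:solve-est-eq} contains the embedded empirical quantities $\hat{\mathcal{H}}_{\alpha\alpha}$, $\hat{\mathcal D}^\dr_\minone$ and $\hat{\mathcal D}^\dr_{\minone\beta}$, its summands are not yet i.i.d.\ in $\beta$ alone. First I would enlarge the parameter to $\theta=(\alpha^\T,\mathrm{vec}(\mathcal H_{\alpha\alpha})^\T,\mathrm{vec}(\mathcal D^\dr_\minone)^\T,\mathrm{vec}(\mathcal D^\dr_{\minone\beta})^\T,\beta^\T,\tau)^\T$ and adjoin to \eqref{eq:scorefunction} and \eqref{eq:solve-est-eq} the defining moment conditions $\mathbb P_n\{S_\alpha S_\alpha^\T-\mathcal H_{\alpha\alpha}\}=0$, $\mathbb P_n\{-\partial\varphi^\dr_\minone/\partial\alpha^\T-\mathcal D^\dr_\minone\}=0$, the analogous one for $\mathcal D^\dr_{\minone\beta}$, and the final line $\mathbb P_n\{\varphi^\dr_\minone(Y,X,Z;\alpha,\beta)-\tau\}=0$ coming from \eqref{eq: dr-X}--\eqref{eq:eif-dr}. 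With this augmentation every component is an honest average of i.i.d.\ terms $\psi(O_i;\theta)$, $O_i=(Y_i,X_i,Z_i)$, and $\hat\theta$ solves $\mathbb P_n\psi(\cdot;\hat\theta)=0$.

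Next I would establish consistency $\hat\theta\stackrel{p}{\to}\theta^\ast$ using facts already recorded: $\hat\alpha\to\alpha_\minzero$ under a correct propensity score (or to a pseudo-true $\alpha^\ast$ otherwise), the convergence $\hat\beta^\dre\to\beta^\dre$ when the propensity score is correct and $\hat\beta^\dre\to\beta_0$ when the outcome model is correct, and the double robustness of $\tau^\dr_\minone(\hat\alpha,\hat\beta^\dre)$ noted after \eqref{eq:solve-est-eq}. A Taylor expansion of $\mathbb P_n\psi(\cdot;\hat\theta)=0$ about $\theta^\ast$, together with the law of large numbers for $A=\E\{\partial\psi/\partial\theta^\T\}$ and the central limit theorem for $\sqrt n\,\mathbb P_n\psi(\cdot;\theta^\ast)$, gives $\sqrt n(\hat\theta-\theta^\ast)\stackrel{d}{\to}N(0,A^{-1}B A^{-\T})$ with $B=\E\{\psi\psi^\T\}$; reading off the last coordinate yields the claimed $N(0,\Lambda^\dre_\minone)$, and $\Lambda^\dre_\minone$ is the corresponding sandwich entry deferred to the supplement.

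It then remains to identify $\Lambda^\dre_\minone$ in each regime. When the propensity score is correct this is immediate from Lemma \ref{lem:if-eps}: the $\tau$-coordinate of the influence function is exactly $\tilde\varphi^\dr_\minone(Y,X,Z;\alpha_\minzero,\beta^\dre)$, so $\Lambda^\dre_\minone=\var\{\tilde\varphi^\dr_\minone(Y,X,Z;\alpha_\minzero,\beta^\dre)\}$. The sampling variation of $\hat\beta^\dre$ drops out by Neyman-type orthogonality: differentiating $\varphi^\dr_\minone$ in \eqref{eq:eif-dr} in $\beta$ and conditioning on $X$ produces factors $\E(Z-e_\minzero\mid X)=0$ once $e_\minone(X;\alpha_\minzero)=e_\minzero(X)$, so the relevant $\beta$-block of $A$ vanishes and $\hat\beta^\dre$ enters only through its probability limit, exactly as in Lemma \ref{lem: if}. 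In the complementary regime, where the outcome model is correct but the propensity score need not be, this orthogonality fails and both $\hat\alpha$ and $\hat\beta^\dre$ contribute; here one keeps the full sandwich form, using that $\hat\beta^\dre\to\beta_0$ forces $Q^\dr_\minone(X,Z;\beta_0)=Q_\minzero(X,Z)$ so the augmentation annihilates the weighting errors and $\tau^\dr_\minone(\hat\alpha,\hat\beta^\dre)$ remains consistent and asymptotically linear.

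The main obstacle I anticipate is the bookkeeping created by the nested empirical quantities $\hat{\mathcal H}_{\alpha\alpha}$, $\hat{\mathcal D}^\dr_\minone(\beta)$ and $\hat{\mathcal D}^\dr_{\minone\beta}(\beta)$ inside \eqref{eq:solve-est-eq}: one must verify that treating them as extra $M$-estimation coordinates does not alter the $\tau$-row of $A^{-1}BA^{-\T}$ beyond the $-\mathcal D^\dr_\minone\mathcal H_{\alpha\alpha}^{-1}S_\alpha$ correction already present in \eqref{if:dr-est-3}. Concretely, the cross-derivatives of the final $\tau$-equation with respect to $\mathrm{vec}(\mathcal H_{\alpha\alpha})$ and $\mathrm{vec}(\mathcal D^\dr_\minone)$ are zero because $\varphi^\dr_\minone$ in \eqref{eq:eif-dr} does not involve these quantities, so their estimation is asymptotically negligible for $\tau$ and the projection onto the propensity score is produced solely by the $\hat\alpha$-block, reproducing \eqref{if:dr-est-3}. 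Checking these block-triangularity relations in $A$, and confirming the requisite smoothness and moment conditions for the expansion, is where the real work lies.
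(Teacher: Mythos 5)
Your proposal is correct and takes essentially the same route as the paper: the supplementary proof of Theorem \ref{thm:eps} likewise stacks the score equation \eqref{eq:scorefunction}, the $\beta$-equation \eqref{eq:solve-est-eq}, the moment conditions defining the auxiliary nuisance quantities (the paper's $\psi_1,\ldots,\psi_q$, $\phi_1,\ldots,\phi_q$, and $\xi$ are exactly your $\mathrm{vec}(\mathcal{H}_{\alpha\alpha})$, $\mathrm{vec}(\mathcal{D}^{\dr}_{\minone\beta})$, and $\mathrm{vec}(\mathcal{D}^{\dr}_{\minone})$ coordinates), and the final $\tau$-equation into one $M$-estimation system, then reads $\Lambda^{\dre}_{\minone}$ off the last diagonal entry of the sandwich $B^{-1}D B^{-\T}$ via the block-triangular structure recorded in Section \ref{ssec:Preliminary}. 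Your regime-by-regime identification of the limit --- Lemma \ref{lem:if-eps} giving $\tilde\varphi^{\dr}_{\minone}$ when the propensity score is correct, and $\hat\beta^{\dre}\stackrel{p}{\rightarrow}\beta_0$ preserving consistency when the outcome model is correct --- matches the paper's supporting arguments in Sections \ref{sec:der-est-expression} and \ref{sec:dr-ipw-eps}.
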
 

   When the propensity score model is correctly specified, we know that  ${{\beta}}^{\dre}$  is the minimizer of the asymptotic variance of   \eqref{if:dr-est-3} and ${{\Lambda}_{\minone} ^\dre}  =\Sigma_{\minone}^\dre(\beta^\dre)$. In such  scenario, for the  potentially misspecified outcome working models, we have:
\begin{equation}
\label{eq:ipw-var-ps}
\begin{aligned} 
  \Sigma_{\minone}^{\dre} ({\beta})& =\Sigma_{\minone}^{\drt} ({\beta})  -{{\mathcal{D}}}_{\minone}^\dr( {{\beta}} ) {\mathcal{H}}_{{\alpha} {\alpha},  \minzero}^{-1} {{\mathcal{D}}}_{\minone}^\T( {{\beta}} )  
\end{aligned}
\end{equation}
for any $\beta$. 
The proof can be found in Section \ref{sec: proof-of-eq10} of supplementary material. 
This finding aligns with widely accepted results in causal inference, indicating that even when the propensity score is known, using the estimated propensity score can improve estimation efficiency \citep{hirano2003efficient}. As demonstrated earlier, when  both nuisance models are correctly specified, the additional term in \eqref{eq:ipw-var-ps} would disappear. Therefore, in such a case, we have $\Sigma_{\minone}^{\dre} ({\beta}^{\dre}) = \Sigma_{\minone}^{\drt} ({\beta}^{\drt})$, and the improved DR estimator would also achieve semiparametric efficiency bound as the standard DR estimator.

\subsection{Comparative analysis when the propensity score is correctly specified}
\label{ssec:comp-est-ps}
% In this section, we will conduct a comparative analysis to examine two different adjustment strategies using correctly specified propensity scores, based on the theoretical foundation established in the previous section. As previously described, by introducing the additional estimating equation \eqref{eq:ps-est-W}, we can also calculate the asymptotic variance of ${\tau}^{\dr}_{\mintwo}(\hat {\alpha} ,\hat\gamma ,\hat{\delta} )$ based on the set $({X},{W})$ when the propensity score model $e_{\mintwo}(X, W;{\alpha},{{{\gamma}}})$ is correctly specified. 

In this section, we will explore whether integrating predictive covariates contributes
to improving eﬀiciency when the propensity score is  correctly specified, based on the theoretical foundation established in Section \ref{sec:com-analy-est}.  Intuitively, when the propensity scores are fully known, the efficiency improvement is attributed to the introduction of additional predictive covariates in   outcome models through Assumption \ref{cond:two-res-aipw}. However, when the propensity score is correctly specified but requires estimation, even if $W$ has no direct effect on the treatment $Z$, and the probability limit of $\hat{\gamma}$ is $\gamma_{\minzero}=0$, estimating ${\tau}^{\dr}_{\mintwo}(\hat {\alpha} ,\hat\gamma ,\hat{\delta} )$ still necessitates an additional estimating equation for the parameter ${\gamma}$, as indicated in   \eqref{eq:ps-est-W}. 
Therefore, compared to Section \ref{ssec:com-an-fix}, in addition to exploring efficiency through the outcome model, we also need to consider whether using estimated propensity scores could potentially lead to efficiency gains, achieving a dual improvement through two nuisance models. 

As previously described, by introducing the additional estimating equation \eqref{eq:ps-est-W}, we can also calculate the asymptotic variance of ${\tau}^{\dr}_{\mintwo}(\hat {\alpha} ,\hat\gamma ,\hat{\delta} )$ based on the set $({X},{W})$ when the propensity score model $e_{\mintwo}(X, W;{\alpha},{{{\gamma}}})$ is correctly specified. To simplify the exposition, we review key definitions related to asymptotic variances. Specifically, we use $\Sigma_{\minone}^{\dre} ( {\beta}  )$ to represent the asymptotic variance under the correctly specified propensity score, corresponding to the covariate set $X$ for the parameter vector ${\beta}$.  Likewise, $\Sigma_{\mintwo}^{\dre} ( {\delta}  )$ represents the asymptotic variance  for the covariate set $(X, W)$ corresponding to the parameter vector ${\delta}$. We use  $\varphi^\dr_{\mintwo}(Y, X, W, Z; \alpha_{\minzero}, \gamma_{\minzero}, \delta)$ to  denote the influence function for ${\tau}^{\dr}_{\mintwo}(\alpha\minzero, \gamma_\minzero, \hat{\delta})$ with $\hat{\delta} - \delta = O_p(n^{-1/2})$.  The Fisher information matrix  ${\mathcal{H}}_{{\alpha} {\alpha},  \minzero}$ is denoted as ${\mathcal{H}}_{{\alpha} {\alpha},  \minzero}=\E\left\{S_{\alpha}({ X}, Z;{ {\alpha}}_{\minzero} ) S^\T_{\alpha}({ X} , Z;{ {\alpha}}_{\minzero} )\right\} $, which also equals $\E\left\{S_{\alpha}({ X},{ W}, Z;{ {\alpha}}_{\minzero}, \gamma_{\minzero}) S^\T_{\alpha}({ X},{ W}, Z;{ {\alpha}}_{\minzero}, \gamma_{\minzero})\right\}  $ due to $e(X;\alpha_\minzero)=e(X,W;\alpha_\minzero,\gamma_\minzero)$. Let  ${\mathcal{H}}_{{\alpha} {\gamma},  \minzero}=\E\left\{S_{\alpha}({ X},{ W}, Z;{ {\alpha}}_{\minzero}, \gamma_{\minzero}) S^\T_{\gamma}({ X},{ W}, Z;{ {\alpha}}_{\minzero}, \gamma_{\minzero})\right\} $, and the matrices       ${\mathcal{H}}_{{\gamma} {\gamma},  \minzero}$ and ${\mathcal{H}}_{{\gamma} {\alpha},  \minzero}$  can be similarly defined.  Let $\mathcal{F} =  \mathcal{H}_{{\alpha}  {\alpha},  \minzero}-\mathcal{H}_{{\alpha}  \gamma, \minzero} \mathcal{H}_{\gamma \gamma, \minzero}^{-1} \mathcal{H}_{\gamma  {\alpha},  \minzero}  $, and it can be shown that $\mathcal{F}^{-1}$ is a positive definite matrix;  refer to Section \ref{eq:verf-var-comp} in the supplementary material for details.   We also define $   {\mathcal{D}}_{\mintwoone}^\dr( {\delta} )  = -
\E\left\{\partial \varphi^\dr_{\mintwo} (Y, { X},{ W},Z ; {\alpha} _{\minzero}, \gamma_{\minzero}, {\delta} ) / \partial {\alpha}^\T\right\} $, $    {\mathcal{D}}_{\mintwotwo}^\dr( {\delta} )  = -
\E\left\{\partial\varphi^\dr_{\mintwo} (Y, { X},{ W},Z ; {\alpha} _{\minzero}, \gamma_{\minzero}, {\delta} ) / \partial {\gamma}^\T\right\}  $ and \begin{gather*} 
\mathcal{M}^\dr (\delta )= \{ {\mathcal{D}}_{\mintwotwo}^\dr(\delta )-{\mathcal{D}}_{\mintwoone}^\dr(\delta )   \mathcal{H}_{{\alpha}  {\alpha}, \minzero}^{-1} \mathcal{H}_{{\alpha} \gamma, \minzero}\}\mathcal{F}^{-1}\{ {\mathcal{D}}_{\mintwotwo}^\dr(\delta )-{\mathcal{D}}_{\mintwoone}^\dr(\delta )   \mathcal{H}_{{\alpha}  {\alpha},  \minzero}^{-1} \mathcal{H}_{{\alpha} \gamma, \minzero}\}^\T .
\end{gather*}
 For any $\delta$, we have $\mathcal{M}^\dr (\delta )\geq 0$  due to the positive definiteness of matrix $\mathcal{F}^{-1}$.  Through elaborate computations in Section \ref{eq:verf-var-comp} of the supplementary material,   we have
     \begin{align} 
      \label{est:var-comp2}
       \Sigma_{\mintwo} ^{\dre}(  {\delta} )    &=\Sigma_{\mintwo}^{\drt}  (   {\delta} ) -{\mathcal{D}}_{\mintwoone}^\dr( {\delta} )   \mathcal{H}_{{\alpha}  {\alpha},  \minzero}^{-1}  {\mathcal{D}}_{\mintwoone}^{\T}(  {\delta} ) -\mathcal{M}^\dr({\delta} ).
\end{align}   
The derived equation \eqref{est:var-comp2} is particularly useful. Firstly, similar to \eqref{eq:ipw-var-ps}, it also indicates that using estimated propensity scores can enhance estimation efficiency. Secondly, it can be used to compare the  asymptotic variances under two different adjustment sets using estimated propensity scores, even in situations where the outcome working models may be misspecified.  
Section 3.3 of \citet{lunceford2004stratification}   investigates a similar expression when introducing additional predictive covariates $W$ into the  IPW  estimator. However, for  DR type  estimators, this  paper does not discuss whether the use of predictive covariates under a potentially misspecified outcome model could also improve efficiency. We next illustrate how \eqref{est:var-comp2} is applied to compare asymptotic variances of the improved DR estimators.

% Section 3.3 of \citet{lunceford2004stratification} investigates  a similar expression  when incorporating additional predictive covariates $W$ into the  IPW estimator.  However, for DR-type estimators, this paper did not discuss whether
% the use of predictive covariates under a potentially misspecified outcome model could also 
% improve efficiency.  The derived equation \eqref{est:var-comp2} is useful in addressing this issue.  In addition to demonstrating, similar to \eqref{eq:ipw-var-ps}, that using estimated propensity scores can improve estimation efficiency, it also facilitates the comparison of asymptotic variances under two different adjustment sets, even in situations where the outcome working models  are misspecified. We will illustrate how \eqref{est:var-comp2} is applied to compare asymptotic variances using the linear model \eqref{eq:dr-linear-model}. 

\begin{example}
% We use the linear models presented in \eqref{eq:dr-linear-model} for illustrative purposes. To accommodate the second model in \eqref{eq:dr-linear-model}, we represent an extended vector as $\tilde{\delta} = \left\{({\beta} ^\dre)^\T, 0, 0\right\}^\T$. Plugging $\tilde{\delta}$ into \eqref{est:var-comp2}, we have:
We employ the linear models \eqref{eq:dr-linear-model} for illustrative purposes. As mentioned earlier, the linear models considered here do not necessitate correct specification. To facilitate the comparison of variances, we extend the parameter vector ${\beta} ^\dre$ to accommodate the model \eqref{eq:dr-linear-model}. Specifically, we extend the vector as $\tilde{\delta} = \left\{({\beta} ^\dre)^\T, 0_{1\times d_w}, 0_{1\times d_w}\right\}^\T$, where $ {  d_w}$ is the dimension of covariates $W$. 
By definition, two estimators $\tau_\minone(\alpha_\minzero,\beta^\dre)$  and $\tau_\mintwo(\alpha_\minzero,\gamma_\minzero,\tilde\delta)$  are identical,   indicating that they should have the same asymptotic variance, i.e., $\Sigma_{\mintwo}^{\drt} (\tilde{\delta}) =\Sigma_{\minone}^{\drt} ( {\beta} ^\dre)$, where $\Sigma_{\minone}^{\drt} (\beta^\dre)$ is the asymptotic variance of $\tau_\minone(\alpha_\minzero,\beta^\dre)$, and $\Sigma_{\mintwo}^{\drt} (\tilde{\delta}) $ is the asymptotic variance of $\tau_\mintwo(\alpha_\minzero,\gamma_\minzero,\tilde{\delta})$. Additionally, by direct verification of these definitions, we also have ${\mathcal{D}}_{ \minone}^\dr( {\beta} ^\dre)  ={\mathcal{D}}_{\mintwoone}^\dr(\tilde{\delta}) $. Plugging $\tilde{\delta}$ into \eqref{est:var-comp2}, we have: 
      \begin{align*}
         \Sigma_{\mintwo} ^{\dre}(  \tilde{\delta})  &=\Sigma_{\mintwo}^{\drt}  (\tilde{\delta}) -{\mathcal{D}}_{\mintwoone}^\dr(\tilde{\delta})   \mathcal{H}_{{\alpha}  {\alpha},  \minzero}^{-1}  {\mathcal{D}}_{\mintwoone}^{\T}( \tilde{\delta} ) -\mathcal{M}^\dr (  \tilde{\delta} )  \\&=\Sigma_{\minone}^{\drt} ( {\beta} ^\dre)  -{\mathcal{D}}_{ \minone}^\dr( {\beta} ^\dre)  \mathcal{H}_{{\alpha}  {\alpha},  \minzero}^{-1}  {\mathcal{D}}_{ \minone}^{\T}( {\beta} ^\dre) -\mathcal{M}^\dr (  \tilde{\delta} )  \\&=\Sigma_{\minone}^{\dre}  ( {\beta} ^\dre)-\mathcal{M}^\dr (  \tilde{\delta} ) ,
 \end{align*}
 where the final equality holds due to \eqref{eq:ipw-var-ps}. Since $\mathcal{M}^\dr (  \tilde{\delta} ) \geq 0$ for any $ \tilde{\delta} $, we have that $$ \Sigma_{\mintwo} ^{\dre}( {\delta} ^\dre)\leq \Sigma_{\mintwo} ^{\dre}( \tilde{\delta} )\leq \Sigma_{\minone} ^{\dre}( {\beta}^\dre ) .$$
\end{example}
The above example illustrates that under the most commonly used linear models, we can always reduce asymptotic variance by incorporating predictive covariates, even if the outcome working models may be misspecified.  In summary, we provide a more general result.

%The above example illustrates that under the most commonly used linear models, we can always reduce variance   by incorporating predictive covariates, even if the outcome working models may be misspecified. In summary, we provide a more general result.
\begin{theorem}
    \label{thm:com-dre}
Assuming that Assumptions \ref{assump: positivity}-\ref{cond:two-res-aipw}  hold,  when the propensity scores $e_\minone(X;\alpha)$ and $e_\mintwo(X,W;\alpha,\gamma)$  are correctly specified,   we have   $$ \Sigma_{\mintwo} ^{\dre}( {\delta} ^\dre)\leq   \Sigma_{\minone} ^{\dre}( {\beta}^\dre )\leq   \Sigma_{\minone} ^{\drt}( {\beta}^\drt ) , ~~~~\Sigma_{\mintwo} ^{\dre}( {\delta} ^\dre)\leq   \Sigma_{\mintwo} ^{\drt}( {\delta}^\drt )\leq   \Sigma_{\minone} ^{\drt}( {\beta}^\drt ) .$$
\end{theorem}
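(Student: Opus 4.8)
The plan is to decompose the two displayed chains into four elementary inequalities and assemble them from results already established. Writing the chains out, the statement is equivalent to proving (a) $\Sigma_{\mintwo}^{\dre}(\delta^\dre)\leq\Sigma_{\minone}^{\dre}(\beta^\dre)$, (b) $\Sigma_{\minone}^{\dre}(\beta^\dre)\leq\Sigma_{\minone}^{\drt}(\beta^\drt)$, (c) $\Sigma_{\mintwo}^{\dre}(\delta^\dre)\leq\Sigma_{\mintwo}^{\drt}(\delta^\drt)$, and (d) $\Sigma_{\mintwo}^{\drt}(\delta^\drt)\leq\Sigma_{\minone}^{\drt}(\beta^\drt)$. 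Inequality (d) is exactly Theorem \ref{thm:com-drt}, so I invoke it directly. Since $\tau$ is scalar every $\Sigma$ is a scalar, so all orderings are on the real line and I only need the sign of the correction terms rather than a Loewner comparison of matrices.

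For (b) and (c) the key fact is that estimating a correctly specified propensity score never hurts. From \eqref{eq:ipw-var-ps}, $\Sigma_{\minone}^{\dre}(\beta)=\Sigma_{\minone}^{\drt}(\beta)-\mathcal{D}_{\minone}^\dr(\beta)\mathcal{H}_{{\alpha}{\alpha},\minzero}^{-1}\mathcal{D}_{\minone}^{\T}(\beta)$ for every $\beta$, and the subtracted scalar is nonnegative because $\mathcal{H}_{{\alpha}{\alpha},\minzero}^{-1}$ is positive definite; hence $\Sigma_{\minone}^{\dre}(\beta)\leq\Sigma_{\minone}^{\drt}(\beta)$ pointwise. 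Evaluating at $\beta=\beta^\drt$ and then using that $\beta^\dre$ minimizes $\Sigma_{\minone}^{\dre}(\cdot)$ gives $\Sigma_{\minone}^{\dre}(\beta^\dre)\leq\Sigma_{\minone}^{\dre}(\beta^\drt)\leq\Sigma_{\minone}^{\drt}(\beta^\drt)$, which is (b). For (c) I run the identical argument on \eqref{est:var-comp2}, where the correction is $\mathcal{D}_{\mintwoone}^\dr(\delta)\mathcal{H}_{{\alpha}{\alpha},\minzero}^{-1}\mathcal{D}_{\mintwoone}^{\T}(\delta)+\mathcal{M}^\dr(\delta)$, both summands nonnegative (the first by positive definiteness of $\mathcal{H}_{{\alpha}{\alpha},\minzero}^{-1}$, the second because $\mathcal{M}^\dr(\delta)\geq0$ for all $\delta$, as established via the positive definiteness of $\mathcal{F}^{-1}$), so $\Sigma_{\mintwo}^{\dre}(\delta)\leq\Sigma_{\mintwo}^{\drt}(\delta)$ pointwise, and minimization yields $\Sigma_{\mintwo}^{\dre}(\delta^\dre)\leq\Sigma_{\mintwo}^{\drt}(\delta^\drt)$.

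The heart of the proof, and the main obstacle, is (a), which I obtain by upgrading the preceding Example from the linear working models to the general models allowed by Assumption \ref{cond:two-res-aipw}. Under Assumption \ref{cond:two-res-aipw}(i) the $X$-model is covered by the $(X,W)$-model, so there is an embedding $\tilde\delta$ of $\beta^\dre$ into the $\delta$-space for which $Q_{\mintwo}^\dr(X,W,Z;\tilde\delta)\equiv Q_{\minone}^\dr(X,Z;\beta^\dre)$ identically in $(X,W,Z)$ (in the linear case, padding $\beta^\dre$ with zeros on the $W$ and $ZW$ coefficients). Because the two improved DR estimators then share the same propensity score $e_\minzero$ and the same fitted outcome values, their influence functions coincide, giving $\Sigma_{\mintwo}^{\drt}(\tilde\delta)=\Sigma_{\minone}^{\drt}(\beta^\dre)$ and, by differentiating the common influence function in $\alpha$ at $\alpha_\minzero$, $\mathcal{D}_{\mintwoone}^\dr(\tilde\delta)=\mathcal{D}_{\minone}^\dr(\beta^\dre)$. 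Substituting these two identities into \eqref{est:var-comp2} and applying \eqref{eq:ipw-var-ps} collapses the first two terms into $\Sigma_{\minone}^{\dre}(\beta^\dre)$, leaving $\Sigma_{\mintwo}^{\dre}(\tilde\delta)=\Sigma_{\minone}^{\dre}(\beta^\dre)-\mathcal{M}^\dr(\tilde\delta)\leq\Sigma_{\minone}^{\dre}(\beta^\dre)$; since $\delta^\dre$ minimizes $\Sigma_{\mintwo}^{\dre}(\cdot)$, $\Sigma_{\mintwo}^{\dre}(\delta^\dre)\leq\Sigma_{\mintwo}^{\dre}(\tilde\delta)\leq\Sigma_{\minone}^{\dre}(\beta^\dre)$, which is (a). Under the alternative Assumption \ref{cond:two-res-aipw}(ii) the embedding route is unavailable, so I argue through efficiency bounds: with the $(X,W)$-outcome model correct and the propensity score correct, $\Sigma_{\mintwo}^{\dre}(\delta^\dre)$ equals the semiparametric efficiency bound based on $(X,W)$, which by \cite{craycroft2020propensity} is no larger than the bound based on $X$ alone, and the latter lower-bounds the variance $\Sigma_{\minone}^{\dre}(\beta^\dre)$ of the regular asymptotically linear $X$-based estimator; chaining these gives (a).

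The step I expect to require the most care is establishing the two embedding identities $\Sigma_{\mintwo}^{\drt}(\tilde\delta)=\Sigma_{\minone}^{\drt}(\beta^\dre)$ and $\mathcal{D}_{\mintwoone}^\dr(\tilde\delta)=\mathcal{D}_{\minone}^\dr(\beta^\dre)$ rigorously for arbitrary covered-by working models rather than the explicit linear specification, and in cleanly separating the two regimes of Assumption \ref{cond:two-res-aipw} so that (a) holds under either. The remaining bookkeeping, namely verifying nonnegativity of the correction scalars and the minimizer substitutions, is routine given \eqref{eq:ipw-var-ps}, \eqref{est:var-comp2}, and the positive definiteness of $\mathcal{H}_{{\alpha}{\alpha},\minzero}^{-1}$ and $\mathcal{F}^{-1}$.
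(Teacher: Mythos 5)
Your proposal is correct and takes essentially the same route as the paper: the paper likewise reduces everything to the single inequality $\Sigma_{\mintwo}^{\dre}(\delta^\dre)\leq\Sigma_{\minone}^{\dre}(\beta^\dre)$, proved under Assumption \ref{cond:two-res-aipw}(i) by the same zero-padding embedding $\tilde\delta$ with the identities $\Sigma_{\mintwo}^{\drt}(\tilde\delta)=\Sigma_{\minone}^{\drt}(\beta^\dre)$ and ${\mathcal{D}}_{\mintwoone}^\dr(\tilde\delta)={\mathcal{D}}_{\minone}^\dr(\beta^\dre)$ combined with \eqref{est:var-comp2}, \eqref{eq:ipw-var-ps} and $\mathcal{M}^\dr(\tilde\delta)\geq 0$, and under Assumption \ref{cond:two-res-aipw}(ii) by the same efficiency-bound/Jensen chain, with the remaining inequalities obtained from Theorem \ref{thm:com-drt} and the variance decompositions exactly as in your parts (b)--(d). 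Your write-up merely makes explicit the pointwise dominance and minimizer substitutions that the paper leaves implicit in its statement that the other inequalities ``can be derived by Theorems \ref{thm:com-drt} and \eqref{eq:ipw-var-ps}.''
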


Theorem \ref{thm:com-dre} indicates that the proposed  DR  estimators lead to a dual efficiency improvement when introducing predictive covariates for both   outcome model and   estimated propensity score model. It is noteworthy that we do not need to impose new modeling restrictions on the propensity score in practice, as the correct specification of the   model $e_{\minone}(X; \alpha)$ always ensures the correct specification of $e_{\mintwo}(X, W; \alpha, \gamma)$. 
Hence, practitioners should consistently utilize estimated propensity scores and adjust for   predictive covariates to enhance estimation efficiency. From this perspective, the proposed estimators in this paper are more accurately characterized as model-assisted rather than strictly model-based. Furthermore, when the outcome models are correctly specified, the improved  DR estimator still provides protection for consistent estimation.  When all nuisance models are correctly specified, our proposed estimators can also achieve the semiparametric efficiency bound.  Additionally,   adjusting for covariates $(X, W)$ would result in a smaller semiparametric efficiency bound compared to adjusting for covariates $X$ alone \citep{craycroft2020propensity}. 

\section{Simulation Studies} 
\label{sim-studies}
We conducted several simulation studies to evaluate the finite sample performance of our proposed method. We simulate with a sample size of $n = 1000$ under the following data-generating mechanism:
\begin{itemize}[leftmargin=30pt]
\item[(a)] The true confounders $X=(X_1,X_2)^\T$ are drawn from bivariate normal distributions, with $\E(X_1)=\E(X_2)=0$, $\mathrm{var}(X_1)=1$, $\mathrm{var}(X_2)=2$, and $\mathrm{cov}(X_1,X_2)=1$. 
\item[(b)]The predictive coavariate is generated from $W=0.5X_2^2+\epsilon_w$, where $\epsilon_w\sim N(0,1)$.
    \item[(c)] The treatment variable $Z$ follows a Bernoulli distribution with  $\pr(Z=1\mid  X )={ \mathrm{expit}}(-0.2+X_1-0.2X_2)$, where $\mathrm{expit}(u)=\exp(u)/\{1+\exp(u)\}$.
     \item[(d)]  The observed outcome is generated from $  Y=  2+2Z-X_1+0.5X_2^3 -W + \epsilon_y   $,  
   where $\epsilon_y \sim N(0,1)$. 
\end{itemize}

The true  ATE  is 2. We assess the performance of the proposed estimators under two adjustment sets. Estimating equations \eqref{eq:fps-ee-exp} and \eqref{eq:solve-est-eq} will be used to estimate the parameters. The proposed estimators are compared against the standard  DR estimator \citep{robins1994estimation, bang2005doubly}. In the standard method, we will utilize least squares to solve the outcome model. We consider the following modeling choices for the working models.
 \begin{enumerate}[leftmargin=30pt]
     \item[(1)]   {\texttt{CC}}: correct propensity score and correct outcome model.         \item[(2)]   {\texttt{IC}}:    incorrect propensity score and correct outcome model. In this scenario, both propensity scores are misspecified by assuming that  
  \begin{equation*} 
      \begin{gathered}e_{\minone}( {{ X}};\alpha)=e_{\mintwo}(X, W;\alpha,\gamma)=\mathrm{expit}(\alpha_1+\alpha_2X_1^\dag+\alpha_3X_2^\dag),
  \end{gathered}
  \end{equation*} where $X_1^\dag$ and $X_2^\dag$ are two transformed variables \citep{kang2007demystifying}. Specifically, $X_1^\dag$ follows a Bernoulli distribution with $\pr(X_1^\dag=1)=0.4$, and  $X_2^\dag\sim\text{U}(0,1)$.   
When the propensity score is fully fixed, we adopt the misspecified models $e_{\minone}(X;\alpha^\ast)$ and $ e_{\mintwo}(X, W;\alpha^\ast,\gamma^\ast)$, where $\alpha^\ast=(\alpha_1^\ast, \alpha_2^\ast,\alpha_3^\ast)^\T = (0.1, 0,0.1)^\T$ and $\gamma^\ast=0$. 
% $\alpha^\ast=(\alpha_1, \alpha_2,\alpha_3)^\T = (0.1, 0,0.1)^\T$ for the misspecified model  $e_{\minone}(X;\alpha)$ and   $\{(\alpha^\ast)^\T,\gamma^\ast\}^\T=(\alpha_1, \alpha_2,\alpha_3,\gamma)^\T = (0.1, 0,0.1,0)^\T$ for $ e_{\mintwo}(X, W;\alpha,\gamma)$.  
When the propensity score is estimated, we use the maximum likelihood estimation.  \item[(3)] {\cmtt {\texttt{CI}}}:  correct propensity score and incorrect outcome model.  In this scenario,  $Q_{\minzero}(Z,X)$ and $Q_{\minzero}(Z,X,W)$ are misspecified and follow the following linear models, respectively:
  \begin{equation*} 
      \begin{gathered}
 Q_{{{\minone}}}^{\dr}({ X},Z;{\beta})={ \beta}_0+{\beta}_1  X ^\circ  ,~ Q_{\mintwo}^\dr({X},{W},Z;{\delta} )={ \delta}_0+ {\delta}_1   X ^\circ   +{\delta}_2{ W^\circ }  ,
  \end{gathered}
  \end{equation*}
  where   $ X^\circ = \vert X_1X_2\vert$ and  $ W^\circ =\mathrm{exp}(W)/\{1+\mathrm{exp}(X_2)\}$.
        \item[(4)]  {\texttt{II}}: incorrect propensity score and incorrect outcome model. In this scenario, the misspecified propensity score model follows the setting    in {\texttt{IC}}, and the misspecified outcome model follows the setting    in {\texttt{CI}}.
 \end{enumerate}
Table  \ref{tab: res2} presents bias,  variance and corresponding 95\% confidence interval based coverage probability (CP)  for the average treatment effect. We calculated the variance using bootstrap procedures with 500 replications.  To summarize the findings from Table \ref{tab: res2}: 
\begin{enumerate}[leftmargin=30pt]
\renewcommand{\labelenumi}{(\arabic{enumi})}

    \item  Verification of the double robustness property. 
For both the standard DR  and improved DR  estimators, our proposed estimators \eqref{eq: dr-X} and \eqref{eq: dr-XW} exhibited stability with relatively small bias when the propensity score or outcome model is correctly specified (see scenarios   {\texttt{IC}},   {\texttt{CI}} and  {\texttt{CC}}). Additionally, the 95\% coverage probability closely aligns with the nominal level in the majority of cases.
 
\begin{table}[t]
\caption{Simulation results. Bias  and   variances are scaled by sample size $n=1000$. The 95\%  coverage probability   is scaled by 100. }
\label{tab: res2}
\begin{threeparttable}  
\centering
\resizebox{0.780250\columnwidth}{!}{%  

\begin{tabular}{ccccccccccc} 
\hline
        &           &  &  & \multicolumn{7}{c}{Fully fixed propensity score}                                   \\ 
 \cline{5-11}\addlinespace[1mm]
        &           &  &  & \multicolumn{3}{c}{Usual estimators} &  & \multicolumn{3}{c}{Improved estimators}  \\ 
\cline{5-7}\cline{9-11}\addlinespace[1mm]
Set     & Scenarios &  &  & Bias   & Var   & CP                  &  & Bias    & Var   & CP                     \\  
\cline{1-2}\cline{5-7}\cline{9-11}\addlinespace[1mm]
$X$     &  {\texttt{CC}}        &  &  & 2.07   & 9.87  & 92.04               &  & 3.89    & 10.69 & 95.52                  \\
$X$     &  {\texttt{IC}}        &  &  & 2.39   & 9.48  & 92.04               &  & 7.32    & 53.65 & 100.00                 \\
$X$     &  {\texttt{CI}}        &  &  & 0.51   & 18.15 & 93.53               &  & -1.95   & 17.26 & 92.04                  \\
$X$     &  {\texttt{II}}        &  &  & -89.63 & 10.73 & 45.27               &  & -96.76  & 10.40 & 36.82                  \\
$(X,W)$ &  {\texttt{CC}}        &  &  & 2.58   & 4.88  & 94.53               &  & 4.12    & 5.33  & 97.51                  \\
$(X,W)$ &  {\texttt{IC}}        &  &  & 3.30   & 4.73  & 93.03               &  & 8.37    & 33.73 & 99.00                  \\
$(X,W)$ &  {\texttt{CI}}        &  &  & -4.24  & 17.19 & 93.53               &  & -12.80  & 16.43 & 92.54                  \\
$(X,W)$ &  {\texttt{II}}        &  &  & -97.72 & 12.90 & 41.29               &  & -104.76 & 13.32 & 34.83                  \\ 
\hline
        &           &  &  & \multicolumn{7}{c}{Estimated propensity score}                                   \\ \addlinespace[1mm]
 \cline{5-11}\addlinespace[1mm]
        &           &  &  & \multicolumn{3}{c}{Usual estimators} &  & \multicolumn{3}{c}{Improved estimators}  \\ 
\cline{5-7}\cline{9-11}\addlinespace[1mm]
Set     & Scenarios &  &  & Bias   & Var   & CP                  &  & Bias    & Var   & CP                     \\  
\cline{1-2}\cline{5-7}\cline{9-11}\addlinespace[1mm]
$X$     &  {\texttt{CC}}        &  &  & 2.90   & 10.23 & 92.04               &  & 5.58    & 10.41 & 92.04                  \\
$X$     &  {\texttt{IC}}        &  &  & 0.99   & 9.61  & 92.54               &  & -2.38   & 9.39  & 95.52                  \\
$X$     &  {\texttt{CI}}        &  &  & -9.48  & 13.17 & 95.02               &  & -10.38  & 11.97 & 94.03                  \\
$X$     &  {\texttt{II}}        &  &  & -89.53 & 11.01 & 50.25               &  & -90.46  & 11.01 & 50.25                  \\
$(X,W)$ &  {\texttt{CC}}        &  &  & 2.90   & 5.18  & 94.03               &  & 6.66    & 5.28  & 95.52                  \\
$(X,W)$ &  {\texttt{IC}}        &  &  & 2.23   & 4.77  & 93.53               &  & 0.94    & 4.37  & 95.52                  \\
$(X,W)$ &  {\texttt{CI}}        &  &  & -6.82  & 11.45 & 96.02               &  & -11.12  & 6.99  & 93.03                  \\
$(X,W)$ &  {\texttt{II}}        &  &  & -98.07 & 13.05 & 42.79               &  & -98.32  & 13.42 & 43.28                  \\
\hline
\end{tabular}}
\end{threeparttable}
\end{table} 

\item  
Verification of   variance reduction through the estimated propensity score. Empirical evidence consistently shows that estimators utilizing the estimated propensity score generally exhibit smaller variances compared to the estimators using the known propensity score, particularly in the scenario {\texttt{CI}}.
It is noteworthy that in many scenarios with fixed propensity scores,  the 95\% coverage probability shows poor performance. However, such poor performance is not observed when using  the estimated propensity score. This inconsistent performance arises because the selected parameters $ \alpha^\ast $ and $\gamma^\ast$ for fixed propensity scores result in extreme weights.

\item   Verification of semiparametric eﬀiciency bound.  Under the scenario {\texttt{CC}}, whether using fixed fully known propensity scores or estimated propensity scores, both the standard DR and improved DR estimators achieve the semiparametric efficiency bound and exhibit similar asymptotic performance. Additionally, the semiparametric efficiency bound estimated using $(X,W)$ is even smaller.

          \item   Verification of smallest asymptotic variance.  
In scenarios where the propensity model is correct but the outcome  model is misspecified ({\texttt{CI}}), the improved DR achieves the smallest asymptotic variance among all candidate DR estimators, naturally outperforming the standard DR estimators.

\item Verification of the optimal adjustment set.
The performance of the improved DR estimators, defined based on two adjustment sets, aligns with the findings presented in Theorems \ref{thm:com-drt} and \ref{thm:com-dre}. The improved estimators utilizing the adjustment set $(X,W)$ achieve a lower variance compared to those using only the set $X$, especially in the case of the estimated propensity score. 

% \item Finally, while our theoretical results do not guarantee the scenario {\texttt{IC}}, our conclusions suggest that when the outcome model is correctly specified, using our proposed  method to estimate the parameters $\hat\beta^\drt$  and $\hat\beta^\dre$ performs comparably to ordinary least squares $\hat\beta^{{\ls}}$, and sometimes even results in smaller variance.

\end{enumerate}

\section{Application}
\label{sec:app}

% \subsection{RUC dataset}

In this section, we apply the proposed improved DR estimators to a dataset from the Study to Understand Prognoses and Preferences for Outcomes and Risks of Treatments (SUPPORT, \citet{connors1996effectiveness}). SUPPORT examined the effectiveness and safety of the direct measurement of cardiac function by Right Heart Catheterization (RHC) for certain critically ill patients in intensive care units (ICUs). This dataset has been previously analyzed to estimate the average treatment effect of using RHC \citep{lin1998assessing, tan2006distributional, cui2021semiparametric}. The dataset includes information on 5735 patients, with 2184 measured by RHC in the first 24 hours ($Z=1$) and 3551 in the control group ($Z=0$). Among these, 3817 patients survived, while 1918 patients passed away within the 30-day period. The outcome $Y$ is the number of days between admission and death or censoring at day 30. Our objective is to evaluate whether the usage of RHC contributes to the survival time of critically ill patients from the day admitted or transferred to the ICU. 
 
Due to the relatively high  missing rates, covariates such as \texttt{cat2} (secondary disease category, 79\% missing, denoted as $ \tilde W_1$),  \texttt{adld3p} (activities of daily living, 75\% missing, denoted as $\tilde W_2$), and \texttt{urin1} (urine output, 53\% missing, denoted as $\tilde W_3$) are typically excluded in the previous  literature \citep{hirano2001estimation}. Consequently, it is widely acknowledged that these covariates do not qualify as genuine confounders, although they may still have predictive power for outcome. As mentioned below Assumption \ref{assump: negative-control}, we can utilize their observed values and impute the missing ones    for the purpose of analysis. 
For binary variables $\tilde W_1$ and $\tilde W_2$, we utilize the Bernoulli distribution for imputation, relying on the observed sample mean. As for the continuous variable $\tilde W_3$, we apply the normal distribution for imputation, considering both the observed sample mean and variance. We refer to the three imputed variables as $W_1$, $W_2$, and $W_3$. 

%The data include 5735 patients, of whom 2184 were measured by RHC in the first 24 hr ($Z = 1$) and 3551 were in the control group ($Z = 0$).  We use $Y$  to represent the length of a patient's hospital stay, calculated by subtracting the admission date from the discharge date. In cases where a patient has passed away and does not have a discharge date, we substitute their date of death as the $Y$ value to denote the duration of their hospitalization. {\red Since the covariates cat2 (Secondary disease category, 79\% missing), adld3p (ADL, 75\% missing), and urin1 (Urine output, 53\% missing), are missing, we have excluded these covariates from our analysis. This leaves us with a total of 49 covariates, including diagnosis, comorbidity, vital signs, physiological status, and other demographic information such as gender, race, age, and income.}

In our collected covariates, all the remaining variables are complete without any missing values. Additionally, the covariate set includes environmental factors, especially temperature. Temperature variables are not expected to directly influence the choice of patient treatment but may affect the duration of a patient's hospital stay. Therefore, we consider incorporating the temperature variable \texttt{temp1} as $W_4$. The additional predictive covariates in this study are denoted as $W=(W_1, W_2, W_3, W_4)$. We employed a logistic model to fit the propensity score model $e_\minone(X;\alpha)$ and empirically verified that $W$ does not directly appear in the model. Furthermore, we explored alternative non-parametric methods such as classification and regression trees \citep{Breiman1984c}, Bayesian additive regression trees \citep{chipman2010bart}, and Causal Tree \citep{athey2016recursive}. Importantly, almost all of these methods empirically confirmed that the covariate $W$ do not directly affect  treatment assignment. The remaining covariates, constituting the set $X$, include a total of 48 variables. These variables encompass diagnoses, comorbidities, vital signs, physiological indicators, and additional demographic details such as gender, race, age, education, and income. For further details, please refer to Table 1 in \citet{hirano2001estimation}.

\begin{table}[t]
\caption{Treatment effect estimates, asymptotic variance and 95\% confidence intervals  for the RHC data.}
\label{tab: application} 
\centering 
\resizebox{0.975\textwidth}{!}{ 
\begin{tabular}{cclcccclccc}
\hline
              &  &  & \multicolumn{3}{c}{Usual Estimators} &  &  & \multicolumn{3}{c}{Improved Estimators} \\ \cline{4-6} \cline{9-11} \addlinespace[1mm]
Scenarios     &  &  & Point estimate &   Variance  & 95\% CI           &  &  & Point estimate  &   Variance   & 95\% CI            \\ \cline{1-1} \cline{4-6} \cline{9-11} \addlinespace[1mm]
% ipw, $X$      &  &  & 2.87           & 0.81 & (1.11, 4.63) &  &  & 3.80             & 1.42  & (1.47, 6.14)  \\
% ipw, $(X,W)$  &  &  & 2.86           & 0.82 & (1.09, 4.63) &  &  & 3.63            & 1.38  & (1.33, 5.93)  \\
 % $X$     &  &  & 3.26           & 0.73 & (1.59, 4.94) &  &  & 3.63            & 0.69  & (2.00, 5.26)  \\
 % $(X,W)$ &  &  & 3.26           & 0.74 & (1.58, 4.94) &  &  & 3.80             & 0.68  & (2.19, 5.42)  \\ \hline
 $X$     &  &   & 3.26 & 0.74 & (1.58,~4.94) &  &   & 3.63 & 0.69 & (2.00,~5.26) \\ 
 $(X,W)$  &  &   & 3.26 & 0.74 & (1.59,~4.94)  &  &  & 3.80 & 0.67 & (2.19,~5.42) \\ 
  \hline
\end{tabular}
}
\end{table}

We employed the linear model \eqref{eq:dr-linear-model} as the outcome model.
Table \ref{tab: application} presents point estimates, variance, and corresponding 95\% confidence intervals for ATE. Since the true propensity score is not known in practice, we thus adopt the methods proposed in Section \ref{sec:com-analy} for   data analysis.  The point estimates of the improved DR estimator are slightly larger than those of the standard DR estimator, but they are still very close. The consistency among these close estimates suggests that  RHC  treatment may increase the survival of critically ill patients by at least an additional 3 days after entering the intensive care unit. These findings align with results reported in earlier studies \citep{Luke2021JASA}. We also observe  that employing the improved DR estimator leads to a reduction in estimated variance compared to the standard DR estimator. Furthermore, through the integration of additional predictive covariates $W$, we find  a further decrease in the estimated variance.

% \subsection{SIPP dataset}
% In this section, we provide an example using real data from the Survey of Income and Program Participation (SIPP) to illustrate our methods. The survey collects information on child and dependent care, employment status, hours worked, labor force, income from work, and demographic characteristics of household members. Our focus is on estimating the causal effect of the number of children in a household on the household's monthly child payment, using hourly wages in U.S. dollars as a measure of pay.

Finally, we would like to emphasize that, for real data analysis, there is no need to partition the covariates $X$ without missing values to define $W$. In fact, we strongly recommend incorporating additional covariates that may be missing in practice, as they are often not considered genuine confounders, such as $(\tilde W_1, \tilde W_2, \tilde W_3)$ in this application. After appropriate imputation, we recommend including them in the propensity score for empirical validation, ensuring that these covariates do not directly impact the treatment variable. Subsequently, these variables can be employed in the outcome model, providing a highly attractive method for further enhancing efficiency in practical applications. 

\section{Discussion}  
\label{sec: dis}

In this paper, we introduce an enhanced doubly robust estimator that directly minimizes the asymptotic variance within a defined class of working models.   Our estimator is doubly robust and is designed to be more efficient than standard DR estimators when the propensity score model is correctly specified. Additionally, we theoretically establish that the inclusion of additional covariates with predictive power   can further enhance estimation efficiency, even in the presence of missing values for such covariates.

% We specifically focus on the improved DR estimator rather than the usual DR estimator for comparative analysis. In cases of outcome model misspecification, whether such estimates ensure similar variance orders remains uncertain. The usual DR \citep{bang2005doubly}, the outcome model estimation typically relies on ordinary least squares regression and does not utilize additional propensity score information.

% The main reason is that, compared to the outcome model, the propensity score is typically easier to correctly specify in practice with experimental design. However, specifying the outcome model with missing covariates can be challenging. When additional predictive covariates are present, the propensity score model remains the same as the propensity score obtained without adjusting for these additional covariates. Moreover, regarding the usual DR \citep{bang2005doubly}, the outcome model estimation typically relies on ordinary least squares regression and does not utilize additional propensity score information. Therefore, in cases of outcome model misspecification, whether such estimates ensure similar variance orders remains uncertain.

For the usual  DR  estimator, it remains uncertain whether similar variance orders exist when the propensity scores are correctly specified. This uncertainty arises because the usual DR estimator typically relies on ordinary least squares regression  and does not incorporate any propensity score information. Our approach is intuitively similar to the covariate adjustment strategies used in randomized experiments.  In such scenarios, although the difference-in-means estimator is sufficient for  consistent estimation, previous literature often suggests incorporating additional predictive covariates to
 improve estimation efficiency.

% We also note that the conditional independence imposed in  Assumption \ref{assump: negative-control}   are similar to the latent ignorability assumption considered in  proximal causal inference \citep{Miao2018Identifying,Negative-Shi,cui2020semiparametric}, but Figure \ref{fig:prop-adj} is fundamentally different from these problems. The true confounder $C $ in our framework are observed,  but $C $ cannot be observed in their problems. When $  X$ is unobserved, average causal effects can also be identified under the condition of Assumption \ref{assump: negative-control} and additional regularity conditions \citep{Miao2018Identifying,Negative-Shi,cui2020semiparametric}, however, how to  consider covariate adjustment in such  case remains to be further  discussed.  

There are several potential directions for future research.  Firstly, while this paper demonstrates that incorporating predictive covariates with missing values can enhance estimation efficiency, exploring various imputation methods is also worth considering \citep{Zhao2022JASA}.   Additionally, this paper primarily focuses on covariates with predictive power but no direct effect on the treatment. In practice, exploring covariates that affect the treatment but have no direct effect on the outcome is also of significant interest \citep{craycroft2020propensity}. Finally, 
 a promising direction is to explore bias-reduced doubly robust estimator  for comparative analysis. This approach entails estimating nuisance parameters in a manner that minimizes bias in the  DR  estimator when working models are misspecified \citep{vermeulen2015bias}.   The study of these issues is beyond the scope of this paper, and we leave them as potential future research topics.

 \bibliographystyle{apalike}
					\bibliography{mybib}

\begin{thebibliography}{}

\bibitem[Athey and Imbens, 2016]{athey2016recursive}
Athey, S. and Imbens, G. (2016).
\newblock Recursive partitioning for heterogeneous causal effects.
\newblock {\em Proceedings of the National Academy of Sciences},
  113(27):7353--7360.

\bibitem[Bang and Robins, 2005]{bang2005doubly}
Bang, H. and Robins, J.~M. (2005).
\newblock Doubly robust estimation in missing data and causal inference models.
\newblock {\em Biometrics}, 61(4):962--973.

\bibitem[Breiman et~al., 1984]{Breiman1984c}
Breiman, L., Friedman, J.~H., Olshen, R.~A., and Stone, C.~J. (1984).
\newblock {\em Classification and Regression Trees}.
\newblock Wadsworth and Brooks.

\bibitem[Brookhart et~al., 2006]{brookhart2006variable}
Brookhart, M.~A., Schneeweiss, S., Rothman, K.~J., Glynn, R.~J., Avorn, J., and
  St{\"u}rmer, T. (2006).
\newblock Variable selection for propensity score models.
\newblock {\em American Journal of Epidemiology}, 163(12):1149--1156.

\bibitem[Cao et~al., 2009]{cao2009improving}
Cao, W., Tsiatis, A.~A., and Davidian, M. (2009).
\newblock Improving efficiency and robustness of the doubly robust estimator
  for a population mean with incomplete data.
\newblock {\em Biometrika}, 96(3):723--734.

\bibitem[Chernozhukov et~al., 2018]{chernozhukov2018double}
Chernozhukov, V., Chetverikov, D., Demirer, M., Duflo, E., Hansen, C., Newey,
  W., and Robins, J. (2018).
\newblock Double/debiased machine learning for treatment and structural
  parameters: Double/debiased machine learning.
\newblock {\em The Econometrics Journal}, 21(1).

\bibitem[Chipman et~al., 2010]{chipman2010bart}
Chipman, H.~A., George, E.~I., and McCulloch, R.~E. (2010).
\newblock Bart: Bayesian additive regression trees.
\newblock {\em The Annals of Applied Statistics}, 4(1):266--298.

\bibitem[Connors et~al., 1996]{connors1996effectiveness}
Connors, A.~F., Speroff, T., Dawson, N.~V., Thomas, C., Harrell, F.~E., Wagner,
  D., Desbiens, N., Goldman, L., Wu, A.~W., Califf, R.~M., et~al. (1996).
\newblock The effectiveness of right heart catheterization in the initial care
  of critically ill patients.
\newblock {\em Journal of the American Medical Association}, 276(11):889--897.

\bibitem[Craycroft et~al., 2020]{craycroft2020propensity}
Craycroft, J.~A., Huang, J., and Kong, M. (2020).
\newblock Propensity score specification for optimal estimation of average
  treatment effect with binary response.
\newblock {\em Statistical Methods in Medical Research}, 29(12):3623--3640.

\bibitem[Cui et~al., 2023]{cui2021semiparametric}
Cui, Y., Pu, H., Shi, X., Miao, W., and Tchetgen~Tchetgen, E. (2023).
\newblock Semiparametric proximal causal inference.
\newblock {\em Journal of the American Statistical Association},
  DOI:10.1080/01621459.2023.2191817.

\bibitem[De~Luna et~al., 2011]{de2011covariate}
De~Luna, X., Waernbaum, I., and Richardson, T.~S. (2011).
\newblock Covariate selection for the nonparametric estimation of an average
  treatment effect.
\newblock {\em Biometrika}, 98(4):861--875.

\bibitem[Franklin et~al., 2015]{franklin2015regularized}
Franklin, J.~M., Eddings, W., Glynn, R.~J., and Schneeweiss, S. (2015).
\newblock Regularized regression versus the high-dimensional propensity score
  for confounding adjustment in secondary database analyses.
\newblock {\em American Journal of Epidemiology}, 182(7):651--659.

\bibitem[Geng et~al., 2002]{Geng2002JRSSB}
Geng, Z., Guo, J., and Fung, W.-K. (2002).
\newblock Criteria for confounders in epidemiological studies.
\newblock {\em Journal of the Royal Statistical Society: Series B (Statistical
  Methodology)}, 64(1):3--15.

\bibitem[Hahn, 1998]{Hanh1998Econometrica}
Hahn, J. (1998).
\newblock On the role of the propensity score in efficient semiparametric
  estimation of average treatment effects.
\newblock {\em Econometrica}, 66(2):315--332.

\bibitem[Hahn, 2004]{hahn2004functional}
Hahn, J. (2004).
\newblock Functional restriction and efficiency in causal inference.
\newblock {\em The Review of Economics and Statistics}, 86(1):73--76.

\bibitem[Hirano and Imbens, 2001]{hirano2001estimation}
Hirano, K. and Imbens, G.~W. (2001).
\newblock Estimation of causal effects using propensity score weighting: An
  application to data on right heart catheterization.
\newblock {\em Health Services and Outcomes Research Methodology},
  3(2):259--278.

\bibitem[Hirano et~al., 2003]{hirano2003efficient}
Hirano, K., Imbens, G.~W., and Ridder, G. (2003).
\newblock Efficient estimation of average treatment effects using the estimated
  propensity score.
\newblock {\em Econometrica}, 71(4):1161--1189.

\bibitem[Kang and Schafer, 2007]{kang2007demystifying}
Kang, J.~D. and Schafer, J.~L. (2007).
\newblock Demystifying double robustness: A comparison of alternative
  strategies for estimating a population mean from incomplete data.
\newblock {\em Statistical Science}, 22(4):523--539.

\bibitem[Keele and Small, 2021]{Luke2021JASA}
Keele, L. and Small, D.~S. (2021).
\newblock Comparing covariate prioritization via matching to machine learning
  methods for causal inference using five empirical applications.
\newblock {\em The American Statistician}, 75(4):355--363.

\bibitem[Lin et~al., 1998]{lin1998assessing}
Lin, D., Psaty, B., and Kronmal, R. (1998).
\newblock Assessing the sensitivity of regression results to unmeasured
  confounders in observational studies.
\newblock {\em Biometrics}, 54(3):948--963.

\bibitem[Lunceford and Davidian, 2004]{lunceford2004stratification}
Lunceford, J.~K. and Davidian, M. (2004).
\newblock Stratification and weighting via the propensity score in estimation
  of causal treatment effects: a comparative study.
\newblock {\em Statistics in Medicine}, 23(19):2937--2960.

\bibitem[Pan and Zhao, 2021]{pan2021improved}
Pan, Y. and Zhao, Y.-Q. (2021).
\newblock Improved doubly robust estimation in learning optimal individualized
  treatment rules.
\newblock {\em Journal of the American Statistical Association},
  116(533):283--294.

\bibitem[Robins et~al., 1994]{robins1994estimation}
Robins, J.~M., Rotnitzky, A., and Zhao, L.~P. (1994).
\newblock Estimation of regression coefficients when some regressors are not
  always observed.
\newblock {\em Journal of the American statistical Association},
  89(427):846--866.

\bibitem[Rosenbaum and Rubin, 1983]{rosenbaum1983assessing}
Rosenbaum, P.~R. and Rubin, D.~B. (1983).
\newblock Assessing sensitivity to an unobserved binary covariate in an
  observational study with binary outcome.
\newblock {\em Journal of the Royal Statistical Society: Series B
  (Methodological)}, 45(2):212--218.

\bibitem[Rubin, 1974]{rubin1974estimating}
Rubin, D.~B. (1974).
\newblock Estimating causal effects of treatments in randomized and
  nonrandomized studies.
\newblock {\em Journal of Educational Psychology}, 66(5):688.

\bibitem[Rubin, 1996]{rubin1996multiple}
Rubin, D.~B. (1996).
\newblock Multiple imputation after 18+ years.
\newblock {\em Journal of the American statistical Association},
  91(434):473--489.

\bibitem[Scharfstein et~al., 1999]{Scharfstein1999JASA}
Scharfstein, D.~O., Rotnitzky, A., and Robins, J.~M. (1999).
\newblock Adjusting for nonignorable drop-out using semiparametric nonresponse
  models.
\newblock {\em Journal of the American Statistical Association},
  94(448):1096--1120.

\bibitem[Stefanski and Boos, 2002]{stefanski2002calculus}
Stefanski, L.~A. and Boos, D.~D. (2002).
\newblock The calculus of m-estimation.
\newblock {\em The American Statistician}, 56(1):29--38.

\bibitem[Tan, 2006]{tan2006distributional}
Tan, Z. (2006).
\newblock A distributional approach for causal inference using propensity
  scores.
\newblock {\em Journal of the American Statistical Association},
  101(476):1619--1637.

\bibitem[Tang et~al., 2023]{tang2023ultra}
Tang, D., Kong, D., Pan, W., and Wang, L. (2023).
\newblock Ultra-high dimensional variable selection for doubly robust causal
  inference.
\newblock {\em Biometrics}, 79(2):903--914.

\bibitem[Vermeulen and Vansteelandt, 2015]{vermeulen2015bias}
Vermeulen, K. and Vansteelandt, S. (2015).
\newblock Bias-reduced doubly robust estimation.
\newblock {\em Journal of the American Statistical Association},
  110(511):1024--1036.

\bibitem[Zhao and Ding, 2022]{Zhao2022JASA}
Zhao, A. and Ding, P. (2022).
\newblock To adjust or not to adjust? estimating the average treatment effect
  in randomized experiments with missing covariates.
\newblock {\em Journal of the American Statistical Association},
  DOI:10.1080/01621459.2022.2123814.

\end{thebibliography}

\newpage

\renewcommand{\theproposition}{S\arabic{proposition}}
\renewcommand{\thetheorem}{S\arabic{theorem}}
\renewcommand{\theassumption}{S\arabic{assumption}}
\renewcommand{\thesection}{S\arabic{section}}
\renewcommand{\theequation}{S\arabic{equation}}
\renewcommand{\thelemma}{S\arabic{lemma}} 
\renewcommand{\thetable}{S\arabic{table}} 
\renewcommand{\theproof}{S\arabic{proof}}

\setcounter{section}{0}
\setcounter{equation}{0}
{\centering \section*{Supplementray Material}}
	  In the Supplementary Material, we provide proofs of lemmas and theorems in
the main paper.  
   \section{The proof of Lemma \ref{lem: if}}
 \label{sec:proof-tps-if}
 \begin{proof} 
Let $\mu_z=\E(Y_z)$. Let $\tilde{{\beta}}$ 
denote an intermediate value between $\hat{\beta}$
and ${\beta}^\ast$,  and $Q_{{{\minone\beta}}}^{\dr}(X,Z ; {\beta} ) \triangleq \partial Q_{\minone}(X,Z ; {\beta} ) / \partial {\beta} $, therefore, 
\begin{align*} 
 \sqrt{n}&\{ {\tau}^{\dr}_{\minone}(\alpha_{\minzero},\hat{\beta} )-\tau\} \\
= & \sqrt{n}\left[\frac{1}{n} \sum_{i=1}^n\left\{\frac{Y_i Z_i}{e_{\minzero}({ X}_i)}-\frac{Z_i-e_{\minzero}({ X}_i)}{e_{\minzero}({ X}_i)} Q^\dr_{\minone}({ X}_i,1; \hat{\beta}  )\right\}-{  \mu_1}\right] \\
& ~~~-\sqrt{n}\left[\frac{1}{n} \sum_{i=1}^n\left\{\frac{Y_i (1-Z_i)}{1-e_{\minzero}({ X}_i)}-\frac{e_{\minzero}({ X}_i)-Z_i}{1-e_{\minzero}({ X}_i)} Q^\dr_{\minone}({ X}_i,0; \hat{\beta}  )\right\}-{  \mu_0}\right] \\
= & \frac{1}{\sqrt{n}} \sum_{i=1}^n\left\{\frac{Y_i Z_i}{e_{\minzero}({ X}_i)}-\frac{Z_i-e_{\minzero}({ X}_i)}{e_{\minzero}({ X}_i)} Q^\dr_{\minone}({ X}_i,1; {\beta} ^*)-{  \mu_1} -\frac{Z_i-e_{\minzero}({ X}_i)}{e_{\minzero}({ X}_i)} Q_{{{\minone\beta}}}^{\dr}({ X}_i,1; \tilde{{{\beta}}}) (\hat{{{\beta}}}-{\beta} ^*)\right\}\\
  & ~~~-\frac{1}{\sqrt{n}} \sum_{i=1}^n\left\{\frac{Y_i (1-Z_i)}{1-e_{\minzero}({ X}_i)}-\frac{e_{\minzero}({ X}_i)-Z_i}{1-e_{\minzero}({ X}_i)} Q^\dr_{\minone}({ X}_i,0; {\beta} ^*)-{  \mu_0} -\frac{e_{\minzero}({ X}_i)-Z_i}{1-e_{\minzero}({ X}_i)} Q_{{{\minone\beta}}}^{\dr}({ X}_i,0; \tilde{{{\beta}}}) (\hat{{{\beta}}}-{\beta} ^*)\right\} \\
= & \frac{1}{\sqrt{n}} \sum_{i=1}^n\left\{\frac{Y_i Z_i}{e_{\minzero}({ X}_i)}-\frac{Z_i-e_{\minzero}({ X}_i)}{e_{\minzero}({ X}_i)} Q^\dr_{\minone}({ X}_i,1; {\beta} ^*)-{  \mu_1}\right\}\\&~~~  - \frac{1}{\sqrt{n}} \sum_{i=1}^n\left\{\frac{Y_i (1-Z_i)}{1-e_{\minzero}({ X}_i)}-\frac{e_{\minzero}({ X}_i)-Z_i}{1-e_{\minzero}({ X}_i)} Q^\dr_{\minone}({ X}_i,0; {\beta} ^*)-{  \mu_0}\right\}+o_p(1) .
\end{align*}
The third equality in the above comes from the fact that
\begin{align*}
\begin{aligned}
 \frac{1}{\sqrt{n}} \sum_{i=1}^n \frac{Z_i-e_{\minzero}({ X}_i)}{e_{\minzero}({ X}_i)} Q_{{{\minone\beta}}}^{\dr}({ X}_i,1; \tilde{{{\beta}}}) (\hat{{{\beta}}}-{\beta} ^*) &=\sqrt{n}(\hat{{{\beta}}}-{\beta} ^*)   \frac{1}{n} \sum_{i=1}^n \frac{Z_i-e_{\minzero}({ X}_i)}{e_{\minzero}({ X}_i)} Q_{{{\minone\beta}}}^{\dr}({ X}_i,1; \tilde{{{\beta}}}),\\ \frac{1}{\sqrt{n}} \sum_{i=1}^n \frac{e_{\minzero}({ X}_i)-Z_i}{1-e_{\minzero}({ X}_i)} Q_{{{\minone\beta}}}^{\dr}({ X}_i,0; \tilde{{{\beta}}}) (\hat{{{\beta}}}-{\beta} ^*) &=\sqrt{n}(\hat{{{\beta}}}-{\beta} ^*)   \frac{1}{n} \sum_{i=1}^n \frac{e_{\minzero}({ X}_i)-Z_i}{1-e_{\minzero}({ X}_i)} Q_{{{\minone\beta}}}^{\dr}({ X}_i,0; \tilde{{{\beta}}}).
\end{aligned}
\end{align*}
Since $\sqrt{n}(\hat{{{\beta}}}-{\beta} ^*)=O_p(1)$, and note that
\begin{align*}
 \frac{1}{n} \sum_{i=1}^n  \frac{Z_i-e_{\minzero}({ X}_i)}{e_{\minzero}({ X}_i)} Q_{{{\minone\beta}}}^{\dr}({ X}_i,1; \tilde{{{\beta}}})  
\stackrel{p}{\rightarrow} & \E\left\{\frac{Z-e_{\minzero}({ X})}{e_{\minzero}({ X})} Q_{{{\minone\beta}}}^{\dr}({ X},1 ;  {\beta} ^*)\right\} \\
= & \E\left[\E\left\{\frac{Z-e_{\minzero}({ X})}{e_{\minzero}({ X})} Q_{{{\minone\beta}}}^{\dr}({ X},1 ;  {\beta} ^*) \bigg\vert X\right\}\right]=0,\\ \frac{1}{n} \sum_{i=1}^n  \frac{Z_i-e_{\minzero}({ X}_i)}{1-e_{\minzero}({ X}_i)} Q_{{{\minone\beta}}}^{\dr}({ X}_i,0; \tilde{{{\beta}}})  
\stackrel{p}{\rightarrow} & \E\left\{\frac{Z-e_{\minzero}({ X})}{1-e_{\minzero}({ X})} Q_{{{\minone\beta}}}^{\dr}({ X},0 ;  {\beta} ^*)\right\} \\
= & \E\left[\E\left\{\frac{Z-e_{\minzero}({ X})}{1-e_{\minzero}({ X})} Q_{{{\minone\beta}}}^{\dr}({ X},0 ;  {\beta} ^*) \bigg\vert X\right\}\right]=0.
\end{align*} 
\end{proof}   
  \section{The  asymptotic variance when   propensity score is fully known}
  \label{sec: asyvar-tps2} 
    \begin{proof} 
      Based on Lemma \ref{lem: if} and the law of total variance, its asymptotic variance is 
\begin{equation}
    \label{eq:var-form-2} 
\begin{aligned}
 &\Sigma ^{\drt}_{\minone}( {\beta}  )\\&=\mathrm{var}\left\{\frac{ZY}{e_{\minzero}({ X})}-\frac{(1- Z)Y}{ 1-e_{\minzero}({ X})}-   \frac{Z-e_{\minzero}({ X})}{e_{\minzero}({ X})}   Q_{\minone}^\dr({ X},1; {{\beta}})+\frac{e_{\minzero}({ X})-Z}{ 1-e_{\minzero}({ X})}   Q_{\minone}^\dr({ X},0; {{\beta}})\right\}\\&= \E\left(\operatorname { v a r } \left\{\frac{Y Z}{e_{\minzero}({ X})}-\frac{Z-e_{\minzero}({ X})}{e_{\minzero}({ X})}  Q_{\minone}^{\dr}({ X},1; {{\beta}}) -\frac{Y(1- Z)}{1-e_{\minzero}({ X})}+\frac{e_{\minzero}({ X})-Z}{1-e_{\minzero}({ X})}   Q_{\minone}^{\dr}({ X},0; {{\beta}})\bigg\vert { X}\right\}\right) \\
 & ~~~+\operatorname{var}\left(\E \left\{\frac{Y Z}{e_{\minzero}({ X})}-\frac{Z-e_{\minzero}({ X})}{e_{\minzero}({ X})}  Q_{\minone}^{\dr}({ X},1; {{\beta}}) -\frac{Y(1- Z)}{1-e_{\minzero}({ X})}+\frac{e_{\minzero}({ X})-Z}{1-e_{\minzero}({ X})}   Q_{\minone}^{\dr}({ X},0; {{\beta}}) \bigg\vert { X}\right\}\right) \\
= & (SI)+(SI I) . 
\end{aligned}  
\end{equation}
We first calculate the second term (SII) in \eqref{eq:var-form-2}. Notice that when the propensity score is correctly specified, we have\begin{align*}
    \E &\left\{\frac{Y Z}{e_{\minzero}({ X})}-\frac{Z-e_{\minzero}({ X})}{e_{\minzero}({ X})}  Q_{\minone}^{\dr}({ X},1; {{\beta}}) -\frac{Y(1- Z)}{1-e_{\minzero}({ X})}+\frac{e_{\minzero}({ X})-Z}{1-e_{\minzero}({ X})}   Q_{\minone}^{\dr}({ X},0; {{\beta}}) \bigg\vert { X}\right\}\\ &=    \E\left\{\frac{Y Z}{e_{\minzero}({ X})} -\frac{Y(1- Z)}{1-e_{\minzero}({ X})}\bigg\vert { X}\right\}\\&=\tau(X).
\end{align*}Therefore, the second term (SII) in \eqref{eq:var-form-2} equals $ \operatorname{var}\left\{\tau({ X})\right\}=\E\{\tau^2({ X})\}-\tau^2,$  which does not depend on ${{\beta}}$.   
We next calculate the first term (SI) in \eqref{eq:var-form-2}. It can be shown that  
    \begin{equation*}
\begin{aligned}
  \operatorname{var}&\left\{\frac{Y Z}{e_{\minzero}({ X})}-\frac{Z-e_{\minzero}({ X})}{e_{\minzero}({ X})}  Q_{\minone}^{\dr}({ X},1; {{\beta}}) -\frac{Y(1- Z)}{1-e_{\minzero}({ X})}+\frac{e_{\minzero}({ X})-Z}{1-e_{\minzero}({ X})}   Q_{\minone}^{\dr}({ X},0; {{\beta}}) \bigg\vert { X}\right\}\\
  %\\= & \E\begin{bmatrix}
   % \left\{\dfrac{Y Z}{e_{\minzero}({ X})} -\dfrac{Y(1- Z)}{1-e_{\minzero}({ X})}-\dfrac{Z-e_{\minzero}({ X})}{1-e_{\minzero}({ X})}   Q_{\minone}^{\dr}({ X},0; {{\beta}})-\dfrac{Z-e_{\minzero}({ X})}{e_{\minzero}({ X})}  Q_{\minone}^{\dr}({ X},1; {{\beta}})-\tau(  X)\right\}^2  \bigg\vert { X}
%\end{bmatrix} \\
= & \E\begin{bmatrix}
    \begin{pmatrix}
    \left\{\dfrac{Z-e_{\minzero}({ X})}{e_{\minzero}({ X})}\right\}^2  Q_{\minone}^{2}({ X},1;{\beta})  + \left\{\dfrac{Z-e_{\minzero}({ X})}{1-e_{\minzero}({ X})}\right\}^2  Q_{\minone}^{2}({ X},0;{\beta}) \\+2Q_{\minone}^{\dr}({ X},1;{\beta})Q_{\minone}^{\dr}({ X},0;{\beta})
\end{pmatrix}\Bigg\vert { X}
\end{bmatrix}\\
& -2 \E\left[\left\{\frac{Z-e_{\minzero}({ X})}{e_{\minzero}({ X})}Q_{\minone}^{\dr}({ X},1;{\beta}) +\frac{Z-e_{\minzero}({ X})}{1-e_{\minzero}({ X})}   Q_{\minone}^{\dr}({ X},0; {{\beta}})\right\}\left\{\frac{Y Z}{e_{\minzero}({ X})}-\frac{Y(1- Z)}{1-e_{\minzero}({ X})}-\tau({ X}) \right\}  \bigg\vert { X}\right] \\
& +\E\left[\left\{\frac{Y Z}{e_{\minzero}({ X})}-\frac{Y(1- Z)}{1-e_{\minzero}({ X})}-\tau({ X})\right\}^2  \bigg\vert { X}\right] .
\end{aligned}
\end{equation*} We know that
    \begin{align*}
& \E\left\{\frac{Z-2 e_{\minzero}({ X}) Z+e_{\minzero}^2({ X})}{e_{\minzero}^2({ X})}   Q_{\minone}^{2}({ X},1;{\beta})   \bigg\vert { X}\right\} =\frac{1-e_{\minzero}({ X})}{e_{\minzero}({ X})}  Q_{\minone}^{2}({ X},1;{\beta}), \\
& \E\left\{\frac{Z-2 e_{\minzero}({ X}) Z+e_{\minzero}^2({ X})}{\{1-e_{\minzero}(X)\}^2}   Q_{\minone}^{2}({ X},0;{\beta})   \bigg\vert { X}\right\} =\frac{e_{\minzero}({ X})}{1-e_{\minzero}({ X})}  Q_{\minone}^{2}({ X},0;{\beta}) ,
\end{align*} 
    \begin{equation*}
\begin{aligned}
\E& \left[\left\{\frac{Z-e_{\minzero}({ X})}{e_{\minzero}({ X})}Q_{\minone}^{\dr}({ X},1;{\beta}) +\frac{Z-e_{\minzero}({ X})}{1-e_{\minzero}({ X})}   Q_{\minone}^{\dr}({ X},0; {{\beta}})\right\}\left\{\frac{Y Z}{e_{\minzero}({ X})}-\frac{Y(1- Z)}{1-e_{\minzero}({ X})}-\tau({ X})  \right\}  \bigg\vert { X}\right] \\
& =\E\left[\frac{Z-e_{\minzero}({ X})}{e_{\minzero}({ X})}\left\{\frac{Y Z}{e_{\minzero}({ X})}-\frac{Y(1- Z)}{1-e_{\minzero}({ X})}-\tau({ X}) \right\}Q_{\minone}^{\dr}({ X},1;{\beta})  \bigg\vert { X}\right] \\
&~~~ +\E\left[\frac{Z-e_{\minzero}({ X})}{1-e_{\minzero}({ X})}\left\{\frac{Y Z}{e_{\minzero}({ X})}-\frac{Y(1- Z)}{1-e_{\minzero}({ X})}-\tau({ X}) \right\}Q_{\minone}^{\dr}({ X},0;{\beta})  \bigg\vert { X}\right] \\
& = \E\left[\frac{Y Z\left\{1-e_{\minzero}({ X})\right\}}{e_{\minzero}^2({ X})} Q_{\minone}^{\dr}({ X},1;{\beta})  \bigg\vert { X}\right] -\E\left\{\frac{Z-e_{\minzero}({ X})}{e_{\minzero}({ X})} \tau({ X})  Q_{\minone}^{\dr}({ X},1;{\beta})  \bigg\vert { X}\right\} \\
&~~~ + \E\left[\frac{Y(1-Z)e_{\minzero}({ X}) }{\{1-e_{\minzero}({ X})\}^2} Q_{\minone}^{\dr}({ X},0;{\beta})  \bigg\vert { X}\right] -\E\left\{\frac{Z-e_{\minzero}({ X})}{1-e_{\minzero}({ X})} \tau({ X})  Q_{\minone}^{\dr}({ X},0;{\beta})  \bigg\vert { X} \right\} \\
&~~~ +\E\left\{\frac{Y (1-Z)}{ 1-e_{\minzero}({ X}) } Q_{\minone}^{\dr}({ X},1;{\beta})  \bigg\vert { X}\right\}+\E\left\{\frac{YZ}{  e_{\minzero}({ X}) } Q_{\minone}^{\dr}({ X},0;{\beta})  \bigg\vert { X}\right\}\\& = \frac{1-e_{\minzero}({ X})}{e_{\minzero}({ X})} Q _{\minzero}({ X},1)  Q_{\minone}^{\dr}({ X},1;{\beta}) +\frac{ e_{\minzero}({ X})}{1-e_{\minzero}({ X})}Q _{\minzero}({ X},0)  Q_{\minone}^{\dr}({ X},0;{\beta}) \\& ~~~+Q _{\minzero}({ X},0)  Q_{\minone}^{\dr}({ X},1;{\beta})  + Q _{\minzero}({ X},1)  Q_{\minone}^{\dr}({ X},0;{\beta}), 
\end{aligned}
\end{equation*}and\begin{align*}
    \E&\left[\left\{\frac{Y Z}{e_{\minzero}({ X})}-\frac{Y(1- Z)}{1-e_{\minzero}({ X})}-\tau({ X})\right\}^2 \bigg\vert   { X}  \right] =\frac{\E(Y_1^2\mid   { X}) }{e_{\minzero}({ X})} +\frac{\E(Y_0^2\mid   { X}) }{1-e_{\minzero}({ X})}-\tau^2({ X}).
\end{align*}
Therefore, the first term (SI) in \eqref{eq:var-form-2} equals
\begin{align*} 
 (SI)&=\operatorname{var}\left\{\frac{Y Z}{e_{\minzero}({ X})}-\frac{Z-e_{\minzero}({ X})}{e_{\minzero}({ X})}  Q_{\minone}^{\dr}({ X},1; {{\beta}}) -\frac{Y(1- Z)}{1-e_{\minzero}({ X})}+\frac{e_{\minzero}({ X})-Z}{1-e_{\minzero}({ X})}   Q_{\minone}^{\dr}({ X},0; {{\beta}}) \bigg\vert { X}\right\}\\
 &=  \E\left\{\frac{1-e_{\minzero}({ X})}{e_{\minzero}({ X})}  Q_{\minone}^{2}({ X},1;{\beta}) +\frac{e_{\minzero}({ X})}{1-e_{\minzero}({ X})}  Q_{\minone}^{2}({ X},0;{\beta}) +2Q_{\minone}^{\dr}({ X},1;{\beta})Q_{\minone}^{\dr}({ X},0;{\beta})\right\} \\
&~~~  -2\E\left\{ \frac{1-e_{\minzero}({ X})}{e_{\minzero}({ X})} Q _{\minzero}({ X},1) \cdot Q_{\minone}^{\dr}({ X},1;{\beta})\right\} -2\E\left\{Q _{\minzero}({ X},0) \cdot Q_{\minone}^{\dr}({ X},1;{\beta})\right\} \\&~~~-2\E\left\{\frac{ e_{\minzero}({ X})}{1-e_{\minzero}({ X})}Q _{\minzero}({ X},0) \cdot Q_{\minone}^{\dr}({ X},0;{\beta}) \right\}-2\E\left\{ Q _{\minzero}({ X},1) \cdot Q_{\minone}^{\dr}({ X},0;{\beta})\right\} \\
&~~~ +\E\left\{\frac{Y_1^2}{e_{\minzero}({ X})} \right\}+\E\left\{\frac{Y_0^2}{1-e_{\minzero}({ X})} \right\} -\E\{\tau^2({ X})\}. 
\end{align*} 
% \begin{align*}
%     E&\left[\left\{\frac{Y Z}{e_{\minzero}({ X})}-\frac{Y(1- Z)}{1-e_{\minzero}({ X})}-\tau({ X})\right\}^2 \bigg\vert   { X}  \right]\\&= \E\left[\left\{\frac{Y Z}{e_{\minzero}({ X})} - Q _{\minzero}({ X},1)\right\}^2 \bigg\vert   { X}   \right]+ E\left[\left\{\frac{Y (1-Z)}{1-e_{\minzero}({ X})} -Q _{\minzero}({ X},0)\right\}^2  \bigg\vert   { X}  \right]\\
%     &~~~-2 \E\left[\left\{\frac{Y Z}{e_{\minzero}({ X})} - Q _{\minzero}({ X},1)\right\}\left\{\frac{Y (1-Z)}{1-e_{\minzero}({ X})} -Q _{\minzero}({ X},0)\right\}  \bigg\vert   { X}  \right]\\&=\frac{\E(Y_1^2\mid   { X}) }{e_{\minzero}({ X})}-Q^{\dr}_0^2({ X},1)+\frac{\E(Y_0^2\mid   { X}) }{1-e_{\minzero}({ X})}-Q^{\dr}_0^2({ X},0)+2 Q _{\minzero}({ X},1)Q _{\minzero}({ X},0).
% \end{align*}
The asymptotic variance $\Sigma^{\drt}_\minone( {\beta})$ can be calculated as follows:
\begin{equation} 
\label{eq:var-form}  
\begin{aligned}
\Sigma^{\drt}_\minone( {\beta})&=(SI)+(SII)\\&= {\E\left\{\frac{Y_1^2}{e_{\minzero}({ X})} \right\}}+{\E\left(\frac{1-e_{\minzero}({ X})}{e_{\minzero}({ X})} \left[ \{Q_{\minone}^{\dr}({ X},1; {{\beta}})\}^2-2  Q_{\minone}^{\dr}({ X},1; {{\beta}}) Q _{\minzero}({ X},1)\right]\right)} \\&~~~+ {\E\left\{\frac{Y_0^2}{1-e_{\minzero}({ X})} \right\}}+{\E\left(\frac{e_{\minzero}({ X})}{1-e_{\minzero}({ X})} \left[ \{Q_{\minone}^{\dr}({ X},0; {{\beta}})\}^2-2  Q_{\minone}^{\dr}({ X},0; {{\beta}})Q _{\minzero}({ X},0)\right]\right)} \\
&~~~  -2\E \left\{Q _{\minzero}({ X},0) \cdot Q_{\minone}^{\dr}({ X},1;{\beta})\right\}  - 2\E\left\{ Q _{\minzero}({ X},1) \cdot Q_{\minone}^{\dr}({ X},0;{\beta})\right\}\\&~~~ + 2\E\left\{Q^{\dr}_{\minone} ({ X},1;{\beta}) \cdot Q_{\minone}^{\dr}({ X},0;{\beta})\right\}  -\tau^2.\end{aligned}  
\end{equation} 
% where
% \begin{equation*}
%     
%     \begin{aligned}
%         &\Delta_1(X)=\E\left\{\frac{Y_1^2}{e_{\minzero}({ X})}\right\}\\
%     &{\Delta_2(X)}={\E\left\{ \frac{Q^{\dr}^2(1,X; {{\beta}})-2  Q^{\dr}(1,X; {{\beta}})Q^{\dr}_0(1,X) }{e_{\minzero}(X)} \right\}}-{\E\left[  \left\{ Q^{\dr}^2(1,X; {{\beta}})-2  Q^{\dr}(1,X; {{\beta}})Q^{\dr}_0(1,X)\right\}\right]}\\
%      &\Delta_3(X)=\E\left\{\frac{Y_0^2}{1-e_{\minzero}({ X})}\right\}\\
%  &{\Delta_4(X)}={\E\left\{ \frac{Q^{\dr}^2(0,X; {{\beta}})-2  Q^{\dr}(0,X; {{\beta}})Q^{\dr}_0(0,X) }{1-e_{\minzero}(X)} \right\}}-{\E\left[  \left\{ Q^{\dr}^2(0,X; {{\beta}})-2  Q^{\dr}(0,X; {{\beta}})Q^{\dr}_0(0,X)\right\}\right]}
% \end{aligned}
%     
% \end{equation*} 
Furthermore, through a straightforward calculation of the above expression, we know that the asymptotic variance $\Sigma^{\drt}_\minone( {\beta})$ can be equivalently expressed in the following two forms:
\begin{equation}
\label{eq:dr-var-2} \begin{aligned}
(I)&= \E\left[\frac{ \{Y_0-Q_{\minone}^{\dr}({ X},0;{\beta} )\}^2}{ 1-e_{\minzero}({ X})}\right]+\E\left[\frac{ \{Y_1-Q_{\minone}^{\dr}({ X},1;{\beta} )\}^2}{ e_{\minzero}({ X})}\right]-\tau^2 \\&~~~~ 
-\E\left[  \{ Q _{\minzero}({ X},1)-Q_{\minone}^{\dr}({ X},1; {{\beta}})\}-  \{Q _{\minzero}({ X},0)-Q_{\minone}^{\dr}({ X},0; {{\beta}})\} \right]^2 \\&~~~~ +\E  \left\{    Q _{\minzero}({ X},1)-Q _{\minzero}({ X},0 )  \right\}  ^2 ,\end{aligned} 
\end{equation}
and
\begin{equation}
\label{eq:dr-var-22} \begin{aligned}
({{II}})&= \E \left[  \sqrt{\frac{1-e_{\minzero}({ X})}{e_{\minzero}({ X})}}\{Q_{\minone}^{\dr}({ X},1; {{\beta}})- Q _{\minzero}({ X},1)\}+ \sqrt{\frac{e_{\minzero}({ X})}{1-e_{\minzero}({ X})}}\{Q_{\minone}^{\dr}({ X},0; {{\beta}})-Q _{\minzero}({ X},0)\} \right]^2  \\
&~~~ +\E\left\{\frac{Y_1^2}{e_{\minzero}({ X})} \right\}+\E\left\{\frac{Y_0^2}{1-e_{\minzero}({ X})} \right\} -\tau^2 
\\&~~~-\E \left\{  \sqrt{\frac{1-e_{\minzero}({ X})}{e_{\minzero}({ X})}} Q _{\minzero}({ X},1) + \sqrt{\frac{e_{\minzero}({ X})}{1-e_{\minzero}({ X})}} Q _{\minzero}({ X},0) \right\}^2 . \\&
\equiv \mathcal{M}_1(  {\beta})+ \mathcal{M}_2+ \mathcal{M}_3 ,   
\end{aligned} 
\end{equation}
where 
 \begin{align*} 
%  & \Delta_1({ X};{{\beta}})=\E\left[\frac{ \{Y_0-Q_{\minone}^{\dr}({ X},0;{\beta} )\}^2}{ 1-e_{\minzero}({ X})}\right]+\E\left[\frac{ \{Y_1-Q_{\minone}^{\dr}({ X},1;{\beta} )\}^2}{ e_{\minzero}({ X})}\right]-\tau^2\\
% & \Delta_2({ X};{{\beta}})=-\E\left[  \{ Q _{\minzero}({ X},1)-Q_{\minone}^{\dr}({ X},1; {{\beta}})\}+  \{Q _{\minzero}({ X},0)-Q_{\minone}^{\dr}({ X},0; {{\beta}})\} \right]^2\\&\Delta_3({ X};{{\beta}})=\E  \left\{    Q _{\minzero}({ X},1)-Q _{\minzero}({ X},0 )  \right\}  ^2 
% \\
& \mathcal{M}_1( {{\beta}})= \E \begin{bmatrix}
    \sqrt{\dfrac{1-e_{\minzero}({ X})}{e_{\minzero}({ X})}}\{Q_{\minone}^{\dr}({ X},1; {{\beta}})- Q _{\minzero}({ X},1)\}+ \sqrt{\dfrac{e_{\minzero}({ X})}{1-e_{\minzero}({ X})}}\{Q_{\minone}^{\dr}({ X},0; {{\beta}})-Q _{\minzero}({ X},0)\}
\end{bmatrix}^2  , 
\\&
\mathcal{M}_2 =\E\left\{\frac{Y_1^2}{e_{\minzero}({ X})} \right\}+\E\left\{\frac{Y_0^2}{1-e_{\minzero}({ X})} \right\}-\tau^2,\\
&\mathcal{M}_3 =-\E \left\{  \sqrt{\frac{1-e_{\minzero}({ X})}{e_{\minzero}({ X})}} Q _{\minzero}({ X},1) +\sqrt{\frac{e_{\minzero}({ X})}{1-e_{\minzero}({ X})}} Q _{\minzero}({ X},0) \right\}^2.
\end{align*} 
  \end{proof} 
    \section{The proof of double robustness property}
 \label{lem:dr-drest} 
     When the propensity score is fully known, but the outcome model may not be, the left-hand side of the above estimating equation  converges in probability to the left-hand side of \eqref{eq: dr-opt-est}, hence, $\hat{{{\beta}}}^{\drt} \stackrel{p}{\rightarrow} {{\beta}}^{\drt}$. On the other hand, even when the outcome model is correctly specified but the propensity score may not be, the left-hand side of the above estimating equation converges in probability to,
    \begin{gather*}  
\E \begin{pmatrix}
\begin{bmatrix}
     \dfrac{e_{\minzero}({ X})}{e_{\minone}( {{ X}};{\alpha}^{\ast})} \sqrt{\dfrac{1-e_{\minone}( {{ X}};{\alpha}^{\ast})}{e_{\minone}( {{ X}};{\alpha}^{\ast})}}\left\{Q_{\minzero}(X;1)-Q_{\minone}^{\dr}({ X}, 1 ; {\beta})\right\} \\\addlinespace[1.5mm]+\dfrac{1-e_{\minzero}({ X})}{1-e_{\minone}( {{ X}};{\alpha}^{\ast})}\sqrt{\dfrac{e_{\minone}( {{ X}};{\alpha}^{\ast})}{1-e_{\minone}( {{ X}};{\alpha}^{\ast})}}\left\{Q_{\minzero}(X;0)-Q_{\minone}^{\dr}({ X}, 0 ; {\beta})\right\}
\end{bmatrix} \\\addlinespace[2mm]
\times\left\{\sqrt{\dfrac{1-e_{\minone}( {{ X}};{\alpha}^{\ast})}{e_{\minone}( {{ X}};{\alpha}^{\ast})}} Q_{\minone\beta}^{\dr}({ X}, 1 ; {\beta})+\sqrt{\dfrac{e_{\minone}( {{ X}};{\alpha}^{\ast})}{1-e_{\minone}( {{ X}};{\alpha}^{\ast})}} Q_{\minone\beta}^{\dr}({ X}, 0 ; {\beta})\right\}
 \end{pmatrix}    ,  
    \end{gather*}
{which equals 0 when ${{\beta}}={{\beta}}_0$,} thus $\hat{{{\beta}}}^{\drt} \stackrel{p}{\rightarrow} {{\beta}}_0$. The following lemma formally establishes the theoretical properties of the proposed estimator ${\tau^{\dr}_{\minone}(\alpha^*, \hat{\beta}^{\drt})}$. 
 
\begin{lemma}
Under Assumptions \ref{assump: positivity} and \ref{assump: negative-control},  
the proposed estimator  has the following properties: 
\begin{enumerate}[label=(\roman*),leftmargin=30pt]
\item  When either the propensity score is fully known or the outcome  model is correctly specified, ${ {\tau} ^{\dr}_{\minone}({\alpha}^\ast,\hat{\beta} ^{\drt})}$ is a consistent estimator.
\item When the propensity score   is fully known, ${ {\tau} ^{\dr}_{\minone}(  \alpha^*, \hat{\beta} ^{\drt})}$ achieves the smallest asymptotic variance among all estimators of the form in \eqref{eq: dr-X}.
\item When the propensity score is fully known and the outcome  model is correctly specified, ${ {\tau} ^{\dr}_{\minone}({ {\alpha}}^*,\hat{\beta} ^{\drt})}$ achieves the semiparametric efficiency bound.
\end{enumerate}
\end{lemma}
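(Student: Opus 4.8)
The plan is to treat the three claims in turn, leaning on the influence-function computation of Lemma \ref{lem: if}, the two variance representations \eqref{eq:dr-var-2} and \eqref{eq:dr-var-22}, and the characterization of $\beta^\drt$ through \eqref{eq: dr-opt-est} and its sample analogue \eqref{eq:fps-ee-exp}. For the double robustness in part (i), I would combine the algebraic structure of the functional \eqref{eq: dr-X} with the two limit statements for $\hat\beta^\drt$ recorded immediately above the lemma. If the propensity score is fully known, then $\hat\beta^\drt \stackrel{p}{\rightarrow} \beta^\drt$, and conditioning on $X$ shows that the augmentation term $\{Z-e_{\minzero}(X)\}Q^\dr_{\minone}(X,1;\beta)/e_{\minzero}(X)$ and its control-arm analogue have zero mean for every fixed $\beta$, since $\E(Z\mid X)=e_{\minzero}(X)$; hence the population limit of \eqref{eq: dr-X} is $\tau$ regardless of $\beta$. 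If instead the outcome model is correct, then $\hat\beta^\drt \stackrel{p}{\rightarrow}\beta_0$ with $Q^\dr_{\minone}(X,Z;\beta_0)=Q_{\minzero}(X,Z)$, and I would substitute this limit and use $\E(ZY\mid X)=e_{\minzero}(X)Q_{\minzero}(X,1)$ to cancel the possibly incorrect weight $e_{\minone}(X;\alpha^\ast)$, recovering $\mu_1-\mu_0=\tau$. Consistency then follows from the law of large numbers and Slutsky's theorem.

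For the optimality in part (ii), when $\alpha^\ast=\alpha_{\minzero}$ Lemma \ref{lem: if} gives that the asymptotic variance of any estimator of the form \eqref{eq: dr-X} built from a root-$n$ consistent $\hat\beta$ equals $\Sigma^\drt_{\minone}(\beta^\ast)$ and depends on $\hat\beta$ only through its probability limit $\beta^\ast$. Since \eqref{eq:fps-ee-exp} is the sample analogue of the population first-order condition \eqref{eq: dr-opt-est} for minimizing $\Sigma^\drt_{\minone}(\beta)$, its root satisfies $\hat\beta^\drt \stackrel{p}{\rightarrow}\beta^\drt$. Using the representation \eqref{eq:dr-var-22}, only the term $\mathcal{M}_1(\beta)$ depends on $\beta$, and it is the expectation of a square, hence nonnegative, so the stationary point $\beta^\drt$ minimizes $\mathcal{M}_1(\beta)$ and therefore $\Sigma^\drt_{\minone}(\beta^\drt)\le\Sigma^\drt_{\minone}(\beta^\ast)$ for every competing limit $\beta^\ast$, which is exactly the smallest-variance claim.

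For the efficiency in part (iii), I would use part (ii) together with the fact that when the outcome model is also correct one has $\beta^\drt=\beta_0$, so $\mathcal{M}_1(\beta^\drt)=0$ in \eqref{eq:dr-var-22}. Substituting $\beta_0$ into the companion form \eqref{eq:dr-var-2} collapses the variance, after identifying $\E[\{Y_z-Q_{\minzero}(X,z)\}^2\mid X]$ with $\var(Y_z\mid X)$ via Assumption \ref{assump: positivity} and rewriting $\E\{Q_{\minzero}(X,1)-Q_{\minzero}(X,0)\}^2-\tau^2=\E\{\tau(X)-\tau\}^2$, to $\E[\var(Y_1\mid X)/e_{\minzero}(X)]+\E[\var(Y_0\mid X)/\{1-e_{\minzero}(X)\}]+\E\{\tau(X)-\tau\}^2$, which is precisely the semiparametric efficiency bound of \citet{Hanh1998Econometrica}. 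The main obstacle is the bookkeeping in part (ii): confirming that \eqref{eq:fps-ee-exp} genuinely targets the population minimizer of $\Sigma^\drt_{\minone}(\beta)$ and that the stationary point is a global rather than merely local minimum, which rests on the explicitly nonnegative quadratic structure of $\mathcal{M}_1(\beta)$ in \eqref{eq:dr-var-22} and is transparent, for instance, under the linear specifications \eqref{eq:dr-linear-model}.
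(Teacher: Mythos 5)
Your proposal is correct and follows essentially the same route as the paper's own proof: the same cancellation arguments via $\E\{Z-e_{\minzero}(X)\mid X\}=0$ and the zero conditional-mean residual for part (i), the definition of $\beta^{\drt}$ as the variance minimizer (with Lemma \ref{lem: if} ensuring the variance depends on $\hat\beta$ only through its limit) for part (ii), and substitution of $\beta_0$ into the representation \eqref{eq:dr-var-2} to recover the bound of \citet{Hanh1998Econometrica} for part (iii). Your added caveat that the stationary point of \eqref{eq: dr-opt-est} is a global minimizer thanks to the nonnegative quadratic structure of $\mathcal{M}_1(\beta)$ (transparent under linear specifications) is a point the paper leaves implicit, but it does not change the argument.
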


  \begin{proof} 
(i)  When the propensity score is fully known, i.e., $e_\minone(X ;{\alpha})=e_{\minzero}( { X})$, we have shown that $\hat{{\beta}}^{\drt} \stackrel{p}{\rightarrow} {\beta}^{\drt}$. It is straightforward to verify that
\begin{equation*}\begin{aligned}
 {\tau}_\minone&^{\dr}(\alpha_{\minzero},\hat{{\beta}}^\drt )\\& =  \mathbb{P}_n
 \begin{bmatrix}
   \dfrac{ZY-\{{Z-e_{\minzero}({ X})}\}Q_{{{\minone}}}^{\dr}({ X},1;\hat{{\beta}}^\drt)}{e_{\minzero}({ X})}  - \dfrac{(1- Z)Y-\{{e_{\minzero}({ X})}-Z\}Q_{{{\minone}}}^{\dr}({ X},0;\hat{{\beta}}^\drt)}{ 1 - e_{\minzero}({ X})}  
 \end{bmatrix}  \\
& =\mathbb{P}_n
 \begin{bmatrix}
   \dfrac{ZY-\{{Z-e_{\minzero}({ X})}\}Q_{{{\minone}}}^{\dr}({ X},1;{{\beta}}^\drt)}{e_{\minzero}({ X})}  - \dfrac{(1- Z)Y-\{{e_{\minzero}({ X})}-Z\}Q_{{{\minone}}}^{\dr}({ X},0;{{\beta}}^\drt)}{ 1 - e_{\minzero}({ X})}  
 \end{bmatrix} +o_p(1)\\&=  \E \begin{Bmatrix}
 \dfrac{ZY}{e_{\minzero}({ X})}  - \dfrac{(1- Z)Y }{ 1 - e_{\minzero}({ X})} 
 \end{Bmatrix}  -\E\begin{bmatrix}    \dfrac{Z-e_{\minzero}({ X})}{e_{\minzero}({ X})}   Q_{{{\minone}}}^{\dr}({ X},1; {{\beta}}^\drt)-\dfrac{e_{\minzero}({ X})-Z}{ 1-e_{\minzero}({ X})}   Q_{{{\minone}}}^{\dr}({ X},0; {{\beta}}^\drt)  \end{bmatrix} +o_p(1)\\&= \tau+o_p(1).
\end{aligned} \end{equation*} 
When the outcome model is correct, we have  $\hat{{\beta}}^{\drt} \stackrel{p}{\rightarrow} {{\beta}}^{\drt}={\beta}_0$ and $Q_{\minone}(   X,Z ; {\beta}^\drt)=Q_{\minzero}(   X,Z)$ for the true value  ${\beta}_0$, but the propensity score may not be, $e_\minone( { X};\alpha^\ast)\neq e_{\minzero}( { X})$.  
  Thus,\begin{align*}  \begin{aligned}
  {\tau}&^{\dr}_\minone(\alpha^\ast,\hat{{\beta}}^\drt )\\& =  \mathbb{P}_n
 \begin{bmatrix}
   \dfrac{ZY-\{{Z-e_{\minone}({ X};\alpha^\ast)}\}Q_{{{\minone}}}^{\dr}({ X},1;\hat{{\beta}}^\drt)}{e_\minone( { X};{\alpha^\ast})}  - \dfrac{(1- Z)Y-\{{e_\minone( { X};{\alpha^\ast})}-Z\}Q_{{{\minone}}}^{\dr}({ X},0;\hat{{\beta}}^\drt)}{ 1 - e_\minone( { X};{\alpha^\ast})}  
 \end{bmatrix}\\ 
& =\mathbb{P}_n\{Q_{\minzero}( { X},1 )-Q_{\minzero}( { X},0)\} +\mathbb{P}_n \begin{bmatrix}
   \dfrac{Z\{Y-Q_{\minzero}( { X},1) \}}{e_\minone( { X};{\alpha^\ast})}  -   \dfrac{(1-Z)\{Y-Q_{\minzero}( { X},0 )\}}{1-e_\minone( { X};{\alpha^\ast})} 
 \end{bmatrix} +o_p(1) \\&=\E\{Q_{\minzero}( { X},1 )-Q_{\minzero}( { X},0)\} +\E \begin{bmatrix}
   \dfrac{Z\{Y-Q_{\minzero}( { X},1) \}}{e_\minone( { X};{\alpha^\ast})}  -   \dfrac{(1-Z)\{Y-Q_{\minzero}( { X},0 )\}}{1-e_\minone( { X};{\alpha^\ast})} 
 \end{bmatrix} +o_p(1)\\&= \tau+o_p(1).
\end{aligned} \end{align*}  

(ii) Since $\hat{{\beta}}^{\drt} \stackrel{p}{\rightarrow} {\beta}^{\drt}$ when the propensity score is correct, by definition of ${\beta}^{\drt}$, it is trivial that $ {\tau}^{\dr}_\minone(\alpha_{\minzero},\hat{{\beta}}^\drt )$ achieves the smallest variance among class of estimators \eqref{eq: dr-X}.

(iii) 
When both models are correctly specified, the asymptotic variance of the estimator can be expressed as follows, utilizing the first term in   \eqref{eq:dr-var-2}: 
\begin{equation}
    \label{eq:seb}
\begin{aligned}
   \Sigma^{\drt}_\minone( {\beta}_\minzero)&= \E\left[\frac{ \{Y_0-Q_{\minone}^{\dr}({ X},0  )\}^2}{ 1-e_{\minzero}({ X})}\right]+\E\left[\frac{ \{Y_1-Q_{\minone}^{\dr}({ X},1 )\}^2}{ e_{\minzero}({ X})}\right] \\&~~~+\E  \left\{    Q _{\minzero}({ X},1)-Q _{\minzero}({ X},0 )  \right\}  ^2 -\tau^2   ,
\end{aligned}
\end{equation}
which is equivalent to the semiparametric efficiency bound \citep{Hanh1998Econometrica,hirano2003efficient}.
\end{proof}
  \section{The proof of  Theorem \ref{thm:tps}}
  \label{sec:avar-tps}
 \subsection{Preliminaries}
 \label{ssec:Preliminary}
 We define $O = (Y, Z, X)$, and let ${\theta}$ represent the set of unknown parameters involved in deriving estimators for $\tau$. The asymptotic variance of ${\tau}^{\dr}_\minone(\hat{\alpha},\hat{{\beta}})$ is determined by solving a system of M-estimating equations $\sum_{i=1}^n {m}({ O}_i, {\theta}) = 0$ \citep{stefanski2002calculus}. In this system, the last component of ${m}({ O}_i, {\theta})$ corresponds to the estimating equation for $\tau$. We denote $\theta_{\ast}$ as the value that satisfies $\mathbb{E}\left\{{m}({ O}_i,\theta_{\ast})\right\} = 0$. 
 According to standard M-estimation theory: $$\sqrt{n}(\hat{{\theta}}-\theta_{\ast}) \stackrel{D}{\rightarrow} N(0, {B}(\theta_{\ast})^{-1} {D}(\theta_{\ast})\left\{{B}(\theta_{\ast})^{-1}\right\}^{\T}),$$where ${B}({\theta}) = \mathbb{E}\left\{\partial {m}({O}_i, {\theta}) / \partial {\theta}^{\T}\right\}$ and ${D}({\theta}) = \mathbb{E}\left\{{m}({O}_i, {\theta}) {m}^{\T}({O}_i, {\theta})\right\}$. Consequently, the asymptotic variance of ${\tau}^{\dr}_\minone(\hat{\alpha},\hat{{\beta}})$  is located in the last, rightmost diagonal entry of the corresponding matrix   ${B}(\theta_{\ast})^{-1} {D}(\theta_{\ast}) \left\{{B}(\theta_{\ast})^{-1}\right\}^{\T}$. Without loss of generality, we assume that 
  \begin{align*}
{B}({{\theta}})=\begin{Bmatrix}
    {B}_1({{\theta}}) & {0} \\
{B}_2({{\theta}}) & -1
\end{Bmatrix}, \quad {D}({{\theta}})=\begin{bmatrix}
{D}_{11}({{\theta}}) & {D}_{12}({{\theta}}) \\
 {D}_{12}^\T ({{\theta}}) & {D}_{22}({{\theta}})
\end{bmatrix}
\end{align*}
By some simple algebras, we know that
 $$
        \sqrt{n}\{{ {\tau}^{\dr}_{\minone}( {\alpha} ,{\hat{\beta}}  )}-\tau\} \stackrel{d}{\rightarrow} N(0, \Lambda ^\ast),  $$
        where  ${{\Lambda}} ^\ast = {B}_2({{\theta^\ast}}) {B}_1^{-1}({{\theta^\ast}})  D_{11}({\theta^\ast}) \left\{{B}_1^{-1}(\theta_{\ast})\right\}^{\T}  {B}_2^{\T}({{\theta^\ast}}) -2{B}_2({{\theta^\ast}}) \left\{{B}_1^{-1}(\theta_{\ast})\right\}^{\T}{D}_{12}({{\theta^\ast}}) +{D}_{22}({{\theta^\ast}}).  $
In the following discussions, we use ${0}_{a \times b}$ to denote a zero matrix with $a$ rows and $b$ columns. Sometimes we omit the dimension when there is no confusion.

 % In the following discussions, we assume that $\gamma$ is a $q$-dimensional vector, i.e., $\gamma= (\gamma_1, \ldots, \gamma_q)$ and ${\beta}$ is a $s$-dimensional vector, i.e., ${\beta}=(\beta_1, \ldots, \beta_s)$. We use ${0}_{a \times b}$ to denote a zero matrix with $a$ rows and $b$ columns. Sometimes we omit the dimension when there is no confusion. 
 \subsection{The formal proof of Theorem \ref{thm:tps}}
 We first rewrite Theorem \ref{thm:tps} in the main text as follows:
  \begin{theorem} 
    \label{thm:tps-dr-sm} Let ${\theta} ^{\drt}_\ast=\{({{\beta}}^\drt)^\T, \tau\}^\T$, where ${{\beta}}^{\drt}$ solves the following equation:
\begin{align*} 
\E  \begin{pmatrix}
\begin{bmatrix}
     \dfrac{Z}{e_{\minone}( {{ X}};{\alpha}^{\ast})} \sqrt{\dfrac{1-e_{\minone}( {{ X}};{\alpha}^{\ast})}{e_{\minone}( {{ X}};{\alpha}^{\ast})}}\left\{Y-Q_{\minone}^{\dr}({ X}, 1 ; {\beta}^{\drt})\right\}\\\addlinespace[1.5mm]+\dfrac{1-Z}{1-e_{\minone}( {{ X}};{\alpha}^{\ast})}\sqrt{\dfrac{e_{\minone}( {{ X}};{\alpha}^{\ast})}{1-e_{\minone}( {{ X}};{\alpha}^{\ast})}}\left\{Y-Q_{\minone}^{\dr}({ X}, 0 ; {\beta}^{\drt})\right\}
\end{bmatrix} \\\addlinespace[2mm]
\times\left\{\sqrt{\dfrac{1-e_{\minone}( {{ X}};{\alpha}^{\ast})}{e_{\minone}( {{ X}};{\alpha}^{\ast})}} Q_{\minone\beta}^{\dr}({ X}, 1 ; {\beta}^{\drt})+\sqrt{\dfrac{e_{\minone}( {{ X}};{\alpha}^{\ast})}{1-e_{\minone}( {{ X}};{\alpha}^{\ast})}} Q_{\minone\beta}^{\dr}({ X}, 0 ; {\beta}^{\drt})\right\}
 \end{pmatrix}   =0.    
\end{align*}  Given Assumptions \ref{assump: positivity} and \ref{assump: negative-control}, when either the propensity score is fully known or the outcome model is correctly specified,  we have,  $$
        \sqrt{n}\{{ {\tau}^{\dr}_{\minone}( {\alpha} ,{\hat{\beta}}^\drt  )}-\tau\} \stackrel{d}{\rightarrow} N(0, \Lambda_\minone^\drt),  $$
   where $ \Lambda_\minone^\drt={{\Lambda}^\drt}({\theta}_\ast^\drt )$,  
\begin{gather*}
  {B}_1^{\drt}({{\theta}}) =\E \left({\partial  {   m}_1^\drt}/{\partial\beta^\T}\right),~~%&=  \E\left[\dfrac{ \{Y-Q_{\minone}^\dr({ X},Z;{\beta}  )\}Q_{\beta\beta}({ X},Z;{\beta}  )}{  e^2 ({ X},Z;{\alpha}) }-\dfrac{ \{Y-Q_{\minone}^\dr({ X},Z;{\beta}  )\}Q_{\beta }({ X},Z;{\beta}  )Q_{\beta }^\T({ X},Z;{\beta}  )}{  e^2 ({ X},Z;{\alpha}) } \right].\\ 
  {B}_2^{\drt}({{\theta}}) =\E\left({\partial  {   m}_2^\drt}/{\partial\beta^\T}\right),\\%\\&=-\E\left[\dfrac{( -1)^{1-Z}\{Y-  Q_{\minone}^\dr({ X},Z; {{\beta}} )\}Q_ {{\beta}}^\T({ X},Z; {{\beta}} )}{e_\minone({ X},Z;{\alpha})} \right] .\\{D}_{11}^\drt({{\theta}})&=
{D}_{11}^\drt({{\theta}}) =\E\{  {   m}_1^\drt({    m}_1^\drt)^\T\},~~
{D}_{12}^\drt({{\theta}}) =\E(  {   m}_1^\drt  {  m}_2^\drt) , ~~
{D}_{22}^\drt({{\theta}}) =\E ( {  m}_2^\drt)^2,\\{{\Lambda}^\drt}({\theta} ) = {B}_2^{\drt}({{\theta}})\{{B}_1^{\drt}({{\theta}})\}^{-1}  D_{11}^{\drt}({\theta})\{{B}_1^{\drt}({{\theta}})\}^{-\T}\{{B}_2^{\drt}({{\theta}})\}^\T-2{B}_2^{\drt}({{\theta}}) \{{B}_1^{\drt}({{\theta}})\}^{-\T}{D}_{12}^{\drt}({{\theta}}) +{D}_{22}^{\drt}({{\theta}}).
\end{gather*}
% \begin{align*} 
% \begin{aligned}
%      \E\left[\frac{ (1-Z)\{Y-Q_{\minone}^\dr({ X},0;{\beta}^\drt_\ast  )\}Q^\dr_{{\minone\beta}}({ X},0;{\beta}^\drt_\ast  )}{ \{1-e_{\minone} ({ X};{\alpha})\}^2}+\frac{ Z\{Y -Q_{\minone}^\dr({ X},1;{\beta}^\drt_\ast )\}Q^\dr_{{\minone\beta}}({ X},1;{\beta}^\drt_\ast  )}{  e ^2({ X};{\alpha})}\right] =0 ,
% \end{aligned}
% \end{align*}
  Here we suppress the notations by writing  ${m}_1 ^\drt\equiv    {   m}_1^\drt({ O};{\theta})$ and $ {m}_2 ^\drt\equiv    {   m}_2^\drt({ O};{\theta})$, where 
\begin{align*}
    m^\drt({ O};{\theta})&=\left\{         {   m}_1^\drt({ O};{\theta}),m_2^\drt({ O};{\theta})\right\}^\T,\\   {   m}_1^\drt({ O};{\theta}) & = \begin{pmatrix}
\begin{bmatrix}
     \dfrac{Z}{e_{\minone}( {{ X}};{\alpha}^{\ast})} \sqrt{\dfrac{1-e_{\minone}( {{ X}};{\alpha}^{\ast})}{e_{\minone}( {{ X}};{\alpha}^{\ast})}}\left\{Y-Q_{\minone}^{\dr}({ X}, 1 ; {\beta})\right\}\\\addlinespace[1.5mm]+\dfrac{1-Z}{1-e_{\minone}( {{ X}};{\alpha}^{\ast})}\sqrt{\dfrac{e_{\minone}( {{ X}};{\alpha}^{\ast})}{1-e_{\minone}( {{ X}};{\alpha}^{\ast})}}\left\{Y-Q_{\minone}^{\dr}({ X}, 0 ; {\beta})\right\}
\end{bmatrix} \\\addlinespace[2mm]
\times\left\{\sqrt{\dfrac{1-e_{\minone}( {{ X}};{\alpha}^{\ast})}{e_{\minone}( {{ X}};{\alpha}^{\ast})}} Q_{\minone\beta}^{\dr}({ X}, 1 ; {\beta})+\sqrt{\dfrac{e_{\minone}( {{ X}};{\alpha}^{\ast})}{1-e_{\minone}( {{ X}};{\alpha}^{\ast})}} Q_{\minone\beta}^{\dr}({ X}, 0 ; {\beta})\right\}
 \end{pmatrix} ,\\
  {   m}_2^\drt({ O};{\theta})&=   \frac{ZY}{e_{\minone}( {{ X}};{\alpha}^{\ast})}-\frac{(1- Z)Y}{ 1-e_{\minone}( {{ X}};{\alpha}^{\ast})}-   \frac{Z-e_{\minone}( {{ X}};{\alpha}^{\ast})}{e_{\minone}( {{ X}};{\alpha}^{\ast})}   Q_{\minone}^\dr({ X},1; {{\beta}})\\&~~~+\frac{e_{\minone}( {{ X}};{\alpha}^{\ast})-Z}{ 1-e_{\minone}( {{ X}};{\alpha}^{\ast})}   Q_{\minone}^\dr({ X},0; {{\beta}})-\tau . ~~~~~~~~~~~~~~~~~~~~~~~~~~~
\end{align*}

\end{theorem}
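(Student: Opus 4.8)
The plan is to recast the estimator as the solution of a stacked system of M-estimating equations in $\theta=(\beta^\T,\tau)^\T$ and then read the asymptotic variance off the sandwich formula recorded in Section \ref{ssec:Preliminary}. Concretely, I would set $m^{\drt}(O;\theta)=\{m_1^{\drt}(O;\theta),m_2^{\drt}(O;\theta)\}^\T$, where $m_1^{\drt}$ is the $\beta$-estimating function on the left of \eqref{eq:fps-ee-exp} and $m_2^{\drt}$ is the summand of \eqref{eq: dr-X} minus $\tau$, exactly as in the restatement. The crucial structural observation, which I would make first, is that because the propensity score is fixed at $\alpha^\ast$ there is \emph{no} nuisance parameter $\alpha$ to estimate; hence the only unknowns are $\beta$ and $\tau$, and the whole problem reduces to a low-dimensional, exactly identified M-estimation problem with $\dim(\beta)+1$ equations.

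Second, I would verify the population moment condition $\E\{m^{\drt}(O;\theta^{\drt}_\ast)\}=0$ at $\theta^{\drt}_\ast=\{(\beta^{\drt})^\T,\tau\}^\T$. The $\beta$-block $\E\{m_1^{\drt}\}=0$ holds by the definition of $\beta^{\drt}$ when the propensity score is known, and holds at $\beta_0$ when the outcome model is correct, which are precisely the two regimes analysed around \eqref{eq: dr-opt-est} and in the double robustness lemma of Section \ref{lem:dr-drest}. The $\tau$-block $\E\{m_2^{\drt}\}=0$ is exactly the consistency (double robustness) of the DR functional, again available from that lemma. This step is what pins attention to the hypotheses: if both working models were wrong, $\E\{m_2^{\drt}\}$ need not vanish and $\theta^{\drt}_\ast$ would fail to be the probability limit of $\hat\theta$.

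Third, under the usual regularity conditions (smoothness of the working models in $\beta$, finiteness of the relevant second moments guaranteed by the overlap in Assumption \ref{assump: positivity}(ii), nonsingularity of $B_1^{\drt}$, and $\sqrt{n}$-consistency of $\hat\beta^{\drt}$), I would invoke the M-estimation central limit theorem of \citet{stefanski2002calculus} to obtain $\sqrt{n}(\hat\theta-\theta^{\drt}_\ast)\stackrel{d}{\rightarrow}N(0,B^{-1}D(B^{-1})^\T)$, with $B=\E\{\partial m^{\drt}/\partial\theta^\T\}$ and $D=\E\{m^{\drt}(m^{\drt})^\T\}$. The final, largely bookkeeping, step exploits that $m_1^{\drt}$ does not involve $\tau$ while $\partial m_2^{\drt}/\partial\tau=-1$, so $B$ takes the block-lower-triangular form displayed in Section \ref{ssec:Preliminary}; inverting that form and extracting the bottom-right entry of $B^{-1}D(B^{-1})^\T$ yields exactly $\Lambda^{\drt}(\theta^{\drt}_\ast)$. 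A useful internal check is that when the propensity score is correct, $B_2^{\drt}=\E\{\partial m_2^{\drt}/\partial\beta^\T\}=0$ by the same conditioning argument as in Lemma \ref{lem: if}, collapsing $\Lambda^{\drt}$ to $D_{22}^{\drt}=\Sigma_{\minone}^{\drt}(\beta^{\drt})$, consistent with the statement following the theorem.

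The main obstacle will be the second step rather than the algebra: carefully confirming double robustness of the moment condition in the regime where the propensity score is misspecified but the outcome model is correct, where one must use $Q_{\minone}^{\dr}(X,Z;\beta_0)=Q_{\minzero}(X,Z)$ together with Assumptions \ref{assump: positivity} and \ref{assump: negative-control} to show that both $\E\{m_1^{\drt}\}$ and $\E\{m_2^{\drt}\}$ vanish despite the incorrect weights. Once $\theta^{\drt}_\ast$ is identified and $B_1^{\drt}$ is shown to be invertible, the passage to asymptotic normality and the extraction of $\Lambda^{\drt}$ are a routine application of the sandwich formula.
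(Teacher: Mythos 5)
Your proposal is correct and follows essentially the same route as the paper's proof: stacking the $\beta$-estimating equation \eqref{eq:fps-ee-exp} with the DR moment condition for $\tau$, invoking the M-estimation sandwich theorem of \citet{stefanski2002calculus}, and exploiting the block-lower-triangular form of $B^{\drt}$ (since $m_1^{\drt}$ is free of $\tau$ and $\partial m_2^{\drt}/\partial\tau=-1$) to extract $\Lambda^{\drt}(\theta_\ast^{\drt})$ as the bottom-right entry. Your added verifications --- checking $\E\{m^{\drt}(O;\theta_\ast^{\drt})\}=0$ in both regimes via the double robustness lemma, and noting that $B_2^{\drt}=0$ when the propensity score is correct so that $\Lambda^{\drt}$ collapses to $D_{22}^{\drt}=\Sigma_{\minone}^{\drt}(\beta^{\drt})$ --- are details the paper leaves implicit, and they are consistent with its remarks following Theorem \ref{thm:tps}.
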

  We next present the proof of  Theorem \ref{thm:tps} as follows:
\begin{proof} 
 
  For the  improved DR estimator ${ {\tau} _\minone(  {\alpha} ,\hat{\beta}^\drt  )}$,  by M-estimation theory,  the proposed estimators ${\hat{\theta}}_\ast^{\drt}=\left\{({\hat{\beta}}^\drt)^\T, \tau^\dr_\minone(  {\alpha} ,\hat{\beta}^\drt  )\right\}^\T$ are  asymptotically normal with 
\begin{align*}
\sqrt{n}(\hat{{{\theta}}}^{\drt}-{{\theta}}_\ast^{\drt}) \stackrel{D}{\rightarrow} N\left(0,\left\{{B}^{\drt}({{\theta}}_\ast^{\drt})\right\}^{-1} {D}^{\drt}({{\theta}}_\ast^{\drt}) \left\{{B}^{\drt}({{\theta}}_\ast^{\drt})\right\}^{-\T} \right),
\end{align*}
where
\begin{align*}
{B}^{\drt}({{\theta}})=\begin{Bmatrix}
    {B}_1^{\drt}({{\theta}}) & {0}  \\
{B}_2^{\drt}({{\theta}}) & -1
\end{Bmatrix}, \quad {D}^{\drt}({{\theta}})=\begin{bmatrix}
{D}_{11}^{\drt}({{\theta}}) & {D}_{12}^{\drt}({{\theta}}) \\
\left\{{D}_{12}^{\drt}({{\theta}})\right\}^\T & {D}_{22}^{\drt}({{\theta}})
\end{bmatrix}
\end{align*}
with all the quantities defined in the theorem. The rest is by the results in Section \ref{ssec:Preliminary}.

\end{proof}
\section{The proof of Theorem  \ref{thm:com-drt}}
\label{sec:var-tps}
\begin{proof} 
\label{proof:var-tps-aipw}
We proceed the proof as follows:
\begin{enumerate}
    
\item  Under Assumption \ref{cond:two-res-aipw}(i),    we represent an extended vector as $\tilde{\delta} = \left\{({\hat\beta} ^\drt)^\T, 0_{1\times m} \right\}^\T$ to accommodate the dimension of $\delta$, where $m$ is the dimension of the parameters involving $W$ that need to be expanded. Therefore, $Q_{\minone}^\dr({ X},Z; {{\hat\beta}^\drt})=Q_{\mintwo}^\dr( X  ,W,Z;\tilde{\delta})$. Moreover, according to \eqref{eq: dr-X} and \eqref{eq: dr-XW}, we know that 
    \begin{equation*}
 \begin{aligned}
         {\tau}^{\dr}_{\minone}(  {\alpha}_{\minzero}, {\hat\beta} ^\drt ) &=  
 \mathbb{P}_n  \begin{bmatrix}
 \dfrac{ZY-\{{Z-e_{\minzero}(  {{ X}} )}\}Q^{\dr}_{\minone}( {{ X}},1;{\hat\beta} ^\drt)}{e_{\minzero}( {{ X}} )} \\\addlinespace[1mm] - \dfrac{(1- Z)Y-\{{e_{\minzero}( {{ X}},W)}-Z\}Q^{\dr}_{\minone}( {{ X}},0; {\hat\beta}^\drt)}{ 1 - e_{\minzero}( {{ X}} )}  
 \end{bmatrix},\\
 {\tau}^{\dr}_{\mintwo}(  {\alpha}_{\minzero},  {\gamma}_{\minzero}, \tilde{\delta}  )& =  
 \mathbb{P}_n  \begin{bmatrix}
 \dfrac{ZY-\{{Z-e_{\minzero}(  {{ X}},W )}\}Q_{\mintwo}^\dr( X  ,W,1;\tilde{\delta})}{e_{\minzero}( {{ X}},W )} \\\addlinespace[1mm] - \dfrac{(1- Z)Y-\{{e_{\minzero}( {{ X}},W)}-Z\}Q_{\mintwo}^\dr( X  ,W,0;\tilde{\delta})}{ 1 - e_{\minzero}( {{ X}} ,W)}  
 \end{bmatrix}.  
 \end{aligned}
\end{equation*} Since $e_{\minzero}({{X}}) = e_{\minzero}({{X}},W)$, it is obvious that ${\tau}^{\dr}_{\minone}({\alpha}_{\minzero}, {\hat\beta}^\drt)$ is exactly equal to ${\tau}^{\dr}_{\mintwo}({\alpha}_{\minzero}, {\gamma}_{\minzero}, \tilde{\delta})$. It follows that ${\tau}^{\dr}_{\minone}({\alpha}_{\minzero}, {\hat\beta}^\drt)$ and ${\tau}^{\dr}_{\mintwo}({\alpha}_{\minzero}, {\gamma}_{\minzero}, \tilde{\delta})$ should have exactly the same asymptotic variance, i.e., $\Sigma^\drt_{\minone}({\beta}^\drt) = \Sigma^\drt_{\mintwo}(\tilde{\delta}^\ast)$, where $\tilde{\delta}^\ast = ({\beta}^\drt, 0_{1\times m})^\T$. 

  Therefore, by the definition of ${\delta}^\drt$, we have that   $  \Sigma_\minone^\drt( {\beta}^\drt )=\Sigma_\mintwo^\drt ({\tilde\delta} ) \geq \Sigma_\mintwo^\drt  ( {\delta}^\drt )   $. 
  
 %  {\red Least square $$\begin{aligned}
 %      \delta^{\ls}&=\min_{\delta}\E\{Q_{\minzero}(X,W,Z)-Q(X,W,Z;\delta)\}^2\\&\Longleftrightarrow  \E\big[\{Q_{\minzero}(X,W,Z)-Q (X,W,Z;\delta^{\ls})\}Q _\delta(X,W,Z;\delta^{\ls})\big]=0\\&\Longleftrightarrow  \E\big[\{Y-Q (X,W,Z;\delta^{\ls})\}Q _\delta(X,W,Z;\delta^{\ls})\big]=0
 %  \end{aligned}$$
 % For any $\delta$, we hence have $$ \E\{Q_{\minzero}(X,W,Z)-Q(X,W,Z;\delta^{\ls})\}^2\leq  \E\{Q_{\minzero}(X,W,Z)-Q(X,W,Z;\delta)\}^2$$}
  \item  Under Assumption \ref{cond:two-res-aipw}(i), we know that $Q_{\minone}(   X,W,Z ; {\delta}^\drt)=Q_{\minzero}(   X,W,Z)$. Using the second form in \eqref{eq:dr-var-22}, we   have\begin{align*} 
 \Sigma_\mintwo^\drt ( {\delta}^\drt)  =&\E\left\{\frac{Y_1^2}{e_{\minzero}({ X},W)} \right\}+\E\left\{\frac{Y_0^2}{1-e_{\minzero}({ X},W)} \right\} -\tau^2 
\\&~~~-\E\left[\left\{  \sqrt{\frac{1-e_{\minzero}({ X},W)}{e_{\minzero}({ X},W)}} Q _{\minzero}({ X},W,1) + \sqrt{\frac{e_{\minzero}({ X},W)}{1-e_{\minzero}({ X},W)}} Q _{\minzero}({ X},W,0) \right\}^2\right].
  \end{align*}Moreover, let $ \Sigma^{\mathrm{eff}} _\minone
 $ represent  the semiparametric efficiency bound corresponding to the estimators defined based only on \(X\) \citep{Hanh1998Econometrica,hirano2003efficient}, that is,\begin{align*}
     \Sigma^{\mathrm{eff}}_\minone
  =  &\E\left\{\frac{Y_1^2}{e_{\minzero}({ X})} \right\}+\E\left\{\frac{Y_0^2}{1-e_{\minzero}({ X})} \right\} -\tau^2 
\\&~~~-\E\left[\left\{  \sqrt{\frac{1-e_{\minzero}({ X})}{e_{\minzero}({ X})}} Q _{\minzero}({ X},1) + \sqrt{\frac{e_{\minzero}({ X})}{1-e_{\minzero}({ X})}} Q _{\minzero}({ X},0) \right\}^2\right].
 \end{align*}  
Since $e_{\minzero}(X)=e_{\minzero}(X,W)$, by Jensen's inequality, we know that 
\begin{gather*}
    \E\left[\left\{  \sqrt{\frac{1-e_{\minzero}({ X},W)}{e_{\minzero}({ X},W)}} Q _{\minzero}({ X},W,1) + \sqrt{\frac{e_{\minzero}({ X},W)}{1-e_{\minzero}({ X},W)}} Q _{\minzero}({ X},W,0) \right\}^2\right]\\
   \leq  \E\left[\left\{  \sqrt{\frac{1-e_{\minzero}({ X})}{e_{\minzero}({ X})}} Q _{\minzero}({ X},1) + \sqrt{\frac{e_{\minzero}({ X})}{1-e_{\minzero}({ X})}} Q _{\minzero}({ X},0) \right\}^2\right],
\end{gather*}
and hence,
\begin{align*}
  \Sigma^\drt_\minone ( {\beta}^\drt)\geq    \Sigma^{\mathrm{eff}}_\minone
  \geq  \Sigma_\mintwo^\drt  ( {\delta}^\drt )  .
\end{align*}
\end{enumerate}
\end{proof} 
\section{The proof of Lemma \ref{lem:if-eps}} 
\label{sec:der-est-expression}\begin{proof} 
  When the propensity score model is correctly specified, we have
  \begin{align*} 
 \sqrt{n} \{ {\tau}^{\dr}_{\minone}(\hat{\alpha},\hat{\beta} )-\tau\}  
&=  \sqrt{n}\left[\frac{1}{n} \sum_{i=1}^n\left\{\frac{Y_i Z_i}{e_\minone({ X}_i;\hat{\alpha})}-\frac{Z_i-e_\minone({ X}_i;\hat{\alpha}) }{e_\minone({ X}_i;\hat{\alpha})} Q_{\minone}({ X}_i,1; \hat{\beta}  )\right\}-{  \mu_1}\right] \\
& ~~~~-\sqrt{n}\left[\frac{1}{n} \sum_{i=1}^n\left\{\frac{Y_i (1-Z_i)}{1-e_\minone({ X}_i;\hat{\alpha})}-\frac{ e_\minone({ X}_i;\hat{\alpha})-Z_i}{1-e_\minone({ X}_i;\hat{\alpha})} Q_{\minone}({ X}_i,0; \hat{\beta}  )\right\}-{  \mu_0}\right] \\
 & =\frac{1}{\sqrt{n}} \sum_{i=1}^n\left\{\frac{Y_i Z_i}{e_\minone({ X}_i;\hat{\alpha})}-\frac{Z_i-e_\minone({ X}_i;\hat{\alpha}) }{e_\minone({ X}_i;\hat{\alpha})} Q_{\minone}({ X}_i,1; {\beta} ^*)-{  \mu_1}\right\}\\&~~~~  - \frac{1}{\sqrt{n}} \sum_{i=1}^n\left\{\frac{Y_i (1-Z_i)}{1-e_\minone({ X}_i;\hat{\alpha})}-\frac{e_\minone({ X}_i;\hat{\alpha})-Z_i}{1-e_\minone({ X}_i;\hat{\alpha})} Q^\dr_{\minone}({ X}_i,0; {\beta} ^*)-{  \mu_0}\right\}+o_p(1) .
\end{align*} 
Now we expand $\hat{{\alpha}}$ about ${\alpha}_0$ to obtain
\begin{align*}
\begin{aligned}
 \sqrt{n} \{ {\tau}^{\dr}_{\minone}(\hat{\alpha},\hat{\beta} )-\tau\} & =\frac{1}{\sqrt{n}} \sum_{i=1}^n {\varphi}^\dr_{\minone}(Y_i, Z_i, { X}_i;{\alpha}_0, {{\beta}}^*) \\
& ~~~+\left\{\frac{1}{n} \sum_{i=1}^n \frac{\partial {\varphi}^\dr_{\minone}(Y_i, Z_i, { X}_i;{\alpha}^*, {{\beta}}^*)}{\partial {\alpha}^\T}\right\} \sqrt{n}(\hat{{\alpha}}-{\alpha}_0)+o_p(1),
\end{aligned}
\end{align*}
where $\tilde{\alpha}$ is some intermediate value between $\hat{{\alpha}}$ and ${\alpha}_0$. Since under regularity conditions, $\tilde{\alpha}$ 
 converges in probability to ${\alpha}_0$, we obtain
\begin{align*}
\frac{1}{n} \sum_{i=1}^n \frac{\partial {\varphi}^\dr_{\minone}(Y_i, Z_i, { X}_i;\tilde{\alpha}, {{\beta}}^*)}{\partial {\alpha}^\T} \stackrel{p}{\rightarrow} \E\left\{\frac{\partial {\varphi}^\dr_{\minone}(Y_i, Z_i, { X}_i;{\alpha}_0, {{\beta}}^*)}{\partial {\alpha}^\T}\right\} .
\end{align*}
Using standard results from finite-dimensional parametric models, we know that
\begin{align*}
\sqrt{n}(\hat{{\alpha}}-{\alpha}_0)=\frac{1}{\sqrt{n}} \sum_{i=1}^n\left[\E\left\{S_\alpha(Z_i,{ X}_i;{\alpha}_0) S^\T_\alpha(Z_i,{ X}_i;{\alpha}_0)\right\}\right]^{-1} S_\alpha(Z_i,{ X}_i;{\alpha}_0)+o_p(1).
\end{align*}
We hence deduce that the influence function is
\begin{align*}
\begin{aligned}
 \sqrt{n}& \{ {\tau}^{\dr}_{\minone}(\hat{\alpha},\hat{\beta} )-\tau\} \\=& \frac{1}{\sqrt{n}} \sum_{i=1}^n {\varphi}^\dr_{\minone}(Y_i, Z_i, { X}_i;{\alpha}_0, {{\beta}}^*) \\
& ~~+\frac{1}{\sqrt{n}} \sum_{i=1}^n\left(\E\left\{\frac{\partial {\varphi}^\dr_{\minone}(Y_i, Z_i, { X}_i;{\alpha}_0, {{\beta}}^*)}{\partial {\alpha}^{\T}} \right\} \left[\E\left\{S_\alpha(Z_i,{ X}_i;{\alpha}_0) S^{\T}_\alpha(Z_i,{ X}_i;{\alpha}_0)\right\}\right]^{-1} S_\alpha(Z_i,{ X}_i;{\alpha}_0) \right)\\&~~+o_p(1) .
\end{aligned}
\end{align*}
Thus, the influence function is
\begin{align*}
\varphi^\dr_{\minone}(Y, Z, { X};{\alpha}_0, {{\beta}}^*)+\E\left\{\frac{\partial \varphi^\dr_{\minone}(Y, Z, { X};{\alpha}_0, {{\beta}}^*)}{\partial {\alpha}^{\T}} \right\} \left[\E\left\{S_\alpha(Z, { X};{\alpha}_0) S_\alpha^{\T}(Z, { X};{\alpha}_0)\right\}\right]^{-1} S_\alpha(Z, { X};{\alpha}_0) .
\end{align*}
\end{proof} 

  \section{The  asymptotic variance when   propensity score is correctly specified}
 \begin{proof}
 \label{proof:der-est-expression}
     The inﬂuence function  $ \tilde   \varphi^\dr_{\minone}(Y, Z, { X};{\alpha}_0, {{\beta}}^*)$  can be rewritten as
\begin{align*}
  \tilde   \varphi^\dr_{\minone}(Y, Z, { X};{\alpha}_0, {{\beta}}^*)&=\frac{ZY}{e_{\minzero}({ X})}-\frac{(1- Z)Y}{ 1-e_{\minzero}({ X})}-\tau   -   \frac{Z-e_{\minzero}({ X}) }{e_{\minzero}({ X})}   Q_{\minone}^\dr({ X},1; {{\beta}}^*)\\&~~~~+\frac{  e_{\minzero}({ X})-Z }{ 1-e_{\minzero}({ X})}   Q_{\minone}^\dr({ X},0; {{\beta}}^*) -{\mathcal{D}}_{\minone}^\dr( {{\beta}}^*) \mathcal{H}_{{\alpha} {\alpha},  \minzero}^{-1} S_{\alpha}(Z,{ X}; {\alpha}_0).
\end{align*}We hence have 
    \begin{align*}
  \Sigma_\minone^\dre (\beta) &=  \mathrm{var}\{   \tilde   \varphi^\dr_{\minone}(Y, Z, { X};{\alpha}_0, {\beta})\}\\&=\E\{   \tilde  \varphi^\dr_{\minone}(Y, Z, { X};{\alpha}_0, {\beta})\}^2\\&=\underbrace{\E\left\{\frac{ZY}{e_{\minzero}({ X})}-\frac{(1-Z)Y}{1-e_{\minzero}({ X})}\right\}^2}_{B_1}-\tau ^2\\&~~~+\underbrace{\E\left\{  \frac{Z-e_{\minzero}({ X}) }{e_{\minzero}({ X})}   Q_{\minone}^\dr({ X},1; {\beta})-\frac{ e_{\minzero}({ X})-Z }{ 1-e_{\minzero}({ X})}   Q_{\minone}^\dr({ X},0; {\beta})+{\mathcal{D}}_{\minone}^\dr( {\beta}) \mathcal{H}_{{\alpha} {\alpha},  \minzero}^{-1} S_{\alpha}(Z,{ X}; {\alpha}_0)\right\}^2}_{B_2(\beta) }\\&~~~-\underbrace{2\E\left[  \begin{Bmatrix}
     \dfrac{Z-e_{\minzero}({ X}) }{e_{\minzero}({ X})}   Q_{\minone}^\dr({ X},1; {\beta})-\dfrac{e_{\minzero}({ X})-Z }{ 1-e_{\minzero}({ X})}   Q_{\minone}^\dr({ X},0; {\beta})\\\addlinespace[1mm]+{\mathcal{D}}_{\minone}^\dr( {\beta}) \mathcal{H}_{{\alpha} {\alpha},  \minzero}^{-1} S_{\alpha}(Z,{ X}; {\alpha}_0) 
  \end{Bmatrix}\times\left\{\frac{ZY}{e_{\minzero}({ X})}-\frac{(1- Z)Y}{ 1-e_{\minzero}({ X})} \right\}\right]}_{B_3(\beta) }.
\end{align*}
After some calculations, it can be shown that     \begin{align*}
    &  \frac{Z-e_{\minzero}({ X}) }{e_{\minzero}({ X})}   Q_{\minone}^\dr({ X},1; {\beta})-\frac{ e_{\minzero}({ X})-Z }{ 1-e_{\minzero}({ X})}   Q_{\minone}^\dr({ X},0; {\beta})+{\mathcal{D}}_{\minone}^\dr( {\beta}) \mathcal{H}_{{\alpha} {\alpha},  \minzero}^{-1} S_{\alpha}(Z,{ X}; {\alpha}_0)\\&=\frac{Z-e_{\minzero}({ X}) }{e_{\minzero}({ X})}   Q_{\minone}^\dr({ X},1; {\beta})-\frac{ e_{\minzero}({ X})-Z }{ 1-e_{\minzero}({ X})}   Q_{\minone}^\dr({ X},0; {\beta}) \\&~~~~ +{\mathcal{D}}_{\minone}^\dr( {\beta}) \mathcal{H}_{{\alpha} {\alpha},  \minzero}^{-1}  \left\{\frac{Ze_{\minone\alpha}({ X};{\alpha}_{\minzero})}{e_{\minzero}({ X})}-\frac{(1-Z)e_{\minone\alpha}({ X};{\alpha}_{\minzero})}{1-e_{\minzero}({ X})}\right\}
\\&= \frac{Z }{e_{\minzero}({ X})}  \left\{ Q_{\minone}^\dr({ X},1; {\beta})+{\mathcal{D}}_{\minone}^{ \dr}( {\beta}) \mathcal{H}_{{\alpha} {\alpha},  \minzero}^{-1} e_{\minone\alpha}({ X};{\alpha}_{\minzero}) \right\} -Q_{\minone}^\dr({ X},1;
{\beta}) \\&~~~~-\frac{ 1-Z }{ 1-e_{\minzero}({ X})} \left\{   Q_{\minone}^\dr({ X},0; {\beta})+{\mathcal{D}}_{\minone}^{ \dr}( {\beta}) \mathcal{H}_{{\alpha} {\alpha},  \minzero}^{-1}  e_{\minone\alpha}({ X};{\alpha}_{\minzero}) \right\} +Q_{\minone}^\dr({ X},0;
{\beta})\\&=   \dfrac{Z-e_{\minzero}(X) }{e_{\minzero}({ X})}  \left\{ Q_{\minone}^\dr({ X},1; {\beta})+{\mathcal{D}}_{\minone}^\dr( {\beta}) \mathcal{H}_{{\alpha} {\alpha},  \minzero}^{-1} e_{\minone\alpha}({ X};{\alpha}_{\minzero}) \right\}\\&~~~~-\dfrac{ e_{\minzero}(X)-Z }{ 1-e_{\minzero}({ X})} \left\{   Q_{\minone}^\dr({ X},0; {\beta})+{\mathcal{D}}_{\minone}^\dr( {\beta}) \mathcal{H}_{{\alpha} {\alpha},  \minzero}^{-1}  e_{\minone\alpha}({ X};{\alpha}_{\minzero}) \right\} . \\
    B_2&(\beta)   =
    \E\begin{bmatrix}
         \dfrac{Z-e_{\minzero}(X) }{e_{\minzero}({ X})}  \left\{ Q_{\minone}^\dr({ X},1; {\beta})+{\mathcal{D}}_{\minone}^\dr( {\beta}) \mathcal{H}_{{\alpha} {\alpha},  \minzero}^{-1} e_{\minone\alpha}({ X};{\alpha}_{\minzero}) \right\}\\-\dfrac{ e_{\minzero}(X)-Z }{ 1-e_{\minzero}({ X})} \left\{   Q_{\minone}^\dr({ X},0; {\beta})+{\mathcal{D}}_{\minone}^\dr( {\beta}) \mathcal{H}_{{\alpha} {\alpha},  \minzero}^{-1}  e_{\minone\alpha}({ X};{\alpha}_{\minzero}) \right\}   
    \end{bmatrix}^2 .
\\
    B_3& (\beta) =2\E\begin{pmatrix}
          \begin{bmatrix}
        \dfrac{Z-e_{\minzero}({ X}) }{e_{\minzero}({ X})}  \left\{ Q_{\minone}^\dr({ X},1; {\beta})+{\mathcal{D}}_{\minone}^\dr( {\beta}) \mathcal{H}_{{\alpha} {\alpha},  \minzero}^{-1} e_{\minone\alpha}({ X};{\alpha}_{\minzero}) \right\}\\-\dfrac{ e_{\minzero}({ X})-Z }{ 1-e_{\minzero}({ X})} \left\{   Q_{\minone}^\dr({ X},0; {\beta})+{\mathcal{D}}_{\minone}^\dr( {\beta}) \mathcal{H}_{{\alpha} {\alpha},  \minzero}^{-1}  e_{\minone\alpha}({ X};{\alpha}_{\minzero}) \right\} 
    \end{bmatrix}   \times \left\{\dfrac{ZY}{e_{\minzero}({ X})}-\dfrac{(1- Z)Y}{ 1-e_{\minzero}({ X})} \right\}
    \end{pmatrix}   .
\end{align*}
By taking the derivative with respect to ${\beta}$ and set it equal to 0, we know that the variance is minimized by the solution to the following estimating  equation,
\begin{align*}
    \dfrac{\partial   \Sigma_\minone^\dre (\beta) }{\partial\beta}=   \dfrac{\partial   B_2 (\beta) }{\partial\beta}-\dfrac{\partial   B_3 (\beta) }{\partial\beta}=0,
\end{align*}
which is equivalent to the following equation,
\begin{align*}
    \E& \begin{pmatrix}
        &  \begin{bmatrix}
      \dfrac{Z Y }{e_{\minzero}({ X})}    -    \dfrac{Z-e_{\minzero}({ X}) }{e_{\minzero}({ X})}  \left\{ Q_{\minone}^\dr({ X},1; {\beta})+{\mathcal{D}}_{\minone}^\dr( {\beta}) \mathcal{H}_{{\alpha} {\alpha},  \minzero}^{-1} e_{\minone\alpha}({ X};{\alpha}_{\minzero}) \right\}\\\addlinespace[1mm]-  \dfrac{(1-Z) Y }{1-e_{\minzero}({ X})}    +\dfrac{ e_{\minzero}({ X})-Z }{ 1-e_{\minzero}({ X})} \left\{  Q_{\minone}^\dr({ X},0; {\beta})+{\mathcal{D}}_{\minone}^\dr( {\beta}) \mathcal{H}_{{\alpha} {\alpha},  \minzero}^{-1}  e_{\minone\alpha}({ X};{\alpha}_{\minzero}) \right\} 
    \end{bmatrix}  \\\addlinespace[1mm]&\times \begin{bmatrix}
        \dfrac{Z-e_{\minzero}({ X}) }{e_{\minzero}({ X})}  \left\{ Q^\dr_{{\minone\beta}}({ X},1; {{\beta}})+{\mathcal{D}}_{\minone\beta}^\dr( {{\beta}}) {\mathcal{H}}_{{\alpha} {\alpha},  \minzero}^{-1} e_{\minone\alpha}({ X};{\alpha}_{\minzero}) \right\}\\\addlinespace[1mm]-\dfrac{ e_{\minzero}({ X})-Z }{ 1-e_{\minzero}({ X})}   \left\{ Q^\dr_{{\minone\beta}}({ X},0; {{\beta}})+{\mathcal{D}}_{\minone\beta}^\dr( {{\beta}}) {\mathcal{H}}_{{\alpha} {\alpha},  \minzero}^{-1} e_{\minone\alpha}({ X};{\alpha}_{\minzero}) \right\}
    \end{bmatrix} 
    \end{pmatrix} \\ \addlinespace[1.5mm]  & =\E\begin{pmatrix}
        &  \begin{bmatrix}
    \dfrac{Z }{e_{\minzero}({ X})}  \left\{Y- Q_{\minone}^\dr({ X},1; {\beta})-{\mathcal{D}}_{\minone}^\dr( {\beta}) \mathcal{H}_{{\alpha} {\alpha},  \minzero}^{-1} e_{\minone\alpha}({ X};{\alpha}_{\minzero}) \right\}\\\addlinespace[1mm] - \dfrac{1-Z }{ 1-e_{\minzero}({ X})} \left\{  Y-Q_{\minone}^\dr({ X},0; {\beta})-{\mathcal{D}}_{\minone}^\dr( {\beta}) \mathcal{H}_{{\alpha} {\alpha},  \minzero}^{-1}  e_{\minone\alpha}({ X};{\alpha}_{\minzero}) \right\} \\\addlinespace[1mm] +\{Q_{\minone}^\dr({ X},1; {\beta})-Q_{\minone}^\dr({ X},0; {\beta})\}
    \end{bmatrix}  \\\addlinespace[1mm]&\times \begin{bmatrix}
        \dfrac{Z-e_{\minzero}({ X}) }{e_{\minzero}({ X})}  \left\{ Q^\dr_{{\minone\beta}}({ X},1; {{\beta}})+{\mathcal{D}}_{\minone\beta}^\dr( {{\beta}}) {\mathcal{H}}_{{\alpha} {\alpha},  \minzero}^{-1} e_{\minone\alpha}({ X};{\alpha}_{\minzero}) \right\}\\\addlinespace[1mm]-\dfrac{ e_{\minzero}({ X})-Z }{ 1-e_{\minzero}({ X})}   \left\{ Q^\dr_{{\minone\beta}}({ X},0; {{\beta}})+{\mathcal{D}}_{\minone\beta}^\dr( {{\beta}}) {\mathcal{H}}_{{\alpha} {\alpha},  \minzero}^{-1} e_{\minone\alpha}({ X};{\alpha}_{\minzero}) \right\}
    \end{bmatrix} 
    \end{pmatrix} 
    \\  \addlinespace[1.5mm] &=\E\begin{pmatrix}
        &  \begin{bmatrix}
    \dfrac{(-1)^{1-Z}}{\tilde e_{\minzero}(X,Z)}  \left\{Y- Q_{\minone}^\dr({ X},Z; {\beta})-{\mathcal{D}}_{\minone}^\dr( {\beta}) \mathcal{H}_{{\alpha} {\alpha},  \minzero}^{-1} e_{\minone\alpha}({ X};{\alpha}_{\minzero}) \right\} \\\addlinespace[1mm] +\{Q_{\minone}^\dr({ X},1; {\beta})-Q_{\minone}^\dr({ X},0; {\beta})\}
    \end{bmatrix}  \\\addlinespace[1mm]&\times \begin{bmatrix}
        \dfrac{Z-e_{\minzero}({ X}) }{e_{\minzero}({ X})}  \left\{ Q^\dr_{{\minone\beta}}({ X},1; {{\beta}})+{\mathcal{D}}_{\minone\beta}^\dr( {{\beta}}) {\mathcal{H}}_{{\alpha} {\alpha},  \minzero}^{-1} e_{\minone\alpha}({ X};{\alpha}_{\minzero}) \right\}\\\addlinespace[1mm]-\dfrac{ e_{\minzero}({ X})-Z }{ 1-e_{\minzero}({ X})}   \left\{ Q^\dr_{{\minone\beta}}({ X},0; {{\beta}})+{\mathcal{D}}_{\minone\beta}^\dr( {{\beta}}) {\mathcal{H}}_{{\alpha} {\alpha},  \minzero}^{-1} e_{\minone\alpha}({ X};{\alpha}_{\minzero}) \right\}
    \end{bmatrix} 
    \end{pmatrix} \\& =\E\begin{pmatrix}  \dfrac{(-1)^{1-Z}}{\tilde e_{\minzero}(X,Z)}  \left\{Y- Q_{\minone}^\dr({ X},Z; {\beta})-{\mathcal{D}}_{\minone}^\dr( {\beta}) \mathcal{H}_{{\alpha} {\alpha},  \minzero}^{-1} e_{\minone\alpha}({ X};{\alpha}_{\minzero}) \right\}
  \\\addlinespace[1mm] \times \begin{bmatrix}
        \dfrac{Z-e_{\minzero}({ X}) }{e_{\minzero}({ X})}  \left\{ Q^\dr_{{\minone\beta}}({ X},1; {{\beta}})+{\mathcal{D}}_{\minone\beta}^\dr( {{\beta}}) {\mathcal{H}}_{{\alpha} {\alpha},  \minzero}^{-1} e_{\minone\alpha}({ X};{\alpha}_{\minzero}) \right\}\\\addlinespace[1mm]-\dfrac{ e_{\minzero}({ X})-Z }{ 1-e_{\minzero}({ X})}   \left\{ Q^\dr_{{\minone\beta}}({ X},0; {{\beta}})+{\mathcal{D}}_{\minone\beta}^\dr( {{\beta}}) {\mathcal{H}}_{{\alpha} {\alpha},  \minzero}^{-1} e_{\minone\alpha}({ X};{\alpha}_{\minzero}) \right\}
    \end{bmatrix} 
    \end{pmatrix}  \\& =    0,
\end{align*}
where the final equality holds due to the fact $ \E \left\{Z-e_\minzero(X)\mid X\right\}=0$.
 \end{proof}
 \section{The properties of improved DR estimator when using the estimated propensity scores}
 \label{sec:dr-ipw-eps} 
 \begin{lemma}
    \label{lem:dr-epsest}We assume that \eqref{eq: if-est-dr} has a unique solution. 
     Under Assumptions \ref{assump: positivity} and \ref{assump: negative-control},  
the proposed estimator  has the following properties: 
\begin{enumerate}[label=(\roman*)]
\item  When either the propensity score or the outcome   model is correctly specified, ${ {\tau} ^{\dr}_{\minone}(\hat\alpha,\hat{\beta} ^{\dre})}$ is a consistent estimator.
\item When the propensity score model is correctly specified, ${ {\tau} ^{\dr}_{\minone}(\hat\alpha,\hat{\beta} ^{\dre})}$ achieves the smallest asymptotic variance among all estimators of the form in \eqref{eq: dr-X}.
\item When both models are correctly specified, ${ {\tau} ^{\dr}_{\minone}(\hat\alpha,\hat{\beta} ^{\dre})}$ achieves the semiparametric efficiency bound.
\end{enumerate}

 % ${ {\tau} ^{\dr}(\hat{\alpha},\hat{\beta} ^{\dre})}$ is a consistent estimator   when either the propensity score or the outcome  model is correctly specified. In addition, when the propensity score model is correct,  ${ {\tau} ^{\dr}(\hat{\alpha} ,\hat{\beta} ^{\dre})}$ achieves the smallest asymptotic variance among all estimators of form \eqref{eq: dr-X}.
\end{lemma}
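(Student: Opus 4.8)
The plan is to mirror the three-part argument used for the fully-fixed-propensity-score counterpart in Section~\ref{lem:dr-drest}, replacing the fixed weights $e_\minone(X;\alpha^\ast)$ by the maximum-likelihood estimates $e_\minone(X;\hat\alpha)$ and invoking both the influence-function representation of Lemma~\ref{lem:if-eps} and the variance identity \eqref{eq:ipw-var-ps}. Throughout I would use the two convergence facts recorded just below \eqref{eq:solve-est-eq}: when the propensity score is correctly specified $\hat\beta^\dre \stackrel{p}{\rightarrow} \beta^\dre$, and when the outcome model is correctly specified $\hat\beta^\dre \stackrel{p}{\rightarrow} \beta_0$.

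\emph{Part (i).} For consistency I would rely on the standard double-robustness of the map $\tau^\dr_\minone(\cdot,\cdot)$ in \eqref{eq: dr-X}. Since $\hat\beta^\dre$ is root-$n$ consistent to a well-defined limit in either scenario, I take probability limits inside $\mathbb{P}_n$ and split into two cases. If the propensity-score model is correct, $\hat\alpha\to\alpha_\minzero$; the inverse-probability part of \eqref{eq: dr-X} is then unbiased for $\tau$ while the augmentation term carries the factor $Z-e_\minone(X;\alpha_\minzero)$ with conditional mean zero given $X$, so the limit is $\tau$ for any limit of $\hat\beta^\dre$. If instead the outcome model is correct, then $\hat\beta^\dre\to\beta_0$ gives $Q^\dr_\minone(X,Z;\beta_0)=Q_\minzero(X,Z)$, the residual-weighted terms vanish in expectation, and again the limit is $\tau$. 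This reproduces the two-case computation of the fixed-propensity proof, now with $e_\minzero(X)$ replaced by its consistent estimate.

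\emph{Part (ii).} When the propensity score is correct, Lemma~\ref{lem:if-eps} gives that $\tau^\dr_\minone(\hat\alpha,\hat\beta)$ has asymptotic variance $\Sigma^\dre_\minone(\beta^\ast)$, where $\beta^\ast$ is the probability limit of $\hat\beta$; hence the variance of any member of the class \eqref{eq: dr-X} (with the propensity score fitted by maximum likelihood) depends on $\hat\beta$ only through $\beta^\ast$. By construction $\beta^\dre$ solves the first-order condition \eqref{eq: if-est-dr} and therefore minimizes $\Sigma^\dre_\minone(\cdot)$, and since $\hat\beta^\dre\to\beta^\dre$ the proposed estimator attains $\min_\beta\Sigma^\dre_\minone(\beta)$, i.e. the smallest asymptotic variance in the class.

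\emph{Part (iii).} When both models are correct I would combine part (ii) with the identity \eqref{eq:ipw-var-ps}, $\Sigma^\dre_\minone(\beta)=\Sigma^\drt_\minone(\beta)-\mathcal{D}^\dr_\minone(\beta)\mathcal{H}^{-1}_{\alpha\alpha,\minzero}\mathcal{D}^\T_\minone(\beta)$, evaluated at $\beta^\dre=\beta_0$. The key computation, and the main obstacle, is to show that the correction matrix $\mathcal{D}^\dr_\minone(\beta_0)=-\E\{\partial\varphi^\dr_\minone(Y,X,Z;\alpha_\minzero,\beta_0)/\partial\alpha^\T\}$ vanishes. Differentiating \eqref{eq:eif-dr} in $\alpha$ yields a factor proportional to $Z\{Y-Q^\dr_\minone(X,1;\beta_0)\}/e_\minzero^2(X)+(1-Z)\{Y-Q^\dr_\minone(X,0;\beta_0)\}/\{1-e_\minzero(X)\}^2$; when the outcome model is correct, ignorability (Assumption~\ref{assump: positivity}) makes each conditional residual $\E\{Y-Q_\minzero(X,Z)\mid X,Z\}$ equal to zero, so the conditional expectation given $X$ is zero and $\mathcal{D}^\dr_\minone(\beta_0)=0$. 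Consequently $\Sigma^\dre_\minone(\beta_0)=\Sigma^\drt_\minone(\beta_0)$, which by part (iii) of the fixed-propensity lemma (equation~\eqref{eq:seb}) equals the semiparametric efficiency bound. The remaining steps are routine once this orthogonality is verified.
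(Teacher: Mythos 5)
Your proposal is correct and follows essentially the same route as the paper's own proof: for (i) the double-robustness limit computation using the probability limits of $\hat\beta^\dre$, for (ii) the observation that under a correct propensity score Lemma~\ref{lem:if-eps} makes the asymptotic variance depend on $\hat\beta$ only through its limit so that $\beta^\dre$ (the minimizer, unique by the lemma's assumption) is optimal, and for (iii) the identity \eqref{eq:ipw-var-ps} combined with the vanishing of ${\mathcal{D}}_{\minone}^\dr(\beta_0)$ and the efficiency-bound expression \eqref{eq:seb}. Your explicit differentiation showing ${\mathcal{D}}_{\minone}^\dr(\beta_0)=0$ is precisely the computation the paper relies on; note that, because the weights depend on $X$ alone, the same argument gives ${\mathcal{D}}_{\minone*}^\dr(\beta_0)=0$ even at a misspecified limit $\alpha^\ast$, which is how the paper itself establishes the convergence $\hat\beta^\dre\stackrel{p}{\rightarrow}\beta_0$ that you instead cite from the main text below \eqref{eq:solve-est-eq}.
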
   
\begin{proof}
    (i) When the model for propensity score is correctly specified, by observing that $\hat{{\alpha}} \stackrel{p}{\rightarrow} {\alpha}_0$ and $\hat{\mathcal{H}}_{{{\alpha}} {\alpha}}$, $\hat{{\mathcal{D}}}^\dr_{\minone}({{\beta}})$, $\hat{{\mathcal{D}}}^\dr_{\minone\beta}({{\beta}})$ converge to $\mathcal{H}_{{{\alpha}} {\alpha},  \minzero}, {\mathcal{D}}_{\minone}^\dr({{\beta}}), {{\mathcal{D}}}^\dr_{\minone\beta}({{\beta}})$, respectively, it is straightforward to show that the left-hand side of the estimating equation converges in probability to the left-hand side of \eqref{eq: if-est-dr}. Thus, $\hat{\beta}^\dre\stackrel{p}{\rightarrow}  {\beta}^\dre$.  On the other hand, when the outcome model is correct but the propensity score is not, $\hat{{\alpha}} \stackrel{p}{\rightarrow} {\alpha}^*$, where $\alpha^\ast$ is   the probability limit of $\hat{\alpha}$ and $e_\minone(X,\alpha^\ast)\neq e_{\minzero}(X)$. The left-hand side of  the  estimating equation   \eqref{eq: if-est-dr} 
 converges to
     \begin{align} 
     \label{eq:var-dr-dr}
    \E&   \begin{pmatrix}  \dfrac{(-1)^{1-Z}}{\tilde e_{\minone} (X,Z; \alpha^\ast)}  \left\{Y- Q_{\minone}^\dr({ X},Z; {\beta})- {\mathcal{D}}_{\minone*}^\dr(\beta){\mathcal{H}}_{{\alpha} {\alpha}, *}^{-1} e_{\minone\alpha}({ X};{ \alpha^\ast}) \right\}
  \\\addlinespace[1mm] \times \begin{bmatrix}
        \dfrac{Z- e_{\minone} ({ X};{ \alpha^\ast})}{ e_{\minone} ({ X};{ \alpha^\ast})}  \left\{ Q^\dr_{{\minone\beta}}({ X},1; {{\beta}})+{\mathcal{D}}_{\minone\beta*} ^\dr({{\beta}}){\mathcal{H}}_{{\alpha} {\alpha}, *}^{-1} e_{\minone\alpha}({ X};{ \alpha^\ast}) \right\}\\\addlinespace[1mm]-\dfrac{  e_{\minone} ({ X};{ \alpha^\ast})-Z }{ 1- e_{\minone} ({ X};{ \alpha^\ast})}   \left\{ Q^\dr_{{\minone\beta}}({ X},0; {{\beta}})+{\mathcal{D}}_{\minone\beta*} ^\dr({{\beta}}){\mathcal{H}}_{{\alpha} {\alpha}, *}^{-1} e_{\minone\alpha}({ X};{ \alpha^\ast}) \right\}
    \end{bmatrix} 
    \end{pmatrix} =0 ,
\end{align}
where ${\mathcal{H}}_{{\alpha} {\alpha}, *}=\E\left\{S_{{{\alpha}}}({ X}, Z; {{\alpha}}^*) S_{{{\alpha}}}^{\T}({ X}, Z; {{\alpha}}^*)\right\}, {\mathcal{D}}_{\minone*}^\dr({{\beta}})=-\E\left\{\partial  {\varphi}^\dr_{\minone}(Y, { X}, Z; {{\alpha}}^*, {{\beta}}) / \partial {{\alpha}}^{\T}\right\}$, and ${\mathcal{D}}_{\minone\beta*} ^\dr({{\beta}})=-\E\left\{\partial^2  {\varphi}^\dr_{\minone}(Y, { X}, Z; {{\alpha}}^*, {{\beta}}) / \partial {{\alpha}}^{\T} \partial {{\beta}}\right\}$. By noting that ${\mathcal{D}}_{\minone*}^\dr({{\beta}}_0)=0$, \eqref{eq:var-dr-dr} equals 0 when ${{\beta}}={{\beta}}_0$. Therefore, we have $\hat{\beta}^\dre\stackrel{p}{\rightarrow} {{\beta}}_0$  and $Q_{\minone}(X,Z;{\beta}^\dre)=Q_{\minzero}(X,Z)$,\begin{align*}  \begin{aligned}
  {\tau}&^{\dr}_\minone(\alpha^\ast_{\minzero},\hat{{\beta}}^\dre )\\& =  \mathbb{P}_n
 \begin{bmatrix}
   \dfrac{ZY-\{{Z-e_{\minone}({ X};\alpha^\ast)}\}Q_{{{\minone}}}^{\dr}({ X},1;\hat{{\beta}}^\dre)}{e_\minone( { X};{\alpha^\ast})}  - \dfrac{(1- Z)Y-\{{e_\minone( { X};{\alpha^\ast})}-Z\}Q_{{{\minone}}}^{\dr}({ X},0;\hat{{\beta}}^\dre)}{ 1 - e_\minone( { X};{\alpha^\ast})}  
 \end{bmatrix}  \\
& =\mathbb{P}_n\{Q_{\minzero}( { X},1 )-Q_{\minzero}( { X},0)\} +\mathbb{P}_n \begin{bmatrix}
   \dfrac{Z\{Y-Q_{\minzero}( { X},1) \}}{e_\minone( { X};{\alpha^\ast})}  -   \dfrac{(1-Z)\{Y-Q_{\minzero}( { X},0 )\}}{1-e_\minone( { X};{\alpha^\ast})} 
 \end{bmatrix} +o_p(1) \\&=\E\{Q_{\minzero}( { X},1 )-Q_{\minzero}( { X},0)\} +\E \begin{bmatrix}
   \dfrac{Z\{Y-Q_{\minzero}( { X},1) \}}{e_\minone( { X};{\alpha^\ast})}  -   \dfrac{(1-Z)\{Y-Q_{\minzero}( { X},0 )\}}{1-e_\minone( { X};{\alpha^\ast})} 
 \end{bmatrix} +o_p(1)\\&= \tau+o_p(1).
\end{aligned} \end{align*}   

(ii) Since $\hat{{\beta}}^{\dre} \stackrel{p}{\rightarrow} {\beta}^{\dre}$ when the propensity score is correct, by definition of ${\beta}^{\dre}$, it is trivial that $ {\tau}^{\dr}_\minone(\hat\alpha ,\hat{{\beta}}^\dre )$ achieves the smallest variance among class of estimators \eqref{eq: dr-X}.

(iii) 
When both models are correctly specified, by \eqref{eq:ipw-var-ps} in the main text, we have,
\begin{equation*} 
\begin{aligned} 
  \Sigma_{\minone}^{\dre} ({\beta}^{\dre})& =\Sigma_{\minone}^{\drt} ({\beta}^{\dre})  -{{\mathcal{D}}}_{\minone}^\dr( {{\beta}} ^{\dre}) {\mathcal{H}}_{{\alpha} {\alpha},  \minzero}^{-1} {{\mathcal{D}}}_{\minone}^\T( {{\beta}}^{\dre} ) \\& =\Sigma_{\minone}^{\drt} ({\beta}^{\dre}) \\& =\Sigma_{\minone}^{\drt} ({\beta}_\minzero) ,
\end{aligned}
\end{equation*}
which also achieves the semiparametric efficiency bound as \eqref{eq:seb}.
\end{proof}

\section{The proof of Theorem \ref{thm:eps}}
\label{sec:prof-thm3}
 We first rewrite Theorem \ref{thm:eps} in the main text as follows:
  \begin{theorem} 
    \label{thm:tps-dr}Let $$    \theta_{\ast}^{\dre}=\{({ {\alpha}}^{\ast} )^ \T, 
( { \psi}_1^\ast )^ \T, \ldots, ({ \psi}_q^\ast )^ \T, ({ \phi}_1 ^\ast )^ \T, \ldots, ({ \phi}_q^\ast )^ \T,( \xi^{\ast})^ \T, {\beta}^{\dre},\tau\}^\T, $$ where $ \theta_{\ast}^{\dre}$ is  the solution to the following set of equations:
   \begin{gather*} 
\E\left\{S_{\alpha} ({ X},Z ;{ {\alpha}}^{\ast})\right\}={0}\\
{({ \psi}_1^\ast, \ldots, { \psi}_q^\ast)-\E\left\{S_{\alpha}({ X}, Z;{ {\alpha}}^{\ast}) S_{\alpha}^{\T}({ X}, Z;{ {\alpha}}^{\ast})\right\}=0,} \\
{ ({ \phi}_1^\ast, \ldots, { \phi}_q^\ast)+\E\left\{\partial^2 {\varphi} _\minone(Y,{ X},  Z;{ {\alpha}}^{\ast}, {\beta} ^\dre) / \partial {\alpha}^{\T} \partial {\beta}\right\}=0,} \\ 
 \xi^\ast +\E\{\partial  {\varphi}_\minone^\dr(Y,{ X},  Z;{ {\alpha}}^{\ast}, {\beta}^{\dre}) / \partial {\alpha}\}={0},   \\\addlinespace[1mm]   
%  \E\begin{pmatrix}
%           \begin{bmatrix}
%       \dfrac{(-1)^{Z}}{ e_{\minone} ({ X},Z; {\alpha}^{\ast})} \left\{ Q_{{\minone\beta}}({ X},Z; {{\beta}}^{\dre})-( { \phi}_1^{\dre}, \ldots, { \phi}_q^{\dre})({ \psi}_1^\ast, \ldots, {\psi}_q^\ast)^{-1} e_{\minone\alpha}(X;{ {\alpha}}^{\ast}) \right\}\\\times \left\{Y-Q_{\minone}^\dr({ X},Z; {{\beta}}^{\dre})-(\xi^\ast ) ^\T ({ \psi}_1^\ast, \ldots, {\psi}_q^\ast)^{-1}e_{\minone\alpha}(X;{ {\alpha}}^{\ast})\right\}
%     \end{bmatrix}\\\addlinespace[1mm] -     \dfrac{(-1)^{Z}}{ e_{\minone} ({ X},Z; {\alpha}^{\ast})} \left\{   Y - Q_{\minone}^\dr({ X},Z;
% {\beta} ^{\dre})\right\}\times          \{ Q^\dr_{{\minone\beta}}({ X},1;
% {\beta}^{\dre} )-Q^\dr_{{\minone\beta}}({ X},0;
% {\beta}^{\dre} )\}  
%       \end{pmatrix}=0.\\  
      \E  \begin{pmatrix}  \dfrac{(-1)^{1-Z}}{\tilde e_{\minone} (X,Z; {\alpha}^{\ast})}  \left\{Y- Q_{\minone}^\dr({ X},Z; {\beta}^\dre)-(\xi^\ast ) ^\T ({ \psi}_1^\ast, \ldots, {\psi}_q^\ast)^{-1}e_{\minone\alpha}({ X};{ {\alpha}^{\ast}}) \right\}
  \\\addlinespace[1mm] \times \begin{bmatrix}
        \dfrac{Z- e_{\minone} ({ X};{ {\alpha}^{\ast}})}{ e_{\minone} ({ X};{ {\alpha}^{\ast}})}  \left\{ Q^\dr_{{\minone\beta}}({ X},1; {{\beta}^\dre})+( { \phi}_1^{\dre}, \ldots, { \phi}_q^{\dre})({ \psi}_1^\ast, \ldots, {\psi}_q^\ast)^{-1} e_{\minone\alpha}({ X};{ {\alpha}^{\ast}}) \right\}\\\addlinespace[1mm]-\dfrac{  e_{\minone} ({ X};{ {\alpha}^{\ast}})-Z }{ 1- e_{\minone} ({ X};{ {\alpha}^{\ast}})}   \left\{ Q^\dr_{{\minone\beta}}({ X},0; {{\beta}^\dre})+( { \phi}_1^{\dre}, \ldots, { \phi}_q^{\dre})({ \psi}_1^\ast, \ldots, {\psi}_q^\ast)^{-1} e_{\minone\alpha}({ X};{ {\alpha}^{\ast}}) \right\}
    \end{bmatrix} 
    \end{pmatrix} =0 .
\end{gather*}
Given Assumptions \ref{assump: positivity} and \ref{assump: negative-control},  when either the propensity score or the outcome model is correctly specified, we have
    \begin{gather*}
        \sqrt{n}\{{ {\tau}^{\dr}_{\minone}(  {\alpha}^\ast ,{\hat{\beta}}^\dre  )}-\tau\} \stackrel{d}{\rightarrow} N(0,{{\Lambda}^\dre_\minone}).
\end{gather*}
   where ${{\Lambda}^\dre_\minone}={{\Lambda}^\dre}(\theta_{\ast}^{\dre})$,  
\begin{gather*}
  {B}_1^{\dre}({{\theta}})=\E\begin{pmatrix}
      \dfrac{\partial  {   m}_1^\dre}{\partial\alpha^\T}&  \dfrac{\partial  {   m}_1^\dre}{\partial{{\psi }}_1^\T}&\ldots&  \dfrac{\partial  {   m}_1^\dre}{\partial{{\psi }}_q^\T}&  \dfrac{\partial  {   m}_1^\dre}{\partial\phi_1^\T}&\ldots&  \dfrac{\partial  {   m}_1^\dre}{\partial\phi_q^\T}&  \dfrac{\partial  {   m}_1^\dre}{\partial{{\xi}}^\T}&  \dfrac{\partial  {   m}_1^\dre}{\partial{ \beta^\T} }\\     \addlinespace[1mm]
      \dfrac{\partial  {   m}_2^\dre}{\partial\alpha^\T}&  \dfrac{\partial  {   m}_2^\dre}{\partial{{\psi }}_1^\T}&\ldots&  \dfrac{\partial  {   m}_2^\dre}{\partial{{\psi }}_q^\T}&  \dfrac{\partial  {   m}_2^\dre}{\partial\phi_1^\T}&\ldots&  \dfrac{\partial  {   m}_2^\dre}{\partial\phi_q^\T}&  \dfrac{\partial  {   m}_2^\dre}{\partial{{\xi}}^\T}&  \dfrac{\partial  {   m}_2^\dre}{\partial{ \beta^\T} }\\     \addlinespace[1mm]
      \dfrac{\partial  {   m}_3^\dre}{\partial\alpha^\T}&  \dfrac{\partial  {   m}_3^\dre}{\partial{{\psi }}_1^\T}&\ldots&  \dfrac{\partial  {   m}_3^\dre}{\partial{{\psi }}_q^\T}&  \dfrac{\partial  {   m}_3^\dre}{\partial\phi_1^\T}&\ldots&  \dfrac{\partial  {   m}_3^\dre}{\partial\phi_q^\T}&  \dfrac{\partial  {   m}_3^\dre}{\partial{{\xi}}^\T}&  \dfrac{\partial  {   m}_3^\dre}{\partial{ \beta^\T} }
  \end{pmatrix}\\  {B}_2^{\dre}({{\theta}})=\E\begin{pmatrix}
      \dfrac{\partial  {   m}_4^\dre}{\partial\alpha^\T}&  \dfrac{\partial  {    m}_4^\dre}{\partial{{\psi }}_1^\T}&\ldots&  \dfrac{\partial  {   m}_4^\dre}{\partial{{\psi }}_q^\T}&  \dfrac{\partial  {   m}_4^\dre}{\partial\phi_1^\T}&\ldots&  \dfrac{\partial  {    m}_4^\dre}{\partial\phi_q^\T}&  \dfrac{\partial  {   m}_4^\dre}{\partial{{\xi}}^\T}&  \dfrac{\partial  {   m}_4^\dre}{\partial{ \beta^\T} }\\    
  \end{pmatrix}\\{D}_{11}^\dre({{\theta}})=\E\begin{Bmatrix}
       {   m}_1^\dre({    m}_1^\dre)^\T&    {   m}_1^\dre({    m}_2^\dre)^\T&    {   m}_1^\dre({    m}_3^\dre)^\T \\      {   m}_2^\dre({    m}_1^\dre)^\T&    {   m}_2^\dre({    m}_2^\dre)^\T&    {   m}_2^\dre({    m}_3^\dre)^\T \\      {   m}_3^\dre({    m}_1^\dre)^\T&    {   m}_3^\dre({    m}_2^\dre)^\T&    {   m}_3^\dre({    m}_3^\dre)^\T 
  \end{Bmatrix},~~{D}_{12}^\dre({{\theta}})=\E\begin{pmatrix}
       {   m}_1^\dre {m}_4^\dre  \\      {   m}_2^\dre{m}_4^\dre  \\      {   m}_3^\dre{m}_4^\dre 
  \end{pmatrix},\\{D}_{22}^\dre({{\theta}})  =\E ( {  m}_4^\dre)^2,\\
   {{\Lambda}^\dre}(\theta  )={B}_2^{\dre}({{\theta}})\{{B}_1^{\dre}({{\theta}})\}^{-1}  D_{11}^{\dre}({\theta})\{{B}_1^{\dre}({{\theta}})\}^{-\T}\{{B}_2^{\dre}({{\theta}})\}^\T-2{B}_2^{\dre}({{\theta}}) \{{B}_1^{\dre}({{\theta}})\}^{-\T}{D}_{12}^{\dre}({{\theta}}) +{D}_{22}^{\dre}({{\theta}}).
  \end{gather*}  
  Here we suppress the notations by writing  ${m}_1 ^\dre\equiv    {   m}_1^\dre({ O}_i;{\theta})$ and $ {m}_2 ^\dre\equiv    {   m}_2^\dre({ O}_i;{\theta})$, where  
\begin{align*}
 { m}^{\dre }({ O};{\theta} )&=\left[\{   { m}^{\dre }_1({ O};{\theta} ) \}^\T,
\{   { m}^{\dre }_2({ O};{\theta} ) \}^\T,
\{   { m}^{\dre }_3({ O};{\theta} ) \}^\T,
\{   m^{\dre }_4({ O};{\theta} ) \}^\T \right]^\T,\\
     { m}^{\dre }_1(O ;{\theta} )&=  S_{\alpha}({ X}, Z ; {\alpha}) ,\\
{ m}^{\dre }_2 ( O;{\theta})&=\begin{bmatrix}
     {\xi  }+\partial  {\varphi}^\dr_{\minone}(Y, { X}, Z;{\alpha}, {\beta}) / \partial {\alpha} \\
 \psi_1-\operatorname{col}_1\left\{S_{\alpha}({ X}, Z; {\alpha}) S^{\T}_{\alpha}({ X}, Z; {\alpha})\right\} \\
\vdots \\
 \psi_q-\operatorname{col}_q\left\{S_{\alpha}({ X}, Z; {\alpha}) S^{\T}_{\alpha}({ X}, Z; {\alpha})\right\} \\
 {\phi}_1^\dr +\mathrm{col}_1\left\{\partial^2  {\varphi}^\dr_{\minone}(Y, { X}, Z;{\alpha}, {\beta}) / \partial { \alpha}_1 \partial {\beta}\right\} \\
\vdots \\
 {\phi}_q^\dr+\mathrm{col}_q\left\{\partial^2  {\varphi}^\dr_{\minone}(Y, { X}, Z;{\alpha}, {\beta}) / \partial { \alpha}_q \partial {\beta}\right\}
\end{bmatrix},\\
{ m}^\dre_3( O;{\theta})&=        \begin{pmatrix}  \dfrac{(-1)^{1-Z}}{\tilde e_{\minone} (X,Z; {\alpha})}  \left\{Y- Q_{\minone}^\dr({ X},Z; {\beta}^\dre)- \xi   ^\T ({ \psi}_1, \ldots, {\psi}_q)^{-1}e_{\minone\alpha}({ X};{ {\alpha}}) \right\}
  \\\addlinespace[1mm] \times \begin{bmatrix}
        \dfrac{Z- e_{\minone} ({ X};{ {\alpha}})}{ e_{\minone} ({ X};{ {\alpha}})}  \left\{ Q^\dr_{{\minone\beta}}({ X},1; {{\beta}^\dre})+( { \phi}_1^{\dre}, \ldots, { \phi}_q^{\dre})({ \psi}_1, \ldots, {\psi}_q)^{-1} e_{\minone\alpha}({ X};{ {\alpha}}) \right\}\\\addlinespace[1mm]-\dfrac{  e_{\minone} ({ X};{ {\alpha}})-Z }{ 1- e_{\minone} ({ X};{ {\alpha}})}   \left\{ Q^\dr_{{\minone\beta}}({ X},0; {{\beta}^\dre})+( { \phi}_1^{\dre}, \ldots, { \phi}_q^{\dre})({ \psi}_1, \ldots, {\psi}_q)^{-1} e_{\minone\alpha}({ X};{ {\alpha}}) \right\}
    \end{bmatrix} 
    \end{pmatrix},\\
    m_4^\dre( O;{\theta})&=  \dfrac{ZY-\{{Z-e_{\minone}( {{ X}};{\alpha})}\}Q_{\minone}^\dr( { X},1;{\beta})}{e_{\minone}( {{ X}};{\alpha})}  - \dfrac{(1- Z)Y-\{{e_{\minone}( {{ X}};{\alpha})}-Z\}Q_{\minone}^\dr( { X},0;{\beta})}{ 1 - e_{\minone}( {{ X}};{\alpha})}    -\tau.
\end{align*}

\end{theorem}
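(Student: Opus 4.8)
The plan is to realize $\tau^\dr_\minone(\hat\alpha,\hat\beta^\dre)$ as the final coordinate of an M-estimator solving the stacked system $\sum_{i=1}^n m^\dre(O_i,\theta)=0$ displayed in the statement, and then to invoke the sandwich-variance machinery recalled in Section \ref{ssec:Preliminary}, exactly as was done for the fully-fixed-propensity-score case in Theorem \ref{thm:tps}. The crux of the construction is that the functionals $\mathcal{H}_{\alpha\alpha,\minzero}$, $\mathcal{D}^\dr_\minone(\beta)$ and $\mathcal{D}^\dr_{\minone\beta}(\beta)$ enter the defining equation \eqref{eq: if-est-dr} only through expectations; by promoting their columns to the auxiliary parameters $\psi_1,\ldots,\psi_q$, $\phi_1,\ldots,\phi_q$ and $\xi$, each pinned down by its own mean-zero equation inside $m_2^\dre$, their sampling variability is absorbed into the joint sandwich rather than treated as a known plug-in. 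Concretely, $m_1^\dre$ is the propensity-score score equation defining $\hat\alpha$, $m_3^\dre$ is the sample analogue of \eqref{eq: if-est-dr} defining $\hat\beta^\dre$, and $m_4^\dre$ is the augmented estimating function whose mean is $\tau$.

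First I would identify $\theta^\dre_\ast$ as the probability limit by checking $\E\{m^\dre(O,\theta^\dre_\ast)\}=0$ block by block: the equation $\E\{S_\alpha(X,Z;\alpha^\ast)\}=0$ holds at the true value when the propensity score is correct and at the pseudo-true value otherwise, the $\psi$, $\phi$ and $\xi$ blocks hold by definition, the $\beta$-block is precisely the population equation \eqref{eq: if-est-dr} whose solution is $\beta^\dre$, and the last block fixes $\tau$. Consistency of $\hat\beta^\dre$ --- to $\beta^\dre$ under a correct propensity score and to $\beta_0$ under a correct outcome model --- together with consistency of the plug-in nuisance estimators, has already been established in Lemma \ref{lem:dr-epsest}, so $\hat\theta^\dre\stackrel{p}{\rightarrow}\theta^\dre_\ast$.

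Next I would carry out the linearization. Under the stated regularity conditions, standard Z-estimation theory gives
\[
\sqrt{n}(\hat\theta^\dre-\theta^\dre_\ast)\stackrel{d}{\rightarrow} N\left(0,\{B^\dre(\theta^\dre_\ast)\}^{-1}D^\dre(\theta^\dre_\ast)\{B^\dre(\theta^\dre_\ast)\}^{-\T}\right),
\]
with $B^\dre(\theta)=\E\{\partial m^\dre/\partial\theta^\T\}$ and $D^\dre(\theta)=\E\{m^\dre(m^\dre)^\T\}$. The decisive structural observation is that none of $m_1^\dre$, $m_2^\dre$, $m_3^\dre$ depends on $\tau$, whereas $\partial m_4^\dre/\partial\tau=-1$; hence $B^\dre$ is block lower-triangular in the form recorded in Section \ref{ssec:Preliminary}, with the $(\alpha,\psi,\phi,\xi,\beta)$-block $B_1^\dre$, the row $B_2^\dre$ of $\tau$-equation derivatives, and the scalar $-1$. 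Reading off the bottom-right diagonal entry of the sandwich through the formula given there yields exactly $\Lambda^\dre(\theta^\dre_\ast)$, which is the claimed $\Lambda^\dre_\minone$.

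I expect the main obstacle to be verifying that the derivative blocks $B_1^\dre$ and $B_2^\dre$ are correctly assembled: differentiating $m_3^\dre$ through the auxiliary parameters must reproduce the terms $\mathcal{D}^\dr_\minone(\beta)\mathcal{H}_{\alpha\alpha,\minzero}^{-1}e_{\minone\alpha}$ and $\mathcal{D}^\dr_{\minone\beta}(\beta)\mathcal{H}_{\alpha\alpha,\minzero}^{-1}e_{\minone\alpha}$ of \eqref{eq: if-est-dr}, and the chain-rule contributions of $\hat\psi,\hat\phi,\hat\xi$ must match the adjustment term $-\mathcal{D}^\dr_\minone\mathcal{H}_{\alpha\alpha,\minzero}^{-1}S_\alpha$ of the influence function in Lemma \ref{lem:if-eps}. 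A secondary technical point is that the two consistency regimes require separate bookkeeping: when the outcome model is correct but the propensity score is not, $B^\dre$ and $D^\dre$ are evaluated at the pseudo-true $\alpha^\ast$, and one relies on $\mathcal{D}^\dr_{\minone\ast}(\beta_0)=0$ (shown in Lemma \ref{lem:dr-epsest}) to ensure the augmented estimating function stays mean-zero so that the sandwich remains valid.
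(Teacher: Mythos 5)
Your proposal is correct and follows essentially the same route as the paper: stacking the score equation for $\hat\alpha$, the auxiliary mean-zero equations defining $\psi$, $\phi$, $\xi$, the sample version of \eqref{eq: if-est-dr} for $\hat\beta^\dre$, and the augmented estimating function for $\tau$ into one M-estimation system, then invoking the sandwich formula and the block lower-triangular structure of $B^\dre$ from Section \ref{ssec:Preliminary} to read off $\Lambda^\dre_\minone$ as the last diagonal entry. If anything, your block-by-block verification of $\E\{m^\dre(O,\theta^\dre_\ast)\}=0$ and your explicit bookkeeping of the two consistency regimes (including the role of $\mathcal{D}^\dr_{\minone\ast}(\beta_0)=0$ from Lemma \ref{lem:dr-epsest}) spell out steps the paper's own proof leaves implicit.
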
 
 We  next present the proof of  Theorem \ref{thm:eps}  as follows:
\begin{proof}[The proof of Theorem \ref{thm:eps}]
   When the propensity score is correctly specified, the unknown parameters are  $ \theta_{\ast}^{\dre}=\{({ {\alpha}}^{\ast} )^ \T, 
( { \psi}_1^\ast )^ \T, \ldots, ({ \psi}_q^\ast )^ \T, ({ \phi}_1 ^\ast )^ \T, \ldots, ({ \phi}_q^\ast )^ \T,( \xi^{\ast})^ \T, {\beta}^{\dre}\}^\T $. The estimating equation is
\begin{align*}
 { m}^{\dre }({ O};{\theta} )=\left[
\{   { m}^{\dre }_1({ O};{\theta} ) \}^\T,
\{   { m}^{\dre }_2({ O};{\theta} ) \}^\T,
\{   { m}^{\dre }_3({ O};{\theta} ) \}^\T,
\{   m^{\dre }_4({ O};{\theta} ) \}^\T 
\right]^\T,
% =\begin{Bmatrix}
%     S_{\alpha}({ X}_i,Z_i ; {\alpha}) \\
% U_1^\dre({ O}_i;{\theta} ) \\
% U_2^\dre({ O}_i;{\theta} ) \\
% {\varphi}^\dre(Y_i, { X}_i,Z_i, {\alpha}, {\beta})
% \end{Bmatrix},
\end{align*}By M-estimation theory, 
\begin{align*}
\sqrt{n}(\hat{{{\theta}}}^{\dre}-{{\theta}}_\ast^{\dre}) \stackrel{D}{\rightarrow} N(0,\left\{{B}^{\dre}({{\theta}}_0^{\dre})\right\}^{-1} {D}^{\dre}({{\theta}}_0^{\dre})\left[\left\{{B}^{\dre}({{\theta}}_0^{\dre})\right\}^{-1}\right]^\T),
\end{align*}
where
\begin{align*}
{B}^{\dre}({{\theta}})&=\begin{Bmatrix}
    {B}_1^{\dre}({{\theta}}) & 0_{s \times 1} \\
{B}_2^{\dre}({{\theta}}) & -1
\end{Bmatrix}, \quad {D}^{\dre}({{\theta}})=\begin{bmatrix}
{D}_{11}^{\dre}({{\theta}}) & {D}_{12}^{\dre}({{\theta}}) \\
\left\{{D}_{12}^{\dre}({{\theta}})\right\}^\T & {D}_{22}^{\dre}({{\theta}})
\end{bmatrix}
\end{align*} with all the quantities defined in the theorem. The rest is by the preliminary in Section \ref{ssec:Preliminary}.
\end{proof}
 \section{The proof of \eqref{eq:ipw-var-ps}}
 \label{sec: proof-of-eq10}
\begin{proof} 
According to \eqref{if:dr-est-3},  the influence function can be expressed as follows:
        \begin{align} 
        \label{if:dr-est-2}\tilde\varphi^\dr_{\minone}(Y, Z, { X};{\alpha}_{\minzero}, {{\beta}}^*)&=\varphi^\dr_{\minone}(Y, Z, { X};{\alpha}_{\minzero}, {{\beta}}^*)-{\mathcal{D}}_{\minone}^\dr( {{\beta}}^*) \mathcal{H}_{{\alpha} {\alpha},  \minzero}^{-1} S_{\alpha}(Z, { X};{\alpha}_{\minzero}).
\end{align} 
When the propensity score is correctly specified, we have that 
    \begin{align*} 
    {\mathcal{D}}_{\minone}^\dr({\beta})= -
\E\left\{\partial \varphi^\dr_{\minone}(Y, Z, { X}; {\alpha}_{\minzero}, {\beta}) / \partial {\alpha}^\T\right\}  =\E\{\varphi^\dr_{\minone}(Y, Z, { X}; {\alpha}_{\minzero}, {\beta})S^{\T}_{\alpha}({ X},Z;{\alpha}_{\minzero})\}. 
\end{align*}   
The asymptotic variance can be calculated as follows:
\begin{align*}
   \Sigma^\dre_\minone (\beta) &= \mathrm{var}\{\tilde\varphi^\dr_{\minone}(Y, Z, { X};{\alpha}_{\minzero}, {{\beta}})\}\\&=\E\{\tilde\varphi^\dr_{\minone}(Y, Z, { X};{\alpha}_{\minzero}, {{\beta}})\}^2 \\&=\E\{ \varphi^\dr_{\minone}(Y, Z, { X};{\alpha}_{\minzero}, {{\beta}})\}^2-2  \E\big[\varphi^\dr _{\minone}(Y, Z, { X};{\alpha}_{\minzero}, {{\beta}})S_{\alpha} ^\T(Z, { X};{\alpha}_{\minzero}) \mathcal{H}_{{\alpha} {\alpha},  \minzero}^{-1}  {\mathcal{D}}_{\minone}^\T( {{\beta}})\big]\\&~~~~~~~~+\E\{{\mathcal{D}}_{\minone}^\dr( {{\beta}}) \mathcal{H}_{{\alpha} {\alpha},  \minzero}^{-1} S_{\alpha}(Z, { X};{\alpha}_{\minzero})\}^2 \\&=\Sigma^{\drt}_\minone ({\beta}) -{\mathcal{D}}_{\minone}^\dr( {{\beta}})\mathcal{H}_{{\alpha} {\alpha},  \minzero}^{-1}  {\mathcal{D}}_{\minone}^\T( {{\beta}}),
\end{align*}
where the third equality holds due to 
\begin{align*}
    \E\{{\mathcal{D}}_{\minone}^\dr&( {{\beta}}) \mathcal{H}_{{\alpha} {\alpha},  \minzero}^{-1} S_{\alpha}(Z, { X};{\alpha}_{\minzero})\}^2 \\&=    \E\left\{{\mathcal{D}}_{\minone}^\dr ( {{\beta}}) \mathcal{H}_{{\alpha} {\alpha},  \minzero}^{-1} S_{\alpha}(Z, { X};{\alpha}_{\minzero})S_{\alpha} ^\T(Z, { X};{\alpha}_{\minzero}) \mathcal{H}_{{\alpha} {\alpha},  \minzero}^{-1}  {\mathcal{D}}_{\minone}^\T ( {{\beta}})   \right\}\\&=    {\mathcal{D}}_{\minone}^\dr ( {{\beta}}) \mathcal{H}_{{\alpha} {\alpha},  \minzero}^{-1} \E\left\{S_{\alpha}(Z, { X};{\alpha}_{\minzero})S_{\alpha} ^\T(Z, { X};{\alpha}_{\minzero}) \right\}\mathcal{H}_{{\alpha} {\alpha},  \minzero}^{-1}  \{{\mathcal{D}}_{\minone}^\dr ( {{\beta}}) \} ^\T\\&={\mathcal{D}}_{\minone}^\dr({{\beta}}) \mathcal{H}_{{\alpha} {\alpha},  \minzero}^{-1}   {\mathcal{D}}_{\minone}^\T ( {{\beta}})   .\\
  \E\big\{\varphi^\dr_\minone&(Y, Z, { X};{\alpha}_{\minzero}, {{\beta}})S_{\alpha} ^\T(Z, { X};{\alpha}_{\minzero}) \mathcal{H}_{{\alpha} {\alpha},  \minzero}^{-1}  {\mathcal{D}}_{\minone}^\T( {{\beta}})\big\} \\&=   \E\big\{\varphi^\dr_\minone (Y, Z, { X};{\alpha}_{\minzero}, {{\beta}})S_{\alpha} ^\T(Z, { X};{\alpha}_{\minzero}) \big\} \mathcal{H}_{{\alpha} {\alpha},  \minzero}^{-1}  {\mathcal{D}}_{\minone}^\T( {{\beta}})\\&={\mathcal{D}}_{\minone}^\dr({{\beta}})\mathcal{H}_{{\alpha} {\alpha},  \minzero}^{-1}  {\mathcal{D}}_{\minone}^\T( {{\beta}}).
\end{align*}
\end{proof}
\section{The proof of Theorem  \ref{thm:com-dre}}
 \label{eq:proof-coro3}  
\subsection{Preliminaries}
\begin{fact}
  For any $\delta$, by direct calculations, we have that,
 \begin{gather*}
\begin{aligned}
        {\mathcal{D}}_{\mintwoone}^\dr( {\delta} )  &= -
\E\left\{\partial \varphi^\dr_{\mintwo} (Y, { X},{ W},Z ; {\alpha} _{\minzero}, \gamma_{\minzero}, {\delta} ) / \partial {\alpha}^\T\right\} \\& =\E\{\varphi^\dr_{\mintwo} (Y, { X},{ W},Z ; {\alpha} _{\minzero}, \gamma_{\minzero}, {\delta} )S_{\alpha}^\T({ X},{ W},Z;{\alpha} _{\minzero}, \gamma_{\minzero})\},
\end{aligned}\\
\begin{aligned}
        {\mathcal{D}}_{\mintwotwo}^\dr( {\delta} )  &= -
\E\left\{\partial\varphi^\dr_{\mintwo} (Y, { X},{ W},Z ; {\alpha} _{\minzero}, \gamma_{\minzero}, {\delta} ) / \partial {\alpha}^\T\right\}  \\&  =\E\{\varphi^\dr_{\mintwo} (Y, { X},{ W},Z ; {\alpha} _{\minzero}, \gamma_{\minzero}, {\delta} )S_{\gamma}^\T({ X},{ W},Z;{\alpha} _{\minzero}, \gamma_{\minzero})\}.
\end{aligned}
\end{gather*}
\end{fact}
   \begin{fact}
        When the propensity score is fully known, i.e., $({\alpha}^{\ast},{\gamma}^{\ast}) =({\alpha}_{\minzero},{\gamma}_{\minzero})$, but the outcome model $Q^\dr_{\mintwo}({ X},W, Z;{\delta})$ may or may not be,  given  $\hat{{\delta}}-{\delta}^\ast=O_p(n^{-1/2})$, 
  the influence function for ${ {\tau} ^{\dr}_{\minone}( \alpha_\minzero,{\gamma}_{\minzero},\hat{\delta} )}$ is $\varphi^\dr_{\minone}(Y, { X},W,Z;{ {\alpha}}_\minzero, \gamma_\minzero,{{\delta}}^*)$, where
  \begin{align*}    
\begin{aligned}
     \varphi^\dr _{\mintwo}(Y, Z,{ X},W; {\alpha} _{\minzero}, \gamma_{\minzero}, \delta )=  &\frac{ZY}{e_{\minzero}({ X})}-\frac{(1- Z)Y}{ 1-e_{\minzero}({ X})}-   \frac{Z-e_{\minzero}({ X})}{e_{\minzero}({ X})}   Q_{\minone}^\dr({ X},W,1; {{\delta}})
 \\&~~+\frac{e_{\minzero}({ X})-Z}{ 1-e_{\minzero}({ X})}   Q_{\minone}^\dr({ X},W,0; {{\delta}})-\tau . 
\end{aligned}
  \end{align*}
   \end{fact}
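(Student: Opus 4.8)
The plan is to follow the proof of Lemma~\ref{lem: if} essentially verbatim, replacing the adjustment set $X$ by $(X,W)$ and the working model $Q^\dr_\minone(X,Z;\beta)$ by $Q^\dr_\mintwo(X,W,Z;\delta)$, and invoking Assumption~\ref{assump: negative-control} at the single place where the argument genuinely differs. First I would set $\mu_z=\E(Y_z)$ and center and scale the fully-fixed-propensity-score version of the estimator in~\eqref{eq: dr-XW}; since the propensity score is known at its true value, $e_\mintwo(X,W;\alpha_\minzero,\gamma_\minzero)=e_\minzero(X,W)$, and by Assumption~\ref{assump: negative-control} this equals $e_\minzero(X)$, so the weights collapse to $e_\minzero(X)$ as in the displayed influence function. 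I would then introduce an intermediate value $\tilde\delta$ between $\hat\delta$ and $\delta^\ast$ and apply a first-order mean-value expansion $Q^\dr_\mintwo(X,W,z;\hat\delta)=Q^\dr_\mintwo(X,W,z;\delta^\ast)+Q^\dr_{\mintwo\delta}(X,W,z;\tilde\delta)(\hat\delta-\delta^\ast)$, where $Q^\dr_{\mintwo\delta}\triangleq\partial Q^\dr_\mintwo/\partial\delta$. Substituting into the augmentation terms splits $\sqrt n\{\tau^\dr_\mintwo(\alpha_\minzero,\gamma_\minzero,\hat\delta)-\tau\}$ into a leading empirical average of $\varphi^\dr_\mintwo(Y,X,W,Z;\alpha_\minzero,\gamma_\minzero,\delta^\ast)$ plus two remainder terms carrying the factor $\hat\delta-\delta^\ast$.

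The key step is to show these remainders are $o_p(1)$, so that the sampling variation of $\hat\delta$ drops out of the influence function. Writing a typical remainder as $\sqrt n(\hat\delta-\delta^\ast)\cdot\frac1n\sum_i \frac{Z_i-e_\minzero(X_i)}{e_\minzero(X_i)}Q^\dr_{\mintwo\delta}(X_i,W_i,1;\tilde\delta)$ (together with its control-arm analogue with denominator $1-e_\minzero(X_i)$), I would use $\sqrt n(\hat\delta-\delta^\ast)=O_p(1)$ and the law of large numbers, reducing the claim to $\E\{\frac{Z-e_\minzero(X)}{e_\minzero(X)}Q^\dr_{\mintwo\delta}(X,W,1;\delta^\ast)\}=0$ and the matching identity for the $0$-arm term.

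The main obstacle, and the only genuinely new ingredient relative to Lemma~\ref{lem: if}, is verifying this vanishing expectation, since the outcome factor $Q^\dr_{\mintwo\delta}$ now depends on $W$ while the propensity weight depends only on $X$; conditioning on $X$ alone no longer suffices. I would instead condition on $(X,W)$ and pull $Q^\dr_{\mintwo\delta}(X,W,1;\delta^\ast)$ out of the inner conditional expectation, leaving $\E\{Z-e_\minzero(X)\mid X,W\}$. This is exactly where Assumption~\ref{assump: negative-control} enters: $Z\indep W\mid X$ gives $e_\minzero(X,W)=\pr(Z=1\mid X,W)=\pr(Z=1\mid X)=e_\minzero(X)$, hence $\E(Z\mid X,W)=e_\minzero(X)$ and the conditional expectation is zero; the $0$-arm remainder vanishes by the identical computation. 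With both remainders established as $o_p(1)$, applying the central limit theorem to the i.i.d.\ leading average identifies $\varphi^\dr_\mintwo(Y,X,W,Z;\alpha_\minzero,\gamma_\minzero,\delta^\ast)$ as the influence function, which is precisely the stated expression; in particular its asymptotic behaviour depends on $\hat\delta$ only through its probability limit $\delta^\ast$.
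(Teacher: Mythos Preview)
Your proposal is correct and is precisely the natural extension of the proof of Lemma~\ref{lem: if} to the adjustment set $(X,W)$; the paper itself states this result as a bare fact without a separate proof, implicitly regarding it as immediate from Lemma~\ref{lem: if}. Your identification of the single new ingredient---conditioning on $(X,W)$ rather than $X$ and invoking Assumption~\ref{assump: negative-control} to obtain $\E(Z\mid X,W)=e_\minzero(X)$---is exactly right and is all that distinguishes this case from the original lemma.
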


 \subsection{The proof of \eqref{est:var-comp2}}
 \label{eq:verf-var-comp}
 \begin{lemma}
 \label{lem:proof-var-comp}
 Let     \begin{gather*} 
\mathcal{H}_{\mintwo,0}=\begin{pmatrix}
    \mathcal{H}_{{\alpha}  {\alpha},  \minzero} & \mathcal{H}_{{\alpha} \gamma, \minzero} \\
\mathcal{H}_{\gamma {\alpha},  \minzero} & \mathcal{H}_{\gamma\gamma,0}
\end{pmatrix},  
\end{gather*} 
we assume the matrix $\mathcal{H}_{\mintwo,0}$ is invertible. 
Given Assumptions \ref{assump: positivity} and \ref{assump: negative-control},  when the propensity scores $e_\minone(X;\alpha)$ and $e_\mintwo(X,W;\alpha,\gamma)$  are correctly specified,   we have
$$     \Sigma_{\mintwo} ^{\dre}(  {\delta} )   =\Sigma_{\mintwo}^{\drt}  (   {\delta} ) -{\mathcal{D}}_{\mintwoone}^\dr( {\delta} )   \mathcal{H}_{{\alpha}  {\alpha},  \minzero}^{-1}  {\mathcal{D}}_{\mintwoone}^{\T}(  {\delta} ) -\mathcal{M}^\dr({\delta} ) ,$$
where $
\mathcal{M}^\dr (\delta )= \{ {\mathcal{D}}_{\mintwotwo}^\dr(\delta )-{\mathcal{D}}_{\mintwoone}^\dr(\delta )   \mathcal{H}_{{\alpha}  {\alpha}, \minzero}^{-1} \mathcal{H}_{{\alpha} \gamma, \minzero}\}\mathcal{F}^{-1}\{ {\mathcal{D}}_{\mintwotwo}^\dr(\delta )-{\mathcal{D}}_{\mintwoone}^\dr(\delta )   \mathcal{H}_{{\alpha}  {\alpha},  \minzero}^{-1} \mathcal{H}_{{\alpha} \gamma, \minzero}\}^\T . $\end{lemma}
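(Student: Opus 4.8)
The plan is to mirror the derivation behind \eqref{eq:ipw-var-ps}, but now accounting for the \emph{joint} estimation of $(\alpha,\gamma)$ in the propensity score $e_\mintwo(X,W;\alpha,\gamma)$. First I would obtain the influence function of $\tau_\mintwo^\dr(\hat\alpha,\hat\gamma,\hat\delta)$ under a correctly specified propensity score by repeating the expansion used in the proof of Lemma \ref{lem:if-eps}, except that the single score $S_\alpha$ is replaced by the stacked score with components $S_\alpha$ and $S_\gamma$, and $\mathcal{H}_{\alpha\alpha,\minzero}$ by the joint information matrix $\mathcal{H}_{\mintwo,0}$. This yields the two-parameter analogue of \eqref{if:dr-est-3},
\begin{align*}
\tilde\varphi_\mintwo^\dr(Y,X,W,Z;\alpha_\minzero,\gamma_\minzero,\delta)
= \varphi_\mintwo^\dr(Y,X,W,Z;\alpha_\minzero,\gamma_\minzero,\delta)
- \left(\mathcal{D}_{\mintwoone}^\dr(\delta),\ \mathcal{D}_{\mintwotwo}^\dr(\delta)\right)
\mathcal{H}_{\mintwo,0}^{-1}
\begin{pmatrix} S_\alpha(X,W,Z;\alpha_\minzero,\gamma_\minzero)\\ S_\gamma(X,W,Z;\alpha_\minzero,\gamma_\minzero)\end{pmatrix}.
\end{align*}

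Next I would compute $\Sigma_\mintwo^\dre(\delta)=\var\{\tilde\varphi_\mintwo^\dr\}$ by squaring and taking expectations, exactly as in the proof of \eqref{eq:ipw-var-ps}. The two Facts preceding this lemma identify $\mathcal{D}_{\mintwoone}^\dr(\delta)=\E\{\varphi_\mintwo^\dr S_\alpha^\T\}$ and $\mathcal{D}_{\mintwotwo}^\dr(\delta)=\E\{\varphi_\mintwo^\dr S_\gamma^\T\}$, while the joint score has second moment $\mathcal{H}_{\mintwo,0}$. Hence the cross term between $\varphi_\mintwo^\dr$ and the score correction contributes $-2(\mathcal{D}_{\mintwoone}^\dr,\mathcal{D}_{\mintwotwo}^\dr)\mathcal{H}_{\mintwo,0}^{-1}(\mathcal{D}_{\mintwoone}^\dr,\mathcal{D}_{\mintwotwo}^\dr)^\T$ and the pure score term contributes $+(\mathcal{D}_{\mintwoone}^\dr,\mathcal{D}_{\mintwotwo}^\dr)\mathcal{H}_{\mintwo,0}^{-1}(\mathcal{D}_{\mintwoone}^\dr,\mathcal{D}_{\mintwotwo}^\dr)^\T$, leaving the compact identity
\begin{align*}
\Sigma_\mintwo^\dre(\delta)
= \Sigma_\mintwo^\drt(\delta)
- \left(\mathcal{D}_{\mintwoone}^\dr(\delta),\ \mathcal{D}_{\mintwotwo}^\dr(\delta)\right)
\mathcal{H}_{\mintwo,0}^{-1}
\left(\mathcal{D}_{\mintwoone}^\dr(\delta),\ \mathcal{D}_{\mintwotwo}^\dr(\delta)\right)^\T .
\end{align*}

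The final and most delicate step is purely linear-algebraic: I would expand $\mathcal{H}_{\mintwo,0}^{-1}$ into block form through the Schur complement of its $(\alpha,\alpha)$ block and complete the square, so that the single quadratic form above splits into the $X$-only contribution $\mathcal{D}_{\mintwoone}^\dr(\delta)\,\mathcal{H}_{\alpha\alpha,\minzero}^{-1}\{\mathcal{D}_{\mintwoone}^\dr(\delta)\}^\T$ plus the residual $\mathcal{M}^\dr(\delta)$. The off-diagonal blocks of $\mathcal{H}_{\mintwo,0}^{-1}$ supply precisely the cross products $\mathcal{D}_{\mintwoone}^\dr\,\mathcal{H}_{\alpha\alpha,\minzero}^{-1}\mathcal{H}_{\alpha\gamma,\minzero}$ needed to recombine $\mathcal{D}_{\mintwotwo}^\dr$ and $\mathcal{D}_{\mintwoone}^\dr$ into the centered quantity $\mathcal{D}_{\mintwotwo}^\dr(\delta)-\mathcal{D}_{\mintwoone}^\dr(\delta)\mathcal{H}_{\alpha\alpha,\minzero}^{-1}\mathcal{H}_{\alpha\gamma,\minzero}$ that defines $\mathcal{M}^\dr(\delta)$, with the Schur complement $\mathcal{F}$ appearing as the inner matrix. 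I expect the bookkeeping of these block products, together with verifying that the leftover diagonal contribution is exactly $\mathcal{D}_{\mintwoone}^\dr\mathcal{H}_{\alpha\alpha,\minzero}^{-1}\{\mathcal{D}_{\mintwoone}^\dr\}^\T$, to be the main obstacle; once the square is completed, the claimed identity \eqref{est:var-comp2} is immediate, and the positive definiteness of $\mathcal{F}^{-1}$ noted in the statement then yields $\mathcal{M}^\dr(\delta)\ge 0$, which is what drives the subsequent variance comparison in Theorem \ref{thm:com-dre}.
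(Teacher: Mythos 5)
Your proposal is correct and follows essentially the same route as the paper: the paper likewise invokes the stacked-score analogue of \eqref{eq:ipw-var-ps} to get $\Sigma^{\dre}_{\mintwo}(\delta)=\Sigma^{\drt}_{\mintwo}(\delta)-\mathcal{D}^{\dr}_{\mintwo}(\delta)\,\mathcal{H}_{\mintwo,0}^{-1}\,\mathcal{D}^{\T}_{\mintwo}(\delta)$ and then expands $\mathcal{H}_{\mintwo,0}^{-1}$ in block form via the Schur complement $\mathcal{F}$ (using $\mathcal{F}_1^{-1}=\mathcal{H}_{\alpha\alpha,\minzero}^{-1}+\mathcal{H}_{\alpha\alpha,\minzero}^{-1}\mathcal{H}_{\alpha\gamma,\minzero}\mathcal{F}^{-1}\mathcal{H}_{\gamma\alpha,\minzero}\mathcal{H}_{\alpha\alpha,\minzero}^{-1}$) to complete the square into $\mathcal{D}^{\dr}_{\mintwoone}(\delta)\mathcal{H}_{\alpha\alpha,\minzero}^{-1}\mathcal{D}^{\T}_{\mintwoone}(\delta)+\mathcal{M}^{\dr}(\delta)$, exactly as you outline. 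If anything, you are slightly more complete than the paper, since you rederive the joint influence function and the intermediate variance identity that the paper simply cites from \eqref{eq:ipw-var-ps}.
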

\begin{proof}
% By direct calculations, we have that
%  \begin{gather*}
%     {\mathcal{D}}_{\mintwoone}^\dr( {\delta} )  = -
% \E\left\{\partial \varphi^\dr_{\mintwo} (Y, { X},{ W},Z ; {\alpha} _{\minzero}, \gamma_{\minzero}, {\delta} ) / \partial {\alpha}^\T\right\}  =\E\{\varphi^\dr_{\mintwo} (Y, { X},{ W},Z ; {\alpha} _{\minzero}, \gamma_{\minzero}, {\delta} )S_{\alpha}^\T({ X},{ W},Z;{\alpha} _{\minzero}, \gamma_{\minzero})\},\\
%     {\mathcal{D}}_{\mintwotwo}^\dr( {\delta} )  = -
% \E\left\{\partial\varphi^\dr_{\mintwo} (Y, { X},{ W},Z ; {\alpha} _{\minzero}, \gamma_{\minzero}, {\delta} ) / \partial {\alpha}^\T\right\}  =\E\{\varphi^\dr_{\mintwo} (Y, { X},{ W},Z ; {\alpha} _{\minzero}, \gamma_{\minzero}, {\delta} )S_{\gamma}^\T({ X},{ W},Z;{\alpha} _{\minzero}, \gamma_{\minzero})\}.
% \end{gather*}
According to \eqref{eq:ipw-var-ps} in the main text, the general form of  asymptotic variance $   \Sigma^{\dre}_\mintwo (\delta)$  can be expressed as follows:
\begin{align}
\label{eps-fps-expression}
       \Sigma^{\dre}_\mintwo (\delta)& =\Sigma^{\drt}  _\mintwo(\delta) -{\mathcal{D}}_{\mintwo}^\dr(\delta ) \mathcal{H}_{\mintwo,0}^{-1} {\mathcal{D}}_{\mintwo}^\T(\delta  ),
\end{align}
where   $  {\mathcal{D}}_{\mintwo}^\dr(\delta)  =\left\{{\mathcal{D}}_{\mintwoone}^\dr(\delta)  , {\mathcal{D}}_{\mintwotwo}^\dr(\delta)  \right\} $. Let $  \mathcal{F}  =  \mathcal{H}_{\gamma \gamma,0}-\mathcal{H}_{\gamma  {\alpha},  \minzero} \mathcal{H}_{{\alpha}  {\alpha},  \minzero}^{-1} \mathcal{H}_{{\alpha} \gamma, \minzero}   $ and $ \mathcal{F}_1  =\mathcal{H}_{{\alpha}  {\alpha},  \minzero}-\mathcal{H}_{{\alpha} \gamma, \minzero} \mathcal{H}_{\gamma \gamma,0}^{-1} \mathcal{H}_{\gamma  {\alpha},  \minzero}$. The explicit  expression for $\mathcal{H}_{\mintwo,0}^{-1}$ is given by:
    \begin{gather*}
\begin{aligned}
\mathcal{H}_{\mintwo,0}^{-1}& =\begin{pmatrix}
\mathcal{F}_1^{-1} & -\mathcal{H}_{{\alpha}{\alpha},  \minzero}^{-1} \mathcal{H}_{\gamma {\alpha},  \minzero} ^\T\mathcal{F}^\T \\
-\mathcal{F}\mathcal{H}_{\gamma {\alpha},  \minzero} \mathcal{H}_{{\alpha}{\alpha},  \minzero}^{-1}  &\mathcal{F}^{-1}
\end{pmatrix} .
\end{aligned}
\end{gather*} 
Moreover, through direct verification, we know that
    \begin{align*}      
        \mathcal{F}_1^{-1}&= (\mathcal{H}_{{\alpha}  {\alpha},  \minzero}-\mathcal{H}_{{\alpha} \gamma, \minzero} \mathcal{H}_{\gamma \gamma,0}^{-1} \mathcal{H}_{\gamma  {\alpha},  \minzero})^{-1} \\&=\mathcal{H}_{{\alpha}  {\alpha},  \minzero}^{-1}+\mathcal{H}_{{\alpha}  {\alpha},  \minzero}^{-1} \mathcal{H}_{{\alpha} \gamma, \minzero}(\mathcal{H}_{\gamma \gamma,0}-\mathcal{H}_{\gamma  {\alpha},  \minzero} \mathcal{H}_{{\alpha}  {\alpha},  \minzero}^{-1} \mathcal{H}_{{\alpha} \gamma, \minzero})^{-1} \mathcal{H}_{\gamma  {\alpha},  \minzero} \mathcal{H}_{{\alpha}  {\alpha},  \minzero}^{-1} \\&=\mathcal{H}_{{\alpha}  {\alpha},  \minzero}^{-1}+\mathcal{H}_{{\alpha}  {\alpha},  \minzero}^{-1} \mathcal{H}_{{\alpha} \gamma, \minzero}\mathcal{F}^{-1} \mathcal{H}_{\gamma  {\alpha},  \minzero} \mathcal{H}_{{\alpha}  {\alpha},  \minzero}^{-1} .
    \end{align*}  
We now calculate the asymptotic variance $   \Sigma_\mintwo^{\dre} (\delta)$ in \eqref{eps-fps-expression}, 
\begin{align*} 
   {\mathcal{D}}_{\mintwo}^\dr&(\delta ) \mathcal{H}_{\mintwo,0}^{-1} {\mathcal{D}}_{\mintwo}^\T(\delta  )\\
& =\left\{ {\mathcal{D}}_{\mintwoone}^\dr(\delta ),~ {\mathcal{D}}_{\mintwotwo}^\dr(\delta )\right\}\begin{bmatrix}
    \mathcal{F}_1^{-1}{\mathcal{D}}_{\mintwoone}^\T(\delta )-\mathcal{H}_{{\alpha}  {\alpha},  \minzero}^{-1} \mathcal{H}_{\gamma {\alpha},  \minzero}^{\T} \mathcal{F}^{-1} {\mathcal{D}}_{\mintwotwo}^\T(\delta ) \\
-\mathcal{F}^{-1} \mathcal{H}_{\gamma {\alpha},  \minzero} \mathcal{H}_{ {\alpha} {\alpha}}^{-1}{\mathcal{D}}_{\mintwoone}^\T(\delta )+\mathcal{F}^{-1} {\mathcal{D}}_{\mintwotwo}^\T(\delta )
\end{bmatrix} \\&= {\mathcal{D}}_{\mintwoone}^\dr(\delta )     \mathcal{F}_1^{-1}{\mathcal{D}}_{\mintwoone}^\T(\delta )-{\mathcal{D}}_{\mintwoone}^\dr(\delta )   \mathcal{H}_{{\alpha}  {\alpha},  \minzero}^{-1} \mathcal{H}_{\gamma {\alpha},  \minzero}^{\T} \mathcal{F}^{-1} {\mathcal{D}}_{\mintwotwo}^\T(\delta ) \\&~~~- {\mathcal{D}}_{\mintwotwo}^\dr(\delta )\mathcal{F}^{-1} \mathcal{H}_{\gamma {\alpha},  \minzero} \mathcal{H}_{ {\alpha} {\alpha}}^{-1}{\mathcal{D}}_{\mintwoone}^\T(\delta )+ {\mathcal{D}}_{\mintwotwo}^\dr(\delta )\mathcal{F}^{-1} {\mathcal{D}}_{\mintwotwo}^\T(\delta )\\&= {\mathcal{D}}_{\mintwoone}^\dr(\delta )   \mathcal{H}_{{\alpha}  {\alpha},  \minzero}^{-1} {\mathcal{D}}_{\mintwoone}^\T(\delta )+ {\mathcal{D}}_{\mintwoone}^\dr(\delta )   \mathcal{H}_{{\alpha}  {\alpha},  \minzero}^{-1} \mathcal{H}_{{\alpha} \gamma, \minzero}\mathcal{F}^{-1} \mathcal{H}_{\gamma  {\alpha},  \minzero} \mathcal{H}_{{\alpha}  {\alpha},  \minzero}^{-1}  {\mathcal{D}}_{\mintwoone}^\T(\delta )\\&~~~-{\mathcal{D}}_{\mintwoone}^\dr(\delta )   \mathcal{H}_{{\alpha}  {\alpha},  \minzero}^{-1} \mathcal{H}_{\gamma {\alpha},  \minzero}^{\T} \mathcal{F}^{-1} {\mathcal{D}}_{\mintwotwo}^\T(\delta ) - {\mathcal{D}}_{\mintwotwo}^\dr(\delta )\mathcal{F}^{-1} \mathcal{H}_{\gamma {\alpha},  \minzero} \mathcal{H}_{ {\alpha} {\alpha}}^{-1}{\mathcal{D}}_{\mintwoone}^\T(\delta )\\&~~~+ {\mathcal{D}}_{\mintwotwo}^\dr(\delta )\mathcal{F}^{-1} {\mathcal{D}}_{\mintwotwo}^\T(\delta )\\&= {\mathcal{D}}_{\mintwoone}^\dr(\delta )   \mathcal{H}_{{\alpha}  {\alpha},  \minzero}^{-1} {\mathcal{D}}_{\mintwoone}^\T(\delta )+ {\mathcal{D}}_{\mintwoone}^\dr(\delta )   \mathcal{H}_{{\alpha}  {\alpha},  \minzero}^{-1} \mathcal{H}_{{\alpha} \gamma, \minzero}\mathcal{F}^{-1} \mathcal{H}_{\gamma  {\alpha},  \minzero} \mathcal{H}_{{\alpha}  {\alpha},  \minzero}^{-1}  {\mathcal{D}}_{\mintwoone}^\T(\delta )\\&~~~- {\mathcal{D}}_{\mintwotwo}^\dr(\delta )\mathcal{F}^{-1} \mathcal{H}_{\gamma {\alpha},  \minzero} \mathcal{H}_{ {\alpha} {\alpha}}^{-1}{\mathcal{D}}_{\mintwoone}^\T(\delta ) + {\mathcal{D}}_{\mintwotwo}^\dr(\delta ) \mathcal{F}^{-1}  {\mathcal{D}}_{\mintwotwo}^\T(\delta )\\&~~~-{\mathcal{D}}_{\mintwoone}^\dr(\delta )   \mathcal{H}_{{\alpha}  {\alpha},  \minzero}^{-1} \mathcal{H}_{\gamma {\alpha},  \minzero}^{\T} \mathcal{F}^{-1} {\mathcal{D}}_{\mintwotwo}^\T(\delta )\\&= {\mathcal{D}}_{\mintwoone}^\dr(\delta )   \mathcal{H}_{{\alpha}  {\alpha},  \minzero}^{-1} {\mathcal{D}}_{\mintwoone}^\T(\delta ) -\{ {\mathcal{D}}_{\mintwotwo}^\dr(\delta )-{\mathcal{D}}_{\mintwoone}^\dr(\delta )   \mathcal{H}_{{\alpha}  {\alpha},  \minzero}^{-1} \mathcal{H}_{{\alpha} \gamma, \minzero}\}\mathcal{F}^{-1} \mathcal{H}_{\gamma {\alpha},  \minzero} \mathcal{H}_{ {\alpha} {\alpha}}^{-1}{\mathcal{D}}_{\mintwoone}^\T(\delta ) \\&~~~+\{ {\mathcal{D}}_{\mintwotwo}^\dr(\delta )-{\mathcal{D}}_{\mintwoone}^\dr(\delta )   \mathcal{H}_{{\alpha}  {\alpha},  \minzero}^{-1} \mathcal{H}_{\gamma {\alpha},  \minzero}^{\T} \}\mathcal{F}^{-1} {\mathcal{D}}_{\mintwotwo}^\T(\delta )\\&= {\mathcal{D}}_{\mintwoone}^\dr(\delta )   \mathcal{H}_{{\alpha}  {\alpha},  \minzero}^{-1} {\mathcal{D}}_{\mintwoone}^\T(\delta ) -\{ {\mathcal{D}}_{\mintwotwo}^\dr(\delta ) -{\mathcal{D}}_{\mintwoone}^\dr(\delta )   \mathcal{H}_{{\alpha}  {\alpha},  \minzero}^{-1} \mathcal{H}_{{\alpha} \gamma, \minzero}\}\mathcal{F}^{-1}\left\{ \mathcal{H}_{\gamma {\alpha},  \minzero} \mathcal{H}_{ {\alpha} {\alpha}}^{-1}{\mathcal{D}}_{\mintwoone}^\T(\delta )-{\mathcal{D}}_{\mintwotwo}^\T(\delta )\right\} .
\end{align*}
  We assume that the matrix $\mathcal{H}_{\mintwo,0}$ is positive definite. The inverse of $\mathcal{H}_{\mintwo,0}$ is also positive definite; therefore, the submatrix of $\mathcal{F}^{-1}$ is also a positive definite matrix.  We  have, $$\mathcal{M}(\delta)=\{ {\mathcal{D}}_{\mintwotwo}^\dr(\delta )-{\mathcal{D}}_{\mintwoone}^\dr(\delta )   \mathcal{H}_{{\alpha}  {\alpha}, \minzero}^{-1} \mathcal{H}_{{\alpha} \gamma, \minzero}\}\mathcal{F}^{-1}\{ {\mathcal{D}}_{\mintwotwo}^\dr(\delta )-{\mathcal{D}}_{\mintwoone}^\dr(\delta )   \mathcal{H}_{{\alpha}  {\alpha},  \minzero}^{-1} \mathcal{H}_{{\alpha} \gamma, \minzero}\}^\T \geq 0. $$ 
% where the third equality holds due to 
% \begin{align*}
%      {\mathcal{D}}_{\mintwoone}^\dr&({\delta})     \mathcal{F}_1^{-1}{\mathcal{D}}_{\mintwoone}^\T({\delta})\\&= {\mathcal{D}}_{\mintwoone}^\dr({\delta})     \{\mathcal{H}_{{\alpha}  {\alpha},  \minzero}^{-1}+\mathcal{H}_{{\alpha}  {\alpha},  \minzero}^{-1} \mathcal{H}_{{\alpha} \gamma, \minzero}(\mathcal{H}_{\gamma \gamma,0}-\mathcal{H}_{\gamma  {\alpha},  \minzero} \mathcal{H}_{{\alpha}  {\alpha},  \minzero}^{-1} \mathcal{H}_{{\alpha} \gamma, \minzero})^{-1} \mathcal{H}_{\gamma  {\alpha},  \minzero} \mathcal{H}_{{\alpha}  {\alpha},  \minzero}^{-1} \} {\mathcal{D}}_{\mintwoone}^\T({\delta})\\&= {\mathcal{D}}_{\mintwoone}^\dr({\delta})   (\mathcal{H}_{{\alpha}  {\alpha},  \minzero}^{-1}+\mathcal{H}_{{\alpha}  {\alpha},  \minzero}^{-1} \mathcal{H}_{{\alpha} \gamma, \minzero}\mathcal{F}^{-1} \mathcal{H}_{\gamma  {\alpha},  \minzero} \mathcal{H}_{{\alpha}  {\alpha},  \minzero}^{-1}) {\mathcal{D}}_{\mintwoone}^\T({\delta})\\&= {\mathcal{D}}_{\mintwoone}^\dr({\delta})   \mathcal{H}_{{\alpha}  {\alpha},  \minzero}^{-1} {\mathcal{D}}_{\mintwoone}^\T({\delta})+ {\mathcal{D}}_{\mintwoone}^\dr({\delta})   \mathcal{H}_{{\alpha}  {\alpha},  \minzero}^{-1} \mathcal{H}_{{\alpha} \gamma, \minzero}\mathcal{F}^{-1} \mathcal{H}_{\gamma  {\alpha},  \minzero} \mathcal{H}_{{\alpha}  {\alpha},  \minzero}^{-1}  {\mathcal{D}}_{\mintwoone}^\T({\delta}).
% \end{align*}
\end{proof}
\subsection{The formal proof of Theorem  \ref{thm:com-dre}}
\begin{proof} 
\label{proof:dr-var-comp-eps}
In this proof, we only  show that $  \Sigma_{\mintwo} ^{\dre}( {\delta} ^\dre)\leq   \Sigma_{\minone} ^{\dre}( {\beta}^\dre ) .$ 
The remaining inequalities can be derived by   Theorems \ref{thm:com-drt} and \eqref{eq:ipw-var-ps}.

(i) We first verify the inequality under Assumption \ref{cond:two-res-aipw}(i). According to Lemma \ref{lem:proof-var-comp},  the asymptotic variance based on the adjustment set $(X,W)$ can be expressed as follows: 
 \begin{align*}
       \Sigma^{\dre}_{\mintwo} (\delta)  &=\Sigma^{\drt}_{\mintwo}   (\delta) -{\mathcal{D}}_{\minone}^\dr(\delta ) \mathcal{H}_{\mintwo,0}^{-1} {\mathcal{D}}_{\minone}^\T(\delta  ) \\&=\Sigma^{\drt}_{\mintwo}   (\delta) -{\mathcal{D}}_{\mintwoone}^\dr(\delta )   \mathcal{H}_{{\alpha}  {\alpha},  \minzero}^{-1} {\mathcal{D}}_{\mintwoone}^\T(\delta ) -\mathcal{M}^\dr({\delta} ), 
\end{align*}
where $ \mathcal{M}^\dr (\delta ) \geq 0 $ for any $\delta$.   Let $\tilde\delta$   represent an extended vector as $\tilde{\delta} = \left\{({\beta} ^\dre)^\T, 0_{1\times m} \right\}^\T$ to accommodate the dimension of $\delta$, where $m$ is the dimension of the parameters involving predcitive covariates $W$ that need to be expanded.  By definition, we know two estimators $\tau_\minone(\alpha_\minzero,\beta^\dre)$  and $\tau_\mintwo(\alpha_\minzero,\gamma_\minzero,\tilde\delta)$  are   equivalent,  indicating that they should have the same asymptotic variance, i.e., $\Sigma_{\mintwo}^{\drt} (\tilde{\delta}) =\Sigma_{\minone}^{\drt} ( {\beta} ^\dre)$, where $\Sigma_{\minone}^{\drt} (\beta^\dre)$ is the asymptotic variance of $\tau_\minone(\alpha_\minzero,\beta^\dre)$, and $\Sigma_{\mintwo}^{\drt} (\tilde{\delta}) $ is the asymptotic variance of $\tau_\mintwo(\alpha_\minzero,\gamma_\minzero,\tilde{\delta})$.
Additionally, by direct verification of the definitions, we also have ${\mathcal{D}}_{ \minone}^\dr( {\beta} ^\dre)  ={\mathcal{D}}_{\mintwoone}^\dr(\tilde{\delta}) $. 
We then have, 
    \begin{align*}
     \Sigma^{\dre}_{\mintwo} (\tilde\delta )  &=\Sigma^{\drt} _{\mintwo}  (\tilde\delta) -{\mathcal{D}}_{\mintwoone}^\dr(\tilde\delta)   \mathcal{H}_{{\alpha}  {\alpha},  \minzero}^{-1}  {\mathcal{D}}_{\mintwoone}^{\T}(\tilde\delta) -\mathcal{M}^\dr({\tilde\delta} )\\&\leq  \Sigma^{\drt} _{\mintwo}  (\tilde\delta) -{\mathcal{D}}_{\mintwoone}^\dr(\tilde\delta)   \mathcal{H}_{{\alpha}  {\alpha},  \minzero}^{-1}  {\mathcal{D}}_{\mintwoone}^{\T}(\tilde\delta) \\& =\Sigma^{\drt} _{\minone}  ( \beta^\dre) -{\mathcal{D}}_{\minone}^\dr(\beta^\dre)   \mathcal{H}_{{\alpha}  {\alpha},  \minzero}^{-1}  {\mathcal{D}}_{\minone}^{\T}(\beta^\dre) \\&=  \Sigma^{\dre} _{\minone}  (\beta^\dre) ,
\end{align*} where the  final equality holds due to \eqref{eq:ipw-var-ps}. 
Finally, by the definition of ${\delta} ^\dre$, we have  $$
   \Sigma^{\dre} _{\minone} ({\beta} ^\dre) \geq   \Sigma^{\dre} _{\mintwo} ( \tilde\delta )\geq   \Sigma^{\dre}  _{\mintwo}( \delta^\dre ). $$

(ii) Under Assumption \ref{cond:two-res-aipw}(ii), we have $\delta^\drt=\delta^\dre=\delta_0$, where $\delta_0$ is the true value of the outcome model $Q_{\mintwo} (X,W,Z;\delta)$. According  to \eqref{eq:ipw-var-ps}, the asymptotic variances satisfy $\Sigma^\drt_{\mintwo} (\delta^\drt)=\Sigma^\dre_{\mintwo} ( \delta^\dre)$. The remaining proof follows the claims presented in the second part of Section \ref{sec:var-tps}.
\end{proof}

 \end{document}